\renewcommand{\vec}[1]{\bm{#1}}
\newcommand{\mat}[1]{\boldsymbol{\mathbf{#1}}}
\newcommand{\support}[1]{\operatorname{supp}(#1)}
\newcommand{\diag}{\operatorname{diag}}
\newcommand{\trace}{\operatorname{Tr}}
\newcommand{\gaussdist}{\mathcal{N}}
\newcommand{\KL}{D_{\text{KL}}}
\newcommand{\prob}{\mathbb{P}}
\newcommand{\expval}{\mathbb{E}}
\newcommand{\const}{\textit{const}}
\newcommand{\set}[1]{\mathsf{#1}}
\newcommand{\cpm}{\complement}
\newcommand{\argmin}{\mathop{\mathrm{arg\,min}}}
\newcommand{\argmax}{\mathop{\mathrm{arg\,max}}}
\newcommand{\minimize}{\mathop{\mathrm{minimize}}}
\newcommand{\subjectto}{\mathop{\text{subject to}}}
\newtheorem{theorem}{Theorem}
\newtheorem{proposition}{Proposition}
\newtheorem{lemma}{Lemma}
\newtheorem{claim}{Claim}
\theoremstyle{definition}
\newtheorem{definition}{Definition}
\newtheorem{remark}{Remark}
\pgfplotsset{compat=1.9}
\pgfplotsset{legend cell align={left}}
\pgfplotsset{legend style={font=\normalsize}}
\pgfplotsset{tick label style={font=\small}}
\pgfplotsset{every axis plot/.append style={line width=0.8pt}}
\pgfplotsset{cycle list/Set1-5}
\newcommand{\thickhline}{\specialrule{0.12em}{-1pt}{0em}}
\newcommand{\semithickhline}{\specialrule{0.08em}{0em}{0em}}
\newcolumntype{C}[1]{ >{\centering\arraybackslash}m{#1} }
\begin{document}

\begin{acronym}

\acro{IID}{independent and identically distributed}
\acro{MRI}{magnetic resonance imaging}
\acro{MIT}{magnetic induction tomography}
\acro{EIT}{electrical impedance tomography}
\acro{RF}{radio frequency}
\acro{SBL}{sparse Bayesian learning}
\acro{SSP}{subset sum problem}
\acro{EM}{expectation-maximization}
\acro{MAP}{maximum a posteriori}
\acro{KL}{Kullback-Liebler}
\acro{ADMM}{alternating direction method of multipliers}
\acro{MFD}{magnetic flux density}
\acro{SOVP}{second order vector potential}
\acro{SNR}{signal-to-noise ratio}
\acro{IoU}{intersection over union}
\acro{RIP}{restricted isometry property}
\acro{BRIP}{bounded restricted isometry property}

\end{acronym}

\begin{frontmatter}

\journal{Journal of Computational Physics}
\title{Extension of compressive sampling to binary vector recovery for model-based defect imaging}

\author{Wei-Chen Li}
\author{Chun-Yeon Lin\corref{cor1}}
\cortext[cor1]{Corresponding author.\ead{chunyeonlin@ntu.edu.tw}}

\affiliation{organization={Department of Mechanical Engineering, National Taiwan University},
             city={Taipei},
             postcode={106},
             country={Taiwan}}

\begin{abstract}
Common imaging techniques for detecting structural defects typically require sampling at more than twice the spatial frequency to achieve a target resolution. This study introduces a novel framework for imaging structural defects using significantly fewer samples. In this framework, defects are modeled as regions where physical properties shift from their nominal values to resemble those of air, and a linear approximation is formulated to relate these binary shifts in physical properties with corresponding changes in measurements.
Recovering a binary vector from linear measurements is generally an NP-hard problem. To address this challenge, this study proposes two algorithmic approaches. The first approach relaxes the binary constraint, using convex optimization to find a solution. The second approach incorporates a binary-inducing prior and employs approximate Bayesian inference to estimate the posterior probability of the binary vector given the measurements.
Both algorithmic approaches demonstrate better performance compared to existing compressive sampling methods for binary vector recovery. The framework's effectiveness is illustrated through examples of eddy current sensing to image defects in metal structures.
\end{abstract}

\begin{keyword}
Binary vector recovery \sep Compressive sampling \sep Eddy current sensing
\end{keyword}

% \begin{highlights}
% \item Prove that recovering a binary vector from an underdetermined system of linear measurements is an NP-hard problem.
% \item Relax the binary constraint to its convex hull, allowing the problem to be solved using convex optimization algorithms. Additionally, prove that the convex relaxation is tight when the vector is sufficiently sparse and the measurement matrix satisfies a bounded restricted isometry property.
% \item Construct a Bayesian network to represent the linear measurement of a binary vector, and employ mean field approximate inference to estimate the posterior probability of the vector given the measurements.
% \item Construct a cluster graph based on the Bayesian network, and apply approximate message passing to estimate the posterior probability of the vector given the measurements.
% \item Demonstrate the applicability of binary vector recovery to eddy current sensing for model-based defect imaging.
% \end{highlights}

\end{frontmatter}

\section{Introduction}

Some common imaging problems aiming to reconstruct the physical properties of a specimen in 3D space include \ac{MRI}, \ac{MIT}, and \ac{EIT}. In \ac{MRI}, hydrogen nuclei align with an external magnetic field. When exposed to \ac{RF} pulses, these nuclei absorb the \ac{RF} energy, causing their spins to become excited and temporarily disoriented from the magnetic field. As they relax back to their original alignment, they release energy in the form of \ac{RF} waves, which are then detected by antennas. In \ac{MIT}, a time-varying magnetic field induces eddy currents in the specimen. These eddy currents create secondary magnetic fields, which induce an electromotive force in the sensing coils. In \ac{EIT}, currents are injected into the specimen through electrodes placed on its surface, and the resulting voltages are measured at the boundary using the same or different electrodes. Despite their drastic difference in sensing modalities, these three imaging problems can all be formulated as linear inverse problems \cite{RN1159, ma2017, RN119}, represented by the equation:
\begin{equation*}
    \vec{y} = \mat{\Phi} \vec{x} + \vec{n} ,
\end{equation*}
where $\vec{y}$ represents the measurements, $\mat{\Phi}$ the measurement matrix, $\vec{x}$ the physical properties to be reconstructed, and $\vec{n}$ the measurement noise. Additionally, the vector $\vec{x}$ is typically sparse. \ac{MRI} images, for example, are sparse in a wavelet basis, whereas in \ac{MIT} and \ac{EIT}, the regions with electrical conductivity anomalies are small compared to the overall specimen size. To recover sparse vectors from underdetermined linear systems, compressive sampling algorithms are often employed.

\subsection{Compressive sampling}
Compressive sampling or compressed sensing, as coined by \cite{RN271, RN276}, enables the recovery of signals using significantly fewer samples than dictated by the Nyquist sampling theorem. Comprehensive surveys of compressive sampling algorithms can be found in \cite{RN1226, RN1217}.
Among these algorithms, two primary classes are frequently employed.

The first class of methods, known as convex relaxation or $\ell_1$ minimization, relaxes the $\ell_0$ minimization problem
\begin{equation*}
    \begin{array}{ll}
        \minimize  & \|\vec{x}\|_0  \\
        \subjectto & \| \mat{\Phi} \vec{x} - \vec{y} \|_2 \leq \epsilon
    \end{array}
\end{equation*}
as the $\ell_1$ minimization problem
\begin{equation*}
    \begin{array}{ll}
        \minimize  & \|\vec{x}\|_1  \\
        \subjectto & \| \mat{\Phi} \vec{x} - \vec{y} \|_2 \leq \epsilon .
    \end{array}
\end{equation*}
The $\ell_0$ norm counts the number of nonzero entries, but optimizing it is NP-hard \cite{natarajan95}. In contrast, the $\ell_1$ norm serves as the convex envelope of the $\ell_0$ norm \cite{RN1099}, allowing the $\ell_1$ minimization problem to be efficiently solved using convex optimization algorithms. Theoretical results ensure exact recovery under specific conditions \cite{RN227, RN230, RN1200}. Notably, if the vector is sufficiently sparse and the measurement matrix satisfies the \ac{RIP}--meaning the matrix columns are approximately mutually orthogonal--$\ell_1$ minimization can yield exact recovery.

The second class of methods is known as \ac{SBL} \cite{RN64, RN211}, which formulates the problem within a probabilistic framework. In \ac{SBL}, a sparse-inducing prior is assigned to the vector $\vec{x}$, and the posterior probability of the vector given the measurements is calculated. The posterior probability is expressed as:
\begin{equation*}
    p(\vec{x} | \vec{y}) = \frac{p(\vec{y} | \vec{x}) p(\vec{x})}{p(\vec{y})} ,
\end{equation*}
where $p(\vec{x})$ represents the sparse-inducing prior, $p(\vec{y} | \vec{x})$ denotes the measurement likelihood, and $p(\vec{y})$ is the marginal density of $\vec{y}$. Although \ac{SBL} do not provide any recovery guarantees, they empirically perform better than $\ell_1$ minimization \cite{RN1226}. Another advantage of \ac{SBL} is its ability to accommodate more complex structures through prior selection, such as spatial clustering \cite{RN33, RN10, RN127} or temporal correlation \cite{RN96}. For certain prior choices, the marginal density is tractable, allowing the use of the expectation-maximization algorithm for inference \cite{RN33, RN10, RN96}. In other cases, more advanced inference algorithms, such as variational inference \cite{RN127} or approximate message passing \cite{RN1440, RN1444}, are necessary.

\subsection{Extending to binary vector recovery}
The extension of compressive sampling to recover not only sparse but also binary vectors is motivated by certain imaging applications--particularly structural defect imaging--where physical properties take on binary values. For instance, in the inspection of metal structures, a region may exhibit either the nominal electrical conductivity of the metal or zero conductivity, indicating an air-filled crack or void.

While few studies have addressed binary vector recovery, most existing approaches rely on convex optimization methods \cite{RN1219, RN1157, RN1155}. Common formulations include:
\begin{equation*}
    \begin{array}{ll}
        \minimize  & \vec{0}^\top \vec{x}  \\
        \subjectto & \mat{\Phi} \vec{x} = \vec{y}  \\
                   & \vec{0} \preceq \vec{x} \preceq \vec{1} ,
    \end{array}
    \qquad
    \begin{array}{ll}
        \minimize  & \vec{1}^\top \vec{x}  \\
        \subjectto & \mat{\Phi} \vec{x} = \vec{y}  \\
                   & \vec{0} \preceq \vec{x} \preceq \vec{1} ,
    \end{array}
    \qquad
    \begin{array}{ll}
        \minimize  & \| \vec{x} - 0.5\cdot\vec{1} \|_\infty  \\
        \subjectto & \mat{\Phi} \vec{x} = \vec{y} .  \\
                   &
    \end{array}
\end{equation*}
These formulations, however, do not account for cases where the measurement $\vec{y}$ is noisy, nor do they explore solutions through Bayesian inference methods.

This study extends compressive sampling by focusing on binary vector recovery, with particular application to imaging challenges in structural defect detection. The main contributions are as follows:
\begin{itemize}[nosep, leftmargin=*]
    \item \textbf{Convex optimization formulation (\cref{sec:convex-optimization}).} This study establishes that the native binary-constrained optimization problem is NP-hard. To tackle this, the binary constraint is relaxed to its convex hull, enabling convex optimization techniques to approximate a solution. The tightness of the convex relaxation is proved.

    \item \textbf{Approximate Bayesian inference formulation (\cref{sec:bayesian-inference}).} The study demonstrates that inferring the posterior probability of binary variables within a Bayesian network is also NP-hard. To address this, two approximate inference algorithms are developed, utilizing mean field approximation and approximate message passing.

    \item \textbf{Application to eddy current sensing (\cref{sec:application}).} The study shows how defect imaging with eddy current sensing can be formulated as a binary vector recovery problem. Examples are provided to demonstrate its application to defect imaging in metal plates and pipes.
\end{itemize}

\section{Convex optimization for binary vector recovery}
\label{sec:convex-optimization}

The aim is to find a sparse and binary vector $\vec{x}$ that satisfies
\begin{equation}  \label{eq:single-measurement}
    \vec{y} = \mat{\Phi} \vec{x} + \vec{n},
\end{equation}
where $\vec{y} \in \mathbb{R}^M$ is the measured vector, $\mat{\Phi} \in \mathbb{R}^{M\times N}$ is the measurement matrix, and $\vec{n} \in \mathbb{R}^M$ is the \ac{IID} measurement noise. A natural optimization problem to solve is
\begin{equation}  \label{eq:opt-binary-constraint}
    \begin{array}{ll}
        \minimize  & \|\vec{x}\|_1  \\
        \subjectto & \| \mat{\Phi} \vec{x} - \vec{y} \|_2 \leq \epsilon  \\
                   & \vec{x} \in \{0,1\}^N .
    \end{array}
\end{equation}
However, this optimization problem is NP-hard. To show this, consider another NP-hard problem.
\begin{definition}  \label{def:subset-sum-problem}
    The \ac{SSP} asks whether, given integers $w_1, \dots, w_N$ and a target number $W$, there exists a subset of $\{ w_1, \dots, w_N \}$ that sums exactly to $W$.
\end{definition}
\noindent The \ac{SSP} is NP-complete \cite{RN1467}.
\begin{theorem}  \label{thm:opt-binary-constraint-NP-hard}
    The binary constrained $\ell_1$ minimization problem \eqref{eq:opt-binary-constraint} is NP-hard.
\end{theorem}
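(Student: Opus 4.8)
The plan is to establish NP-hardness by a polynomial-time reduction from the \ac{SSP} of \cref{def:subset-sum-problem}, which is known to be NP-complete. The central observation is that a subset of $\{w_1, \dots, w_N\}$ is naturally encoded by a binary indicator vector $\vec{x} \in \{0,1\}^N$, with $x_i = 1$ signifying that $w_i$ belongs to the subset, so that the subset sum $\sum_{i} w_i x_i$ is exactly a single linear functional of $\vec{x}$. This matches the structure of the linear constraint in \eqref{eq:opt-binary-constraint}.

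Given an arbitrary \ac{SSP} instance with integers $w_1, \dots, w_N$ and target $W$, I would construct an instance of \eqref{eq:opt-binary-constraint} as follows. Set $M = 1$, take the $1 \times N$ measurement matrix $\mat{\Phi} = [\, w_1 \ \cdots \ w_N \,]$, the measurement $\vec{y} = W$, and the noise tolerance $\epsilon = 0$. This construction is plainly computable in time polynomial in the size of the \ac{SSP} instance. With $\epsilon = 0$, the constraint $\| \mat{\Phi}\vec{x} - \vec{y} \|_2 \leq \epsilon$ collapses to the equality $\sum_{i=1}^N w_i x_i = W$, so the feasible set of \eqref{eq:opt-binary-constraint} is precisely the collection of binary vectors encoding subsets that sum to $W$.

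The reduction is then completed by observing that \eqref{eq:opt-binary-constraint} is feasible if and only if the \ac{SSP} instance admits a solution: any feasible $\vec{x}$ yields the subset $\{ w_i : x_i = 1 \}$ summing to $W$, and conversely any such subset gives a feasible point. Consequently, an algorithm that solves \eqref{eq:opt-binary-constraint}—in particular, one that returns a finite optimal value or else reports infeasibility—decides the \ac{SSP}. Since the \ac{SSP} is NP-complete, problem \eqref{eq:opt-binary-constraint} is NP-hard.

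The step I expect to require the most care is not the algebra but the formal framing: \eqref{eq:opt-binary-constraint} is an optimization problem, whereas NP-hardness is a statement about decision problems. The obstacle is to argue cleanly that the objective $\|\vec{x}\|_1$ plays no role in the reduction—because deciding mere feasibility of the constraints is already as hard as the \ac{SSP}—and hence that any solver for the optimization problem, which must at minimum detect infeasibility, inherits this hardness.
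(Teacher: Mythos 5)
Your reduction is exactly the one the paper uses: encode a \ac{SSP} instance via $\mat{\Phi} = [\,w_1\ \cdots\ w_N\,]$, $\vec{y} = [W]$, $\epsilon = 0$, so that feasibility of the binary-constrained problem coincides with the existence of a subset summing to $W$. Your explicit remark that the objective $\|\vec{x}\|_1$ is irrelevant because feasibility alone already decides the \ac{SSP} is in fact a cleaner handling of that point than the paper's, but the approach is the same.
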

\begin{proof}
    To prove that the binary-constrained $\ell_1$ minimization problem \eqref{eq:opt-binary-constraint} is NP-hard, it can be shown that the \ac{SSP} reduces to it.
    Consider an instance of the optimization problem where $\mat{\Phi} = \bigl[ \begin{matrix} w_1 & \cdots & w_N \end{matrix} \bigr]$, $\vec{y} = \bigl[ W \bigr]$, and $\epsilon = 0$. In this case, the constraint $\|\mat{\Phi} \vec{x} - \vec{y}\|_2 \leq \epsilon$ simplifies to
    \begin{equation*}
        \Bigl\| \sum_{j=1}^N w_j x_j - W \Bigr\|_2 = 0 ,
    \end{equation*}
    which leads to the equation
    \begin{equation*}
        \sum_{i=1}^N w_j x_j = W .
    \end{equation*}
    The binary constraint $\vec{x} \in \{0,1\}^N$ directly corresponds to selecting a subset in the \ac{SSP}: $x_j = 1$ means $w_j$ is included in the subset, while $x_j = 0$ means it is not.
    While the \ac{SSP} aims to find a feasible subset, it can be incorporated into the optimization framework by minimizing $\|\vec{x}\|_1$, which corresponds to minimization the number of integers selected.
\end{proof}

This simple reduction shows that, because of the introduced binary constraint, problem \eqref{eq:opt-binary-constraint} is a hard problem. Nonetheless, one might wonder if additional assumptions on the problem could turn the binary constrained problem into a problem that is solvable in polynomial time.

\subsection{Relaxing the binary constraint}
A possible convex relaxation is to replace the binary constraint $\vec{x} \in \{0,1\}^N$ with its convex hull. This leads to the following optimization problem:
\begin{equation}  \label{eq:opt-unit-interval-constraint}
    \begin{array}{ll}
        \minimize  & \|\vec{x}\|_1  \\
        \subjectto & \| \mat{\Phi} \vec{x} - \vec{y} \|_2 \leq \epsilon  \\
                   & \vec{0} \preceq \vec{x} \preceq \vec{1} .
    \end{array}
\end{equation}
This relaxed problem can be solved to global optimum in polynomial time using convex optimization algorithms. The question is whether the solution to this relaxed problem coincide with the solution to the original binary-constrained problem.
To address this, the concept of \ac{BRIP} is introduced, which measures the ability of a matrix $\mat{\Phi}$ to approximately preserve the squared $\ell_2$ norm of sparse vectors with bounded entries. This property plays a critical role in analyzing whether the relaxed optimization problem yields a solution close to the original binary vector.

\begin{definition}
    A matrix $\mat{\Phi}$ satisfies the \ac{BRIP} of order $s$ with bounded restricted isometry constant $\delta_s^\mathrm{b}$ if, for all $s$-sparse vectors $\vec{x}$ satisfying $\|\vec{x}\|_\infty \leq 1$, the following inequality holds:
    \begin{equation}
        (1 - \delta_s^\mathrm{b}) \, \|\vec{x}\|_2^2 \leq \|\mat{\Phi}\vec{x}\|_2^2 \leq (1 + \delta_s^\mathrm{b}) \, \|\vec{x}\|_2^2 .
    \end{equation}
\end{definition}

\begin{theorem}  \label{thm:tight-convex}
    Suppose $\vec{y} = \mat{\Phi} \vec{x}_\mathrm{true} + \vec{n}$, where $\|\vec{n}\|_2 \leq \epsilon$. The true vector $\vec{x}_\mathrm{true}$ is binary and has $s$ nonzero entries. Let $\vec{x}^\star$ denote the solution to the optimization problem \eqref{eq:opt-unit-interval-constraint}.
    If $\delta_{2s}^\mathrm{b}(\mat{\Phi}) < \sqrt{2}-1$, then the solution $\vec{x}^\star$ satisfies
    \begin{equation}
        \|\vec{x}^\star - \vec{x}_\mathrm{true}\|_2 \leq C \epsilon ,
    \end{equation}
    where $C$ is a constant.
\end{theorem}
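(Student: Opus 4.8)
The plan is to adapt the classical restricted-isometry recovery argument of Cand\`es to the bounded setting, exploiting the single structural fact that distinguishes this problem: because both $\vec{x}^\star$ and $\vec{x}_\mathrm{true}$ lie in the box $[0,1]^N$, their difference $\vec{h} := \vec{x}^\star - \vec{x}_\mathrm{true}$ satisfies $\|\vec{h}\|_\infty \leq 1$, and so does every restriction of $\vec{h}$ to a coordinate subset. This is precisely the hypothesis under which the \ac{BRIP} inequality is available, so wherever the standard proof invokes the ordinary \ac{RIP} on a vector supported on at most $2s$ coordinates, I can substitute the constant $\delta_{2s}^\mathrm{b}$ instead.

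First I would record the two inequalities that drive the argument. The \emph{tube} estimate follows from feasibility: since $\vec{x}_\mathrm{true}$ is feasible for \eqref{eq:opt-unit-interval-constraint} (it is binary, hence in the box, and $\|\mat{\Phi}\vec{x}_\mathrm{true} - \vec{y}\|_2 = \|\vec{n}\|_2 \leq \epsilon$) and $\vec{x}^\star$ is optimal and feasible, the triangle inequality gives $\|\mat{\Phi}\vec{h}\|_2 \leq 2\epsilon$. The \emph{cone} estimate follows from optimality: $\|\vec{x}^\star\|_1 \leq \|\vec{x}_\mathrm{true}\|_1$, which, after writing $T_0 := \support{\vec{x}_\mathrm{true}}$ with $|T_0| = s$ and splitting the $\ell_1$ norm across $T_0$ and its complement, yields $\|\vec{h}_{T_0^c}\|_1 \leq \|\vec{h}_{T_0}\|_1$.

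Next I would carry out the standard support decomposition. Order the coordinates of $\vec{h}$ outside $T_0$ by decreasing magnitude and partition them into consecutive blocks $T_1, T_2, \dots$ of size $s$. The cone estimate together with the usual norm-comparison bound controls the tail, $\sum_{j \geq 2} \|\vec{h}_{T_j}\|_2 \leq s^{-1/2}\|\vec{h}_{T_0^c}\|_1 \leq \|\vec{h}_{T_0}\|_2 \leq \|\vec{h}_{T_0 \cup T_1}\|_2$. Setting $T_{01} := T_0 \cup T_1$, I would expand $\|\mat{\Phi}\vec{h}_{T_{01}}\|_2^2 = \langle \mat{\Phi}\vec{h}_{T_{01}}, \mat{\Phi}\vec{h}\rangle - \sum_{j \geq 2}\langle \mat{\Phi}\vec{h}_{T_{01}}, \mat{\Phi}\vec{h}_{T_j}\rangle$ and bound each piece. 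The lower \ac{BRIP} bound controls the left side from below by $(1-\delta_{2s}^\mathrm{b})\|\vec{h}_{T_{01}}\|_2^2$; Cauchy--Schwarz with the tube estimate and the upper \ac{BRIP} bound controls the first term by $2\epsilon\sqrt{1+\delta_{2s}^\mathrm{b}}\,\|\vec{h}_{T_{01}}\|_2$; and for the cross terms I would split $\vec{h}_{T_{01}}$ into its $T_0$ and $T_1$ parts so that only disjoint supports of combined size $2s$ appear, applying the near-orthogonality estimate $|\langle \mat{\Phi}\vec{u}, \mat{\Phi}\vec{v}\rangle| \leq \delta_{2s}^\mathrm{b}\|\vec{u}\|_2\|\vec{v}\|_2$ to get $\sqrt{2}\,\delta_{2s}^\mathrm{b}\,\|\vec{h}_{T_{01}}\|_2 \sum_{j\geq 2}\|\vec{h}_{T_j}\|_2$. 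Each such invocation is legitimate because the vectors involved are restrictions of $\vec{h}$ and therefore inherit $\|\cdot\|_\infty \leq 1$.

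Combining these bounds and using the tail estimate to absorb $\sum_{j\geq 2}\|\vec{h}_{T_j}\|_2$ back into $\|\vec{h}_{T_{01}}\|_2$, I would obtain $\|\vec{h}_{T_{01}}\|_2 \leq C\epsilon$ provided $1 - (1+\sqrt{2})\,\delta_{2s}^\mathrm{b} > 0$, i.e. exactly when $\delta_{2s}^\mathrm{b} < \sqrt{2}-1$, with $C$ depending only on $\delta_{2s}^\mathrm{b}$. Finally, $\|\vec{h}\|_2 \leq \|\vec{h}_{T_{01}}\|_2 + \sum_{j\geq 2}\|\vec{h}_{T_j}\|_2 \leq 2\|\vec{h}_{T_{01}}\|_2$ extends the bound from $T_{01}$ to all of $\vec{h}$, giving the claimed $\|\vec{x}^\star - \vec{x}_\mathrm{true}\|_2 \leq C\epsilon$. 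The main obstacle -- and the one genuinely new point relative to the classical theorem -- is the justification that \ac{BRIP} may replace \ac{RIP} throughout; once the box-induced bound $\|\vec{h}\|_\infty \leq 1$ is in hand this is immediate, but it must be verified that every vector fed into an isometry inequality is a plain coordinate restriction of $\vec{h}$, never a rescaled or recombined vector that could leave the unit box and thereby invalidate the bounded isometry hypothesis.
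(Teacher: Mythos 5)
Your proposal is correct and takes essentially the same route as the paper's proof in \ref{app:tight-convex-proof}: the same block decomposition $\set{J}_0=\support{\vec{x}_\mathrm{true}}, \set{J}_1, \set{J}_2,\dots$, the same tube estimate $\|\mat{\Phi}\vec{h}\|_2\leq 2\epsilon$ and cone estimate from $\ell_1$ optimality, and the same disjoint-support near-orthogonality bound, leading to the identical condition $\delta_{2s}^\mathrm{b}<\sqrt{2}-1$ and constant. The one point you flag as a possible obstacle---that the near-orthogonality estimate internally rescales vectors---is harmless because normalizing by the $\ell_2$ norm cannot push any coordinate above $1$ (since $\|\vec{z}\|_2\geq\|\vec{z}\|_\infty$) and the sum/difference of disjointly supported unit-box vectors stays in the unit box; the paper isolates exactly this verification in \cref{thm:RIP-special-property}.
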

The proof of \cref{thm:tight-convex} is provided in \ref{app:tight-convex-proof}.
\cref{thm:tight-convex} states that the solution to the optimization problem \eqref{eq:opt-unit-interval-constraint} is close to the true binary vector $\vec{x}_\mathrm{true}$ in the $\ell_2$ norm, provided that the bounded restricted isometry constant $\delta_{2s}^\mathrm{b}$ of the measurement matrix $\mat{\Phi}$ is sufficiently small. In the case where $\epsilon$ is sufficiently small such that \mbox{$\|\vec{x}^\star - \vec{x}_\mathrm{true}\|_\infty \leq \|\vec{x}^\star - \vec{x}_\mathrm{true}\|_2 \leq C\epsilon < 0.5$}, thresholding the solution vector $\vec{x}^\star$ can exactly recover the binary vector $\vec{x}_\mathrm{true}$.

\begin{remark}
    The standard \ac{RIP} \cite{RN1200} with restricted isometry constant $\delta_s$ requires a matrix $\mat{\Phi}$ to satisfy
    \begin{equation*}
        (1 - \delta_s) \, \|\vec{x}\|_2^2 \leq \|\mat{\Phi}\vec{x}\|_2^2 \leq (1 + \delta_s) \, \|\vec{x}\|_2^2
    \end{equation*}
    for all $s$-sparse vectors, irrespective of their magnitudes. In contrast, the \ac{BRIP} imposes an additional constraint: it applies only to \mbox{$s$-sparse} vectors $\vec{x}$ satisfying $\|\vec{x}\|_\infty \leq 1$. This restriction narrows the set of vectors under consideration to a strict subset of all $s$-sparse vectors. By limiting the scope to vectors with bounded entries, the \ac{BRIP} condition is less stringent than the standard \ac{RIP}. This is because large entries in $\vec{x}$ can amplify distortions when the transformation $\mat{\Phi}$ is applied, potentially causing larger values for the restricted isometry constant $\delta_s$. Although verifying whether a specific matrix satisfies the \ac{RIP} or the \ac{BRIP} is NP-hard, the bounded nature of the \ac{BRIP} inherently increases the likelihood of a matrix $\mat{\Phi}$ satisfying the \ac{BRIP} compared to the \ac{RIP}.
\end{remark}

\subsection{Multiple measurement vectors with unknown variance}
In many real-world scenarios, measurement data is often composed of segments with varying noise characteristics, making \eqref{eq:single-measurement} insufficient. To better represent such cases, the aim is to find a binary vector $\vec{x}$ that satisfies
\begin{equation}  \label{eq:multiple-measurements}
    \vec{y}^{(l)} = \mat{\Phi}^{(l)} \vec{x} + \vec{n}^{(l)} \quad \forall l = 1,\dots,L ,
\end{equation}
where $\vec{y}^{(l)} \in \mathbb{R}^M$ are the measured vectors, $\mat{\Phi}^{(l)} \in \mathbb{R}^{M\times N}$ are the measurement matrices, and $\vec{n}^{(l)} \in \mathbb{R}^M$ are measurement noise vectors with different variances.
 Assume that the noise vectors are \ac{IID} Gaussian:
\begin{equation}
    \vec{n}^{(l)} \sim \mathcal{N}\bigl( \vec{0}, (\beta^{(l)})^{-1} \mat{I} \bigr) ,
\end{equation}
where $\beta^{(l)}$ represents the inverse-variance for each noise vector. The problem of recovering the binary vector $\vec{x}$ can be formulated as the following convex optimization problem:
\begin{equation}  \label{eq:opt-unit-interval-constraint-augmented}
    \begin{array}{ll}
        \minimize  & \|\vec{x}\|_1  \\
        \subjectto & \| \tilde{\mat{\Phi}} \vec{x} - \tilde{\vec{y}} \|_2 \leq \epsilon  \\
                   & \vec{0} \preceq \vec{x} \preceq \vec{1} ,
    \end{array}
\end{equation}
where
\begin{equation*}
    \tilde{\mat{\Phi}} = \begin{bmatrix} \sqrt{\beta^{(1)}} \mat{\Phi}^{(1)} \\ \vdots \\ \sqrt{\beta^{(L)}} \mat{\Phi}^{(L)} \end{bmatrix} , \quad
    \tilde{\vec{y}}    = \begin{bmatrix} \sqrt{\beta^{(1)}} \vec{y}^{(1)}    \\ \vdots \\ \sqrt{\beta^{(L)}} \vec{y}^{(L)}    \end{bmatrix} .
\end{equation*}
When $L = 1$, this optimization problem is analogous to the one in \eqref{eq:opt-unit-interval-constraint}. For $L > 1$, if a specific channel $l$ exhibits high noise, indicated by a small value of $\beta^{(l)}$, the impact of the corresponding measurement matrix $\mat{\Phi}^{(l)}$ and measurement vector $\vec{y}^{(l)}$ on the overall system is reduced due to the scaling applied in the augmented matrices and vectors.
In the optimization problem \eqref{eq:opt-unit-interval-constraint-augmented}, a suitable choice for $\epsilon$ is $c\sqrt{ML}$, where $c$ represents the number of standard deviations the noise exhibits.

However, what if the noise contents are unknown. One approach is to begin with an initial guess for all $\beta^{(l)}$ values and then iteratively refine them based on some heuristic. The process leads the iterative algorithm described in \cref{alg:convex-optimization}.\\
\begin{algorithm}[H]
    \caption{Convex optimization for binary vector recovery}
    \label{alg:convex-optimization}
    \SetKwInOut{Input}{Input}
    \SetKwInOut{Init}{Initialize}
    \SetKwInOut{Output}{Output}
    \SetKwRepeat{Do}{do}{while}

    \Input{$\mat{\Phi}^{(1)}, \dots, \mat{\Phi}^{(L)}, \vec{y}^{(1)}, \dots, \vec{y}^{(L)}$}
    \Init{$\beta^{(l)} = \beta_0 \ \ \forall l = 1,\dots,L$}

    \Do{$\vec{x}^\star$ not converged}{
        $\tilde{\mat{\Phi}} \gets \Bigl[ \begin{matrix}
            \sqrt{\beta^{(1)}} \bigl(\mat{\Phi}^{(1)}\bigr)^\top & \dots & \sqrt{\beta^{(L)}} \bigl(\mat{\Phi}^{(L)}\bigr)^\top
        \end{matrix} \Bigr]^\top$

        $\tilde{\vec{y}} \gets \Bigl[ \begin{matrix}
            \sqrt{\beta^{(1)}} \bigl(\vec{y}^{(1)}\bigr)^\top & \dots & \sqrt{\beta^{(L)}} \bigl(\vec{y}^{(L)}\bigr)^\top
        \end{matrix} \Bigr]^\top$

        $x^\star \gets \minimize \|\vec{x}\|_1 \quad \subjectto \|\tilde{\mat{\Phi}} \vec{x} - \tilde{\vec{y}}\|_2 \leq \epsilon,\ \vec{0}\preceq\vec{x}\preceq{1}$

        \ForEach{$l = 1, \dots, L$}{
            $\beta^{(l)} \gets M / \| \vec{y}^{(l)} - \mat{\Phi}^{(l)} \vec{x}^\star \|_2^2$
        }
    }
    \Output{$\vec{x}^\star$}
\end{algorithm}
\noindent In \cref{alg:convex-optimization}, the process begins by initializing all $\beta^{(l)}$ values to an initial guess $\beta_0$. The algorithm then alternates between solving the convex optimization problem \eqref{eq:opt-unit-interval-constraint-augmented} and updating each $\beta^{(l)}$ based on the squared error of the current estimate. While this $\beta^{(l)}$ update heuristic may appear arbitrary, the following claim provides a formal justification.
\begin{claim}  \label{thm:expectation-maximization-instance}
    \cref{alg:convex-optimization} is an instance of the \ac{EM} algorithm.
\end{claim}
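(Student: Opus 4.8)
The plan is to exhibit \cref{alg:convex-optimization} as a \ac{MAP}-\ac{EM} procedure for estimating $\vec{x}$ in which the unknown inverse-variances $\{\beta^{(l)}\}$ serve as latent variables that are marginalized. First I would fix the probabilistic model consistent with \eqref{eq:multiple-measurements}: conditional on $\vec{x}$ and $\beta^{(l)}$, each measurement is Gaussian, $\vec{y}^{(l)} \mid \vec{x}, \beta^{(l)} \sim \gaussdist\bigl(\mat{\Phi}^{(l)}\vec{x},\, (\beta^{(l)})^{-1}\mat{I}\bigr)$; the target carries a Laplace prior $p(\vec{x}) \propto \exp(-\lambda\|\vec{x}\|_1)$ truncated to the box $[0,1]^N$, whose log reproduces the $\ell_1$ objective and the constraint $\vec{0}\preceq\vec{x}\preceq\vec{1}$ of \eqref{eq:opt-unit-interval-constraint-augmented}; and each precision carries the non-informative Jeffreys prior $p(\beta^{(l)}) \propto 1/\beta^{(l)}$. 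The object EM then monotonically increases is the marginal log-posterior $\log p(\vec{x}\mid\{\vec{y}^{(l)}\})$, with $\{\beta^{(l)}\}$ integrated out.

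Next I would carry out the two steps and match them line-for-line to the algorithm. For the E-step I compute the posterior of each $\beta^{(l)}$ given $\vec{y}^{(l)}$ and the current iterate $\vec{x}^\star$. Because the Gaussian likelihood viewed as a function of $\beta^{(l)}$ is conjugate to the Jeffreys prior, the posterior is $\mathrm{Gamma}(M/2,\, S^{(l)}/2)$ with $S^{(l)} = \|\vec{y}^{(l)} - \mat{\Phi}^{(l)}\vec{x}^\star\|_2^2$, whose mean is $\expval[\beta^{(l)}] = M/S^{(l)}$. This is exactly the refresh $\beta^{(l)} \gets M/\|\vec{y}^{(l)} - \mat{\Phi}^{(l)}\vec{x}^\star\|_2^2$ in \cref{alg:convex-optimization}. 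For the M-step I form $Q(\vec{x}) = \expval_{\beta \mid \vec{y}, \vec{x}^\star}\bigl[\log p(\{\vec{y}^{(l)}\}, \{\beta^{(l)}\}, \vec{x})\bigr]$ and observe that the only $\vec{x}$-dependent terms are $-\tfrac{1}{2}\sum_l \expval[\beta^{(l)}]\,\|\vec{y}^{(l)} - \mat{\Phi}^{(l)}\vec{x}\|_2^2 - \lambda\|\vec{x}\|_1$, since the $\vec{x}$-coupled part of the complete-data log-likelihood is linear in each $\beta^{(l)}$. Substituting the E-step values $\expval[\beta^{(l)}] = \beta^{(l)}$ recovers the weighted least-squares-plus-$\ell_1$ objective whose constrained form is precisely \eqref{eq:opt-unit-interval-constraint-augmented}, solved in the algorithm's inner step.

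Finally I would reconcile the loop ordering and the penalized-versus-constrained forms. The loop performs the M-step (solve for $\vec{x}^\star$) before the E-step (refresh $\beta^{(l)}$); this is standard EM read from the initialized precisions $\beta^{(l)} = \beta_0$, which play the role of an initial E-step output. The equivalence between minimizing $\|\vec{x}\|_1$ subject to $\|\tilde{\mat{\Phi}}\vec{x} - \tilde{\vec{y}}\|_2 \leq \epsilon$ and minimizing $\tfrac{1}{2}\|\tilde{\mat{\Phi}}\vec{x} - \tilde{\vec{y}}\|_2^2 + \lambda\|\vec{x}\|_1$ then follows from Lagrangian duality for a multiplier $\lambda$ paired with $\epsilon$, closing the identification of the inner step with the M-step.

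I expect the main obstacle to be pinning down the prior choices so that the E-step posterior mean equals $M/S^{(l)}$ exactly rather than, say, $(M+2)/S^{(l)}$: the Jeffreys prior $p(\beta^{(l)})\propto 1/\beta^{(l)}$ is the specific improper choice for which the $\mathrm{Gamma}$ posterior mean coincides with the algorithm's update, and I would verify this constant carefully. A secondary subtlety is confirming that the $\expval[\log\beta^{(l)}]$ contributions genuinely drop out of the M-step objective, so that only the first moment $\expval[\beta^{(l)}]$ is required; this is what lets the E-step report a single scalar per channel, matching \cref{alg:convex-optimization} exactly.
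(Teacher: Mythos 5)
Your proposal is correct in spirit but inverts the paper's decomposition, so the two arguments are genuinely different readings of the same algorithm. The paper takes $\set{X}=\{\vec{x}\}$ as the \emph{latent} variable and $\set{\Theta}=\{\beta^{(l)}\}$ as the \emph{parameters}: its E-step declares the posterior of $\vec{x}$ to be a Dirac delta at the convex-optimization solution $\vec{x}^\star$, and its M-step maximizes the expected complete-data log-likelihood over $\beta^{(l)}$, which yields $(\beta^{(l)})^{(\text{new})}=M/\|\vec{y}^{(l)}-\mat{\Phi}^{(l)}\vec{x}^\star\|_2^2$ directly. You instead treat the precisions as latent and $\vec{x}$ as the parameter in a \ac{MAP}--\ac{EM} scheme: your E-step is exact (Jeffreys prior plus Gaussian likelihood gives a $\mathrm{Gamma}(M/2,\,S^{(l)}/2)$ posterior with mean $M/S^{(l)}$, and you correctly note that only the first moment survives into the M-step because the $\vec{x}$-coupled term is linear in $\beta^{(l)}$), while your M-step is the weighted $\ell_1$-penalized least-squares problem. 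What your route buys is rigor on the $\beta$ side --- the update $M/S^{(l)}$ emerges from an exact conditional expectation rather than from the paper's delta-function ansatz, which is the weakest step of the paper's own proof. What it costs is the identification of the M-step with \eqref{eq:opt-unit-interval-constraint-augmented}: the Lagrange multiplier $\lambda$ that equates the penalized problem to the $\epsilon$-constrained one depends on $\tilde{\mat{\Phi}}$ and $\tilde{\vec{y}}$, which change every iteration as the $\beta^{(l)}$ are refreshed, so a fixed Laplace rate $\lambda$ does not reproduce a fixed $\epsilon$ across iterations (and vice versa). This is a real, if minor, mismatch you should flag explicitly; it is of the same heuristic character as the paper's delta-posterior assumption, and given that the statement is an informal Claim rather than a theorem, either argument is an acceptable justification of the update rule.
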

\begin{proof}
    The \ac{EM} algorithm is an iterative method for finding estimates of parameters \cite{RN157}. Given the set of observed variables $\set{Y}$, unobserved variables $\set{X}$, and unknown parameters $\set{\Theta}$, the \ac{EM} algorithm  alternates between the E-step and the M-step to estimate the unknown parameters.
    \begin{itemize}[nosep, leftmargin=*, label=-]
        \item E-step:\hspace{4.6mm}$Q(\set{\Theta} | \set{\Theta}^{(\text{old})}) = \int p(\set{X} | \set{Y}, \set{\Theta}^{(\text{old})}) \ln p(\set{Y}, \set{X} | \set{\Theta}) \, d\set{X}$
        \item M-step:\quad$\set{\Theta}^{(\text{new})} = \argmax_{\set{\Theta}} Q(\set{\Theta} | \set{\Theta}^{(\text{old})})$
    \end{itemize}
    For our problem, let $\set{Y} = \{ \vec{y}^{(l)} \}_{l=1}^L$, $\set{X} = \{ \vec{x} \}$, and $\set{\Theta} = \{ \beta^{(l)} \}_{l=1}^L$. The \ac{EM} algorithm updates $\set{\Theta}$ as follows:
    \begin{equation*}
    \begin{split}
        \set{\Theta}^{(\text{new})}
        &= \argmax_{\set{\Theta}} \int p(\set{X} | \set{Y}, \set{\Theta}^{(\text{old})}) \ln p(\set{Y}, \set{X} | \set{\Theta}) \, d\set{X}  \\
        &= \argmax_{\set{\Theta}} \int p(\set{X} | \set{Y}, \set{\Theta}^{(\text{old})}) \left[ \ln p(\set{Y} | \set{X} , \set{\Theta}) + \ln p(\set{X} | \set{\Theta}) \right] \, d\set{X}  \\
        &\overset{3}{=} \argmax_{\set{\Theta}} \int p(\set{X} | \set{Y}, \set{\Theta}^{(\text{old})}) \left[ \ln p(\set{Y} | \set{X} , \set{\Theta}) + \ln p(\set{X}) \right] \, d\set{X}  \\
        &\overset{4}{=} \argmax_{\set{\Theta}} \int p(\set{X} | \set{Y}, \set{\Theta}^{(\text{old})}) \ln p(\set{Y} | \set{X} , \set{\Theta}) \, d\set{X} .
    \end{split}
    \end{equation*}
    Equality 3 arises because $\set{X}$ is independent of $\set{\Theta}$. Equality 4 holds because the second term does not dependent on $\set{\Theta}$. Substituting the specific variables into the expression results in
    \begin{equation*}
        (\beta^{(l)})^{(\text{new})}
        = \argmax_{\beta^{(l)}} \int p\Bigl( \vec{x} \big| \vec{y}^{(1)},\dots,\vec{y}^{(L)}, (\beta^{(1)})^{(\text{old})},\dots,(\beta^{(L)})^{(\text{old})} \Bigr) \ln p\Bigl( \vec{y}^{(l)} \big| \vec{x} , \beta^{(l)} \Bigr) \, d\vec{x} .
    \end{equation*}
    The posterior probability of $\vec{x}$ can be viewed as a probability mass centered at the convex optimization solution:
    \begin{equation*}
        p\Bigl( \vec{x} \big| \vec{y}^{(1)},\dots,\vec{y}^{(L)}, (\beta^{(1)})^{(\text{old})},\dots,(\beta^{(L)})^{(\text{old})} \Bigr)
        = \delta(\vec{x} - \vec{x}^\star) ,
    \end{equation*}
    where $\delta(\cdot)$ denotes the Dirac delta function. Additionally, the measurement likelihood is given by
    \begin{equation*}
        p\Bigl( \vec{y}^{(l)} \big| \vec{x} , \beta^{(l)} \Bigr)
        = \biggl( \frac{\beta^{(l)}}{2\pi} \biggr)^{M/2} \exp \biggl[ -\frac{1}{2} \beta^{(l)} \| \vec{y}^{(l)} - \mat{\Phi}^{(l)} \vec{x} \|_2^2 \biggr] .
    \end{equation*}
    Combining these leads to the following expression for the update of $\beta^{(l)}$:
    \begin{equation*}
        (\beta^{(l)})^{(\text{new})} = \frac{M}{\| \vec{y}^{(l)} - \mat{\Phi}^{(l)} \vec{x}^\star \|_2^2} .
    \end{equation*}
\end{proof}

\section{Bayesian inference for binary vector recovery}
\label{sec:bayesian-inference}

In addition to convex optimization, Bayesian inference offers an alternative approach to recovering a binary vector $\vec{x}$ that satisfies
\begin{equation}  \label{eq:multiple-measurements2}
    \vec{y}^{(l)} = \mat{\Phi}^{(l)} \vec{x} + \vec{n}^{(l)} \quad \forall l = 1,\dots,L ,
\end{equation}
where $\vec{y}^{(l)} \in \mathbb{R}^M$, $\mat{\Phi}^{(l)} \in \mathbb{R}^{M \times N}$, and $\vec{n}^{(l)} \sim \mathcal{N}\bigl(\vec{0}, (\beta^{(l)})^{-1} \mat{I}\bigr)$.
The Bayesian network illustrated in \cref{fig:bayesian-network} represents the linear measurement model defined by \eqref{eq:multiple-measurements2}, where $x_j$ are binary variables and
\begin{equation}  \label{eq:measurement-likelihood}
    p(y_i^{(l)} | x_1, \dots, x_N, \beta^{(l)})
    = \sqrt{\frac{\beta^{(l)}}{2\pi}} \exp \biggl[ -\frac{1}{2} \beta^{(l)} \Bigl( y_i^{(l)} - \sum_{j=1}^N \phi_{i,j}^{(l)} x_j \Bigr)^2 \biggr] .
\end{equation}
\begin{figure}
    \centering
    \begin{tikzpicture}
    \tikzstyle{var} = [draw, circle, minimum size=8mm]
    \tikzstyle{dep} = [->]

    \node[var] (x1) at (0, 0) {$x_1$};
    \node[var] (x2) at (0,-1) {$x_2$};
    \node           at (0,-1.9) {$\vdots$};
    \node[var] (xN) at (0,-3) {$x_N$};

    \node[var] (y1) at (2,-0.5) {$y_1^{(l)}$};
    \node           at (2,-1.4) {$\vdots$};
    \node[var] (yM) at (2,-2.5) {$y_M^{(l)}$};

    \node[var] (beta) at (3.2,-1.5) {$\beta^{(l)}$};

    \draw[dep] (x1) -- (y1);   \draw[dep] (x1) -- (yM);
    \draw[dep] (x2) -- (y1);   \draw[dep] (x2) -- (yM);
    \draw[dep] (xN) -- (y1);   \draw[dep] (xN) -- (yM);
    \draw[dep] (beta) -- (y1); \draw[dep] (beta) -- (yM);

    \draw (1.3,0.2) rectangle (3.85,-3.2);
    \node[anchor=north east, inner sep=2pt] at (3.85,0.2) {\footnotesize $l = 1, \dots, L$};
\end{tikzpicture}
    \caption{A Bayesian network representing linear measurements of a binary vector. The vertices $x_j$ ($j=1,\dots,N$) are binary variables. The vertices $y_i^{(l)}$ ($i=1,\dots,M$) are Gaussian random variables with a mean given by a linear combination of $x_1, \dots, x_N$ and a variance of $1/\beta^{(l)}$.}
    \label{fig:bayesian-network}
\end{figure}
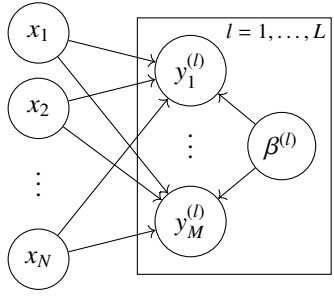
A Bayesian network allows for two types of inference: \ac{MAP} inference and conditional inference. However, performing exact inference of either type on the Bayesian network illustrated in \cref{fig:bayesian-network} is an NP-hard problem.

\begin{theorem}  \label{thm:MAP-inference-NP-hard}
    \Ac{MAP} inference on the Bayesian network illustrated in \cref{fig:bayesian-network} is NP-hard.
\end{theorem}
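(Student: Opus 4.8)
The plan is to prove \cref{thm:MAP-inference-NP-hard} by a polynomial-time reduction from the \ac{SSP}, mirroring the argument already used for \cref{thm:opt-binary-constraint-NP-hard}. The governing observation is that, under the Gaussian measurement model \eqref{eq:measurement-likelihood}, the \ac{MAP} objective collapses into an integer least-squares problem, whose zero-residual feasibility version is precisely the \ac{SSP}. Since the \ac{SSP} is NP-complete, a polynomial-time \ac{MAP} solver would decide an NP-complete problem, forcing NP-hardness.

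First I would write the \ac{MAP} estimate explicitly. By Bayes' rule the most probable binary configuration is
\begin{equation*}
    \hat{\vec{x}} = \argmax_{\vec{x} \in \{0,1\}^N} \; \ln p(\vec{y} | \vec{x}) + \ln p(\vec{x}) .
\end{equation*}
Taking the prior on each $x_j$ to be uniform, so that $\ln p(\vec{x})$ is constant over $\{0,1\}^N$, and substituting the log-likelihood obtained from \eqref{eq:measurement-likelihood} with $L=1$, the objective reduces up to additive constants and a positive scaling by $\beta$ to
\begin{equation*}
    \hat{\vec{x}} = \argmin_{\vec{x} \in \{0,1\}^N} \; \| \mat{\Phi} \vec{x} - \vec{y} \|_2^2 .
\end{equation*}
Thus \ac{MAP} inference amounts to finding the binary vector whose image under $\mat{\Phi}$ best matches the measurements in squared error.

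Next I would construct the reduction exactly as in \cref{thm:opt-binary-constraint-NP-hard}. Given an \ac{SSP} instance with integers $w_1, \dots, w_N$ and target $W$, I set $\mat{\Phi} = \bigl[ \begin{matrix} w_1 & \cdots & w_N \end{matrix} \bigr]$, $\vec{y} = \bigl[ W \bigr]$, and $L = 1$. The \ac{MAP} objective then becomes $\bigl( \sum_{j=1}^N w_j x_j - W \bigr)^2$, which is nonnegative and equals zero if and only if the subset $\{ w_j : x_j = 1 \}$ sums exactly to $W$. Hence a polynomial-time \ac{MAP} solver would return a configuration $\hat{\vec{x}}$ attaining zero objective precisely on yes-instances of the \ac{SSP}; verifying the residual takes linear time, and the construction is clearly polynomial in the input size. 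This completes the reduction and shows \ac{MAP} inference is NP-hard.

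The hard part will be justifying that the prior may be taken uniform so that the objective reduces cleanly to the squared residual. I expect this to be unproblematic: NP-hardness requires only exhibiting one polynomially-constructible family of network instances on which \ac{MAP} inference solves an NP-complete problem, and any product-Bernoulli prior with parameter $1/2$ contributes merely an additive constant, leaving the minimizer unchanged. A secondary item to verify is the equivalence between the zero-residual condition and subset-sum feasibility, but this follows immediately from the binary constraint $\vec{x} \in \{0,1\}^N$, exactly as in the proof of \cref{thm:opt-binary-constraint-NP-hard}.
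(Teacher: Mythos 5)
Your proposal is correct and follows essentially the same reduction as the paper: both encode an \ac{SSP} instance with $M=L=1$, $\mat{\Phi} = [\,w_1 \ \cdots \ w_N\,]$, $\vec{y} = [W]$, and observe that the \ac{MAP} objective attains zero residual exactly on yes-instances. The only difference is cosmetic: you neutralize the prior term by choosing it uniform (so it is constant over $\{0,1\}^N$), whereas the paper keeps an arbitrary positive prior and instead takes $\beta^{(1)}$ arbitrarily large so the likelihood term dominates; your variant is, if anything, slightly cleaner since it avoids quantifying how large $\beta^{(1)}$ must be.
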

\begin{proof}
    It will be shown that the ability to perform \ac{MAP} inference enables answering the \ac{SSP} defined in \cref{def:subset-sum-problem}. Consider an instance of the Bayesian network where $M=1$ and $L=1$. Additionally, set $y_1^{(1)} = W$, $\phi_{1,j}^{(1)} = w_j$, $p(x_j) > 0$, and let $\beta^{(1)}$ be an arbitrarily large number. \ac{MAP} inference seeks the combination of binary value assignments that has the greatest posterior probability:
    \begin{equation*}
    \begin{split}
           \argmax_{x_1, \dots, x_N \in \{0,1\}} p(x_1, \dots, x_N | y_1^{(1)}, \beta^{(1)})
        &= \argmax_{x_1, \dots, x_N \in \{0,1\}} p(y_1^{(1)} | x_1, \dots, x_N, \beta^{(1)}) p(x_1) \dots p(x_N)  \\
        &= \argmax_{x_1, \dots, x_N \in \{0,1\}} \ln p(y_1^{(1)} | x_1, \dots, x_N, \beta^{(1)}) + \sum_{j=1}^N \ln p(x_j)  \\
        &= \argmax_{x_1, \dots, x_N \in \{0,1\}} -\frac{1}{2} \beta^{(1)} \Bigl( y_1^{(1)} - \sum_{j=1}^N \phi_{1,j}^{(1)} x_j \Bigr)^2 + \sum_{j=1}^N \ln p(x_j)  \\
        &\overset{4}{=} \argmax_{x_1, \dots, x_N \in \{0,1\}} -\frac{1}{2} \beta^{(1)} \Bigl( W - \sum_{j=1}^N w_j x_j \Bigr)^2
    \end{split}
    \end{equation*}
    Equality 4 holds because $\beta^{(1)}$ is arbitrarily large and $\ln p(x_j)$ is finite. The maximum occurs when the combination of $x_j$ values satisfy
    \begin{equation*}
        W - \sum_{j=1}^N w_j x_j = 0 .
    \end{equation*}
    The result of the \ac{MAP} inference, denoted as $x_1, \dots, x_N$, directly corresponds to selecting a subset in the \ac{SSP}: $x_j = 1$ means $w_j$ is included in the subset, while $x_j = 0$ means it is not.
    If the \ac{MAP} inference result satisfies $\sum_{j=1}^N w_j x_j = W$, then the answer to the \ac{SSP} is `yes'.
    Conversely, if the \ac{MAP} inference result does not satisfy $\sum_{j=1}^N w_j x_j = W$, the answer to the \ac{SSP} is `no'.
\end{proof}

\begin{theorem}  \label{thm:conditional-inference-NP-hard}
    Exact conditional inference on the Bayesian network illustrated in \cref{fig:bayesian-network} is NP-hard.
\end{theorem}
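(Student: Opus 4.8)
The plan is to reduce the \ac{SSP} of \cref{def:subset-sum-problem} to exact conditional inference, mirroring the strategy used for \cref{thm:MAP-inference-NP-hard} but querying a posterior marginal instead of the mode. Conditional inference computes quantities of the form
\[
    p(x_k = 1 \mid \vec{y}) = \frac{\sum_{\vec{x} : x_k = 1} p(\vec{y} \mid \vec{x})\, p(\vec{x})}{\sum_{\vec{x}} p(\vec{y} \mid \vec{x})\, p(\vec{x})},
\]
so the goal is to engineer a network in which a single such marginal reveals whether a subset summing to $W$ exists. Without loss of generality the weights $w_1, \dots, w_N$ and the target $W$ may be taken to be positive integers, since the \ac{SSP} is NP-complete already in this form.

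The gadget I would use introduces one auxiliary variable. Given an \ac{SSP} instance, build a network with $M = 1$, $L = 1$, and $N+1$ binary variables $x_1, \dots, x_{N+1}$, uniform priors $p(x_j = 1) = \tfrac{1}{2}$, a single measurement $y_1^{(1)} = W$, and coefficients $\phi_{1,j}^{(1)} = w_j$ for $j \leq N$ together with $\phi_{1,N+1}^{(1)} = W$. The point of the extra column is that setting the flag $x_{N+1} = 1$ and all other variables to zero always produces the exact sum $W$, so at least one zero-residual configuration is guaranteed. Because the weights are positive, this is the \emph{only} zero-residual configuration with $x_{N+1} = 1$, whereas the zero-residual configurations with $x_{N+1} = 0$ are exactly the subsets of $\{w_1, \dots, w_N\}$ summing to $W$. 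Consequently, as $\beta^{(1)} \to \infty$ the posterior concentrates uniformly on the zero-residual configurations and
\[
    p\bigl(x_{N+1} = 1 \mid y_1^{(1)} = W\bigr) \longrightarrow \frac{1}{1 + \#\{\text{subsets summing to } W\}},
\]
which tends to $1$ precisely when the \ac{SSP} answer is ``no'' and is at most $\tfrac{1}{2}$ when it is ``yes''.

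To turn this limiting statement into a genuine polynomial-time reduction, I would run the argument at a finite but sufficiently large inverse-variance $\beta^{(1)}$. Since all admissible sums $\sum_j \phi_{1,j}^{(1)} x_j$ are integers, every nonzero-residual configuration has residual at least $1$ and therefore contributes at most $e^{-\beta^{(1)}/2}$ to each sum, while there are fewer than $2^{N+1}$ of them. Bounding the total stray mass by $\gamma = 2^{N+1} e^{-\beta^{(1)}/2}$ and choosing $\beta^{(1)}$ so that $\gamma < \tfrac{1}{3}$---which only requires $\beta^{(1)}$ of size $O(N)$, hence polynomial in the input length---yields $p(x_{N+1} = 1 \mid y_1^{(1)} = W) > \tfrac{3}{4}$ in the ``no'' case and $p(x_{N+1} = 1 \mid y_1^{(1)} = W) < \tfrac{2}{3}$ in the ``yes'' case. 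A single exact conditional query followed by comparison against the threshold $0.7$ then decides the \ac{SSP}.

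The main obstacle I anticipate is exactly this quantitative control: the clean delta-function picture is only a limit, so the crux of the proof is verifying that the contributions of the near-miss configurations can be driven below the decision gap with a $\beta^{(1)}$ whose bit length is polynomial in the size of the \ac{SSP} instance. The separation of integer residuals by at least $1$ is what makes this possible and is the key quantitative fact the argument relies on; the remaining steps (well-definedness of the posterior under a continuous observation conditioned on the observed density value, the positivity normalization of the weights, and the counting of zero-residual configurations) are routine.
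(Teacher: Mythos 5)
Your proof is correct, and it takes a genuinely different (and more quantitative) route than the paper's. The paper also reduces from the \ac{SSP}, but it keeps the original $N$ variables, lets $\beta^{(1)}$ be ``arbitrarily large,'' and argues that the posterior marginal $\prob(x_1 = 1 \mid y_1^{(1)}, \beta^{(1)})$ is nonzero if and only if some subset containing $w_1$ sums to $W$; the decision is thus a positivity test on a single marginal, taken in the $\beta^{(1)} \to \infty$ limit. Your construction differs in two substantive ways. First, you add the auxiliary flag variable $x_{N+1}$ with coefficient $W$, which guarantees at least one zero-residual configuration and turns the query into a threshold comparison ($>\tfrac{3}{4}$ versus $<\tfrac{2}{3}$) on the marginal of the flag rather than a positivity test. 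Second, and more importantly, you carry out the argument at a finite $\beta^{(1)}$ of size $O(N)$ and bound the stray mass $2^{N+1}e^{-\beta^{(1)}/2}$ contributed by the near-miss configurations, exploiting the fact that integer residuals are separated from zero by at least $1$. This buys you a fully rigorous polynomial-time many-one reduction: for any finite $\beta^{(1)}$ the Gaussian likelihood is strictly positive on every configuration, so the paper's ``nonzero if and only if'' claim is literally only a statement about the limit, and the paper does not quantify how large $\beta^{(1)}$ must be for the inference output to be decodable. Your version closes exactly that gap, at the modest cost of one extra variable and the (standard, correctly invoked) restriction to positive integer weights. Both arguments establish the theorem; yours is the one you would want if a referee pressed on the limiting step.
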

\begin{proof}
    Consider an instance of the Bayesian network where $M=1$ and $L=1$. Set $y_1^{(1)} = W$, $\phi_{1,j}^{(1)} = w_j$, $p(x_j) > 0$, and let $\beta^{(1)}$ be an arbitrarily large number. The posterior probability is given by:
    \begin{equation*}
    \begin{split}
        p(x_1, x_2 \dots, x_N | y_1^{(1)}, \beta^{(1)})
        &= \frac{p(y_1^{(1)} | x_1, x_2, \dots, x_N, \beta^{(1)}) p(x_1) p(x_2) \dots p(x_N)}{p(y_1^{(1)})}  \\
        &= \sqrt{\frac{\beta^{(1)}}{2\pi}} \exp \biggl[ -\frac{1}{2} \beta^{(1)} \Bigl( y_1^{(1)} - \sum_{j=1}^N \phi_{1,j}^{(1)} x_j \Bigr)^2 \biggr] \times \prod_{j=1}^N p(x_j) \times \frac{1}{p(y_1^{(1)})}  \\
        &= \sqrt{\frac{\beta^{(1)}}{2\pi}} \exp \biggl[ -\frac{1}{2} \beta^{(1)} \Bigl( W - \sum_{j=1}^N w_j x_j \Bigr)^2 \biggr] \times \prod_{j=1}^N p(x_j) \times \frac{1}{p(y_1^{(1)})}
    \end{split}
    \end{equation*}
    This probability is nonzero if and only if $W - \sum_{j=1}^N w_j x_j = 0$, because $\beta^{(1)}$ is arbitrarily large.
    And since
    \begin{equation*}
        p(x_1 | y_1^{(1)}, \beta^{(1)}) = \sum_{x_2\in\{0,1\}} \dots \sum_{x_N\in\{0,1\}} p(x_1, x_2, \dots, x_N | y_1^{(1)}, \beta^{(1)}) ,
    \end{equation*}
    the exact conditional inference query $\prob(x_1=1 | y_1^{(1)}, \beta^{(1)}) > 0$ if and only if there exists a subset of integers from $\{ w_1, w_2, \dots, w_N \}$ that includes $w_1$ and sums exactly to $W$.
    Determining whether such a subset exists is NP-hard because it is equivalent to a \ac{SSP} asking whether there is a subset of $\{ w_2, \dots, w_N \}$ that sums exactly to $W - w_1$.
\end{proof}

\cref{thm:MAP-inference-NP-hard} and \cref{thm:conditional-inference-NP-hard} shows that inference for the posterior probability of $\vec{x}$ on the Bayesian network is a hard problem. Nonetheless, approximate inference techniques could provide reasonable solutions in polynomial time.

\subsection{Information theory background}
\begin{definition}
    Let $p(\set{\Theta})$ and $q(\set{\Theta})$ represent two joint probability distributions over a set of variables $\set{\Theta}$.
    The \ac{KL}-divergence between $q(\set{\Theta})$ and $p(\set{\Theta})$ is defined as
    \begin{equation}  \label{eq:KL-divergence}
        \KL\bigl( q(\set{\Theta}) \Vert p(\set{\Theta}) \bigr) = \int q(\set{\Theta}) \ln \frac{q(\set{\Theta})}{p(\set{\Theta})} \, d\set{\Theta} .
    \end{equation}
\end{definition}
\begin{lemma}  \label{thm:KL-divergence-positive}
    The \ac{KL}-divergence satisfies $\KL\bigl( q(\set{\Theta}) \Vert p(\set{\Theta}) \bigr) \geq 0$, with equality if and only if $q(\set{\Theta}) = p(\set{\Theta})$.
\end{lemma}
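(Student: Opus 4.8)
The plan is to prove non-negativity from the elementary logarithmic inequality $\ln t \le t - 1$, valid for all $t > 0$ with equality if and only if $t = 1$. First I would flip the sign and write
\[
    -\KL\bigl( q(\set{\Theta}) \Vert p(\set{\Theta}) \bigr) = \int q(\set{\Theta}) \ln \frac{p(\set{\Theta})}{q(\set{\Theta})} \, d\set{\Theta} ,
\]
so that the claim $\KL \ge 0$ is equivalent to showing this integral is $\le 0$. I would restrict the integration to the region where $q(\set{\Theta}) > 0$ (the integrand vanishes elsewhere under the convention $0 \ln 0 = 0$) and apply the pointwise bound with $t = p(\set{\Theta}) / q(\set{\Theta})$.

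Second, integrating the bound $\ln t \le t - 1$ against the density $q$ gives
\[
    \int q(\set{\Theta}) \ln \frac{p(\set{\Theta})}{q(\set{\Theta})} \, d\set{\Theta} \le \int q(\set{\Theta}) \left( \frac{p(\set{\Theta})}{q(\set{\Theta})} - 1 \right) d\set{\Theta} = \int_{\{ q > 0 \}} p(\set{\Theta}) \, d\set{\Theta} - 1 .
\]
Here I use that $q$ integrates to one, while $\int_{\{ q > 0 \}} p \le \int p = 1$ since $p$ is a density; hence the right-hand side is $\le 0$, establishing $\KL \ge 0$. For the equality direction I would invoke the sharpness of the logarithmic inequality: $\KL = 0$ forces equality in the pointwise bound almost everywhere on $\{ q > 0 \}$, i.e.\ $p(\set{\Theta}) / q(\set{\Theta}) = 1$, so $p = q$ there; this in turn gives $\int_{\{ q > 0 \}} p = 1$, leaving no mass for $p$ on $\{ q = 0 \}$, and therefore $p = q$ almost everywhere. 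The converse is immediate, since $q = p$ makes the integrand identically $\ln 1 = 0$.

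I expect the only real care to lie in the support bookkeeping rather than in any substantive step. One must separately dispatch the region where $q > 0$ but $p = 0$, on which the integrand is $+\infty$ so that $\KL = +\infty \ge 0$ trivially and equality cannot occur; and one must check that the inequality chain closes in the correct direction through the estimate $\int_{\{ q > 0 \}} p \le 1$. The one genuinely measure-theoretic point is the equality argument, namely that a nonnegative integrand (here $q(\ln(q/p) - (1 - p/q)) \ge 0$) with vanishing integral must be zero almost everywhere. Equivalently, the entire argument can be packaged as a single application of Jensen's inequality to the concave function $\ln$, with its strict concavity supplying the equality case directly; I would likely present it in the $\ln t \le t - 1$ form to keep the equality condition elementary and self-contained.
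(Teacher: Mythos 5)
Your proof is correct. The paper does not actually supply an argument for this lemma---its ``proof'' consists entirely of a citation to a reference---so there is nothing to compare step for step; your write-up simply makes explicit the standard Gibbs-inequality argument that the cited source contains. The non-negativity via $\ln t \le t - 1$ applied to $t = p(\set{\Theta})/q(\set{\Theta})$ on $\{q > 0\}$, the estimate $\int_{\{q>0\}} p \le 1$, and the equality analysis (pointwise equality a.e.\ on $\{q>0\}$ forcing $p = q$ there, which then leaves no mass for $p$ on $\{q = 0\}$) are all sound, and you correctly flag the two places where care is needed: the $q > 0$, $p = 0$ region where $\KL = +\infty$, and the measure-theoretic step that a nonnegative integrand with zero integral vanishes almost everywhere. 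Your remark that the whole argument is one application of Jensen's inequality to the concave logarithm is also accurate; the $\ln t \le t - 1$ form is the better choice here precisely because, as you say, it keeps the equality case elementary. No gaps.
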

\begin{proof}
    The proof can be found in \cite{RN1028}.
\end{proof}
\noindent \cref{thm:KL-divergence-positive} shows that $\KL\bigl( q(\set{\Theta}) \Vert p(\set{\Theta}) \bigr)$ can serve as a distance metric, and minimizing it amounts to finding a distribution $q(\set{\Theta})$ that is close to $p(\set{\Theta})$.
\begin{remark}
    The \ac{KL}-divergence is asymmetric, meaning that $\KL\bigl( q \Vert p \bigr) \neq \KL\bigl( p \Vert q \bigr)$.
    Minimizing $\KL\bigl( p \Vert q \bigr)$ results in a distribution $q$ that matches the expected sufficient statistics of the true distribution $p$ \cite{koller2009pgm}. However, this projection is often computationally intractable.
    When using the other projection, the approximation distribution $q$ is typically selected to have a simple form, allowing the minimization of $\KL\bigl( q \Vert p \bigr)$ to be computationally tractable.
\end{remark}

\begin{proposition}  \label{thm:mean-field-approx}
    Consider a set of unobserved variables $\set{\Theta}$ and observed variables $\set{Y}$.
    Suppose the set $\set{\Theta}$ is partitioned into disjoint subsets $\set{\Theta}_1, \set{\Theta}_2, \dots$. Assume $q(\set{\Theta})$ is chosen as the class of joint distributions that factorizes over these partitions, referred to as mean field distributions \cite{RN157}:
    \begin{equation}  \label{eq:mean-field}
        q(\set{\Theta}) = \prod_k q(\set{\Theta}_k) ,
    \end{equation}
    where $q(\set{\Theta}_k)$ is any arbitrary probability distribution over the variables $\set{\Theta}_k$. By keeping all of $\{ q(\set{\Theta}_k) \}_{k\neq\ell}$ fixed, the unique minimizer of $\KL\bigl( q(\set{\Theta}) \Vert p(\set{\Theta} | \set{Y}) \bigr)$ with respect to $q(\set{\Theta}_\ell)$ satisfies
    \begin{equation}  \label{eq:mean-field-update-rule}
    \begin{split}
        \ln q(\set{\Theta}_\ell)
        &= \int \ln p(\set{Y}, \set{\Theta}) \bigl( \prod_{k\neq\ell} q(\set{\Theta}_k) \, d\set{\Theta}_k \bigr) + \const  \\
        &= \expval_{\prod_{k\neq\ell} q(\set{\Theta}_k)} \bigl[ \ln p(\set{Y}, \set{\Theta}) \bigr] + \const .
    \end{split}
    \end{equation}
\end{proposition}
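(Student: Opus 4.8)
The plan is to show that, with all factors $\{q(\set{\Theta}_k)\}_{k\neq\ell}$ held fixed, the objective $\KL\bigl(q(\set{\Theta})\Vert p(\set{\Theta}|\set{Y})\bigr)$ reduces, up to an additive constant independent of $q(\set{\Theta}_\ell)$, to a single \ac{KL}-divergence in $q(\set{\Theta}_\ell)$ alone, whereupon \cref{thm:KL-divergence-positive} immediately pins down both the minimizer and its uniqueness. First I would substitute $p(\set{\Theta}|\set{Y}) = p(\set{Y},\set{\Theta})/p(\set{Y})$ into the definition \eqref{eq:KL-divergence}, splitting the integrand into an entropy term $\int q(\set{\Theta})\ln q(\set{\Theta})\,d\set{\Theta}$ and an energy term $-\int q(\set{\Theta})\ln p(\set{Y},\set{\Theta})\,d\set{\Theta}$; the normalizer contributes only $\ln p(\set{Y})$, which is constant because $q$ integrates to one and can be folded into $\const$.

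Next I would exploit the factorization $q(\set{\Theta}) = \prod_k q(\set{\Theta}_k)$. In the entropy term, writing $\ln q(\set{\Theta}) = \sum_m \ln q(\set{\Theta}_m)$ and integrating out every factor not appearing in a given summand—each integrates to one since the $q(\set{\Theta}_k)$ are normalized—collapses the entropy to $\sum_m \int q(\set{\Theta}_m)\ln q(\set{\Theta}_m)\,d\set{\Theta}_m$, of which only the $m=\ell$ summand depends on $q(\set{\Theta}_\ell)$. In the energy term I would integrate out the fixed factors first, producing $\int q(\set{\Theta}_\ell)\,\expval_{\prod_{k\neq\ell}q(\set{\Theta}_k)}\bigl[\ln p(\set{Y},\set{\Theta})\bigr]\,d\set{\Theta}_\ell$ as the only $q(\set{\Theta}_\ell)$-dependent contribution.

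The key step is then to recognize the surviving remainder as a \ac{KL}-divergence. Define $\tilde p(\set{\Theta}_\ell)$ through $\ln \tilde p(\set{\Theta}_\ell) = \expval_{\prod_{k\neq\ell}q(\set{\Theta}_k)}\bigl[\ln p(\set{Y},\set{\Theta})\bigr] + \const$, choosing the constant so that $\tilde p$ is a normalized density. The remaining terms then combine to $\int q(\set{\Theta}_\ell)\ln\frac{q(\set{\Theta}_\ell)}{\tilde p(\set{\Theta}_\ell)}\,d\set{\Theta}_\ell + \const = \KL\bigl(q(\set{\Theta}_\ell)\Vert\tilde p(\set{\Theta}_\ell)\bigr) + \const$. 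By \cref{thm:KL-divergence-positive}, this quantity is minimized—and uniquely so—exactly when $q(\set{\Theta}_\ell) = \tilde p(\set{\Theta}_\ell)$, which upon taking logarithms is precisely the update rule \eqref{eq:mean-field-update-rule}.

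The main obstacle I anticipate is the bookkeeping around the normalizing constant: the raw expression $\expval_{\prod_{k\neq\ell}q(\set{\Theta}_k)}\bigl[\ln p(\set{Y},\set{\Theta})\bigr]$ need not exponentiate to a valid probability density, so I must absorb its normalizer into $\const$ before I can legitimately invoke the nonnegativity and equality conditions of \cref{thm:KL-divergence-positive}. This is also the reason the update rule is stated only up to the additive constant $\const$, with the normalization of $q(\set{\Theta}_\ell)$ fixing it in practice.
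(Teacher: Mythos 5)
Your proposal is correct and follows essentially the same route as the paper: both decompose the objective into a negative-entropy term plus an energy term, collapse the factorized $q$ so that only the $q(\set{\Theta}_\ell)$-dependent pieces survive, identify the normalized $\tilde p(\set{\Theta}_\ell)$, and invoke \cref{thm:KL-divergence-positive} to conclude $q(\set{\Theta}_\ell)=\tilde p(\set{\Theta}_\ell)$ is the unique optimizer. The only cosmetic difference is that the paper maximizes the energy functional $F = \ln p(\set{Y}) - \KL$ while you minimize the \ac{KL}-divergence directly, and your explicit attention to absorbing the normalizer of $\tilde p$ into $\const$ is a point the paper handles implicitly.
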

\begin{proof}
    Start by defining the energy functional $F\bigl( p(\set{Y}, \set{\Theta}), q(\set{\Theta}) \bigr)$ as
    \begin{equation*}
        F\bigl( p(\set{Y}, \set{\Theta}), q(\set{\Theta}) \bigr) = \int q(\set{\Theta}) \ln \frac{p(\set{Y}, \set{\Theta})}{q(\set{\Theta})} \, d\set{\Theta} .
    \end{equation*}
    The following identity always holds:
    \begin{equation*}
    \begin{split}
        \KL\bigl( q(\set{\Theta}) \Vert p(\set{\Theta} | \set{Y}) \bigr) + F\bigl( p(\set{Y}, \set{\Theta}), q(\set{\Theta}) \bigr)
        &= \int q(\set{\Theta}) \ln \frac{q(\set{\Theta})}{p(\set{\Theta} | \set{Y})} \, d\set{\Theta} + \int q(\set{\Theta}) \ln \frac{p(\set{Y}, \set{\Theta})}{q(\set{\Theta})} \, d\set{\Theta}  \\
        &= \int q(\set{\Theta}) \ln \frac{q(\set{\Theta})}{p(\set{\Theta} | \set{Y})} \frac{p(\set{Y}, \set{\Theta})}{q(\set{\Theta})} \, d\set{\Theta}  \\
        &= \int q(\set{\Theta}) \ln p(\set{Y}) \, d\set{\Theta}  \\
        &= \ln p(\set{Y}) .
    \end{split}
    \end{equation*}
    Since $\ln p(\set{Y})$ is independent of $q(\set{\Theta})$, minimizing $\KL\bigl( q(\set{\Theta}) \Vert p(\set{\Theta} | \set{Y}) \bigr)$ is equivalent to maximizing $F\bigl( p(\set{Y}, \set{\Theta}), q(\set{\Theta}) \bigr)$ with respect to $q(\set{\Theta})$. Substituting the factorized form of $q(\set{\Theta})$ from \eqref{eq:mean-field} into $F\bigl( p(\set{Y}, \set{\Theta}), q(\set{\Theta}) \bigr)$ yields
    \begin{equation*}
    \begin{split}
        F\bigl( p(\set{Y}, \set{\Theta}), q(\set{\Theta}) \bigr)
        &= \int \bigl( \prod_k q(\set{\Theta}_k)\bigr) \bigl( \ln p(\set{Y}, \set{\Theta}) - \sum_\ell \ln q(\set{\Theta}_\ell) \bigr) \bigl( \prod_k \, d\set{\Theta}_k \bigr)  \\
        &= \int \ln p(\set{Y}, \set{\Theta}) \bigl( \prod_k q(\set{\Theta}_k) \, d\set{\Theta}_k \bigr) - \sum_\ell \int \ln q(\set{\Theta}_\ell) \bigl( \prod_k q(\set{\Theta}_k) \, d\set{\Theta}_k \bigr)  \\
        &= \iint \Bigl( \ln p(\set{Y}, \set{\Theta}) \bigl( \prod_{k\neq\ell} q(\set{\Theta}_k) \, d\set{\Theta}_k \bigr) \Bigr) q(\set{\Theta}_\ell) \, d\set{\Theta}_\ell - \sum_\ell \int q(\set{\Theta}_\ell) \ln q(\set{\Theta}_\ell) \, d\set{\Theta}_\ell  \\
        &= \iint \Bigl( \ln p(\set{Y}, \set{\Theta}) \bigl( \prod_{k\neq\ell} q(\set{\Theta}_k) \, d\set{\Theta}_k \bigr) \Bigr) q(\set{\Theta}_\ell) \, d\set{\Theta}_\ell - \int q(\set{\Theta}_\ell) \ln q(\set{\Theta}_\ell) \, d\set{\Theta}_\ell - \sum_{k\neq\ell} \int q(\set{\Theta}_k) \ln q(\set{\Theta}_k) \, d\set{\Theta}_k  \\
        &= \int q(\set{\Theta}_\ell) \ln \frac{\tilde{p}(\set{\Theta}_\ell)}{q(\set{\Theta}_\ell)} \, d\set{\Theta}_\ell - \sum_{k\neq\ell} \int q(\set{\Theta}_k) \ln q(\set{\Theta}_k) \, d\set{\Theta}_k  \\
        &= -\KL \bigl( q(\set{\Theta}_\ell) \Vert \tilde{p}(\set{\Theta}_\ell) \bigr) - \sum_{k\neq\ell} \int q(\set{\Theta}_k) \ln q(\set{\Theta}_k) \, d\set{\Theta}_k ,
    \end{split}
    \end{equation*}
    where
    \begin{equation*}
        \ln \tilde{p}(\set{\Theta}_\ell) = \int \ln p(\set{Y}, \set{\Theta}) \bigl( \prod_{k\neq\ell} q(\set{\Theta}_k) \, d\set{\Theta}_k \bigr) + \const .
    \end{equation*}
    Obviously, $F\bigl( p(\set{Y}, \set{\Theta}), q(\set{\Theta}) \bigr)$ is maximized with respect to $q(\set{\Theta}_\ell)$ when $\KL \left( q(\set{\Theta}_\ell) \Vert \tilde{p}(\set{\Theta}_\ell) \right) = 0$. According to \cref{thm:KL-divergence-positive}, this implies that $q(\set{\Theta}_\ell) = \tilde{p}(\set{\Theta}_\ell)$.
\end{proof}
\cref{thm:mean-field-approx} provides a means for approximating the posterior probability $p(\set{\Theta} | \set{Y})$ by another probability density $q(\set{\Theta})$, which factorizes over the subsets of $\set{\Theta}$ as given in \eqref{eq:mean-field}.  While this mean field approximation cannot capture the correlation between variables in different subsets, the trade-off enables polynomial time algorithms.

\subsection{Inference using mean field approximation}
To apply \cref{thm:mean-field-approx} for inferencing on the Bayesian network illustrated in \cref{fig:bayesian-network}, it is necessary to first specify the prior probabilities.
Gamma priors are chosen for $\beta^{(l)}$:
\begin{equation}
    p(\beta^{(l)})
    = \mathrm{Gamma}(\beta | c,d)
    = \frac{d^c}{\Gamma(c)} (\beta^{(l)})^{c-1} e^{-d \beta^{(l)}} ,
\end{equation}
where $\Gamma(\cdot)$ is the gamma function. Parameters $c$ and $d$ are usually set to zero to obtain non-informative priors \cite{RN64}.
To enforce a binary solution vector, a Bernoulli prior is chosen for $x_j$:
\begin{equation}  \label{eq:distribution-2}
    p(x_j | \pi_j)
    = \mathrm{Bernoulli}(x_j | \pi_j)
    = \pi_j^{x_j} (1-\pi_j)^{1-x_j} .
\end{equation}
The hyperprior of $\pi_j$ is chosen as a beta distribution since it the conjugate prior of the Bernoulli distribution:
\begin{equation}
    p(\pi_j)
    = \mathrm{Beta}(\pi_j | a,b)
    = \frac{1}{\mathrm{B}(a,b)} \pi_j^{a-1} (1 - \pi_j)^{b-1} ,
\end{equation}
where $\mathrm{B}(\cdot,\cdot)$ is the beta function.
Furthermore, the measurement likelihood formula \eqref{eq:measurement-likelihood} written in vector form is
\begin{equation}  \label{eq:distribution-4}
    p(\vec{y}^{(l)} | \vec{x}, \beta^{(l)})
    = \biggl( \frac{\beta^{(l)}}{2\pi} \biggr)^{M/2} \exp \biggl[ -\frac{1}{2} \beta^{(l)} \| \vec{y}^{(l)} - \mat{\Phi}^{(l)} \vec{x} \|_2^2 \biggr] .
\end{equation}
The set of observed variables is $\set{Y} = \{ \vec{y}^{(l)} \}_{l=1}^L$ and the set of unobserved variables is $\set{\Theta} = \{ \beta^{(l)} \}_{l=1}^L \cup \{ x_j, \pi_j \}_{j=1}^N$. The joint distribution of $\set{Y}$ and $\set{\Theta}$ is given by
\begin{equation}
    p(\set{Y}, \set{\Theta}) =
    \prod_{l=1}^L p(\vec{y}^{(l)} | \vec{x}, \beta^{(l)}) p(\beta^{(l)})
    \prod_{j=1}^N p(x_j | \pi_j) p(\pi_j) .
\end{equation}
The mean field distribution is
\begin{equation}
    q(\set{\Theta}) = \prod_{l=1}^L q(\beta^{(l)}) \prod_{j=1}^N q(x_j) q(\pi_j) .
\end{equation}
With all the necessary variables and distributions defined, \cref{thm:mean-field-approx} can be applied for mean field approximation inference.

\paragraph*{Update of $q(\beta^{(l)})$}
According to the update rule \eqref{eq:mean-field-update-rule},
\begin{equation}  \label{eq:mean-field-derivation-first}
\begin{split}
    \ln q(\beta^{(l)})
    &= \expval_{\prod_{j=1}^N q(x_j)} \bigl[ \ln p(\vec{y}^{(l)} | \vec{x}, \beta^{(l)}) p(\beta^{(l)}) \bigr] + \const  \\
    &= \Bigl( \underbrace{c + \frac{M}{2}}_{\tilde{c}} - 1 \Bigr) \ln \beta^{(l)}  - \Bigl( \underbrace{d + \frac{1}{2} \expval_{{q(\vec{x})}} \bigl[ \| \vec{y}^{(l)} - \mat{\Phi}^{(l)} \vec{x} \|_2^2 \bigr]}_{\tilde{d}} \Bigr) \beta^{(l)} + \const  \\
    &= \ln \mathrm{Gamma} ( \beta^{(l)} | \tilde{c}, \tilde{d} ).
\end{split}
\end{equation}
The expected value of the quadratic form is given by
\begin{equation}
    \expval_{\prod_{j=1}^N q(x_j)} \bigl[ \| \vec{y}^{(l)} - \mat{\Phi}^{(l)} \vec{x} \|_2^2 \bigr]
    = \| \vec{y}^{(l)} - \mat{\Phi}^{(l)} \hat{\vec{x}} \|_2^2
    + \trace( \diag( \hat{\vec{x}} \odot (\vec{1} - \hat{\vec{x}}) ) \mat{\Phi}^{(l)^\top} \mat{\Phi}^{(l)} ) ,
\end{equation}
where $\hat{x}$ denotes the mean of $\vec{x}$ with respect to $\prod_{j=1}^N q(x_j)$.
The mean of $\beta^{(l)}$, denoted as $\hat{\beta}^{(l)}$, is given by
\begin{equation}
    \hat{\beta}^{(l)} = \expval_{q(\beta^{(l)})} \bigl[ \beta^{(l)} \bigr] = \frac{\tilde{c}}{\tilde{d}} .
\end{equation}

\paragraph*{Update of $q(x_j)$}
First define
\begin{equation}
    \vec{y}_{\sim j}^{(l)} = \vec{y}^{(l)} - \sum_{k \neq j} \vec{\phi}_{:,k}^{(l)} x_k ,
\end{equation}
where $\vec{\phi}_{:,k}$ denotes the $k$th column of $\mat{\Phi}$.
According to the update rule \eqref{eq:mean-field-update-rule},
\begin{equation}
\begin{split}
    \ln q(x_j)
    &= \expval_{q(\pi_j) \prod_{l=1}^L q(\beta^{(l)})} \bigl[ \ln \prod_{l=1}^L p(\vec{y}_{\sim j}^{(l)} | x_j, \beta^{(l)}) p(x_j | \pi_j) \bigr] + \const  \\
    &= \sum_{l=1}^L \Bigl( -\frac{1}{2} \expval_{q(\beta^{(l)})} \bigl[ \beta^{(l)} \bigr] \ \| \vec{y}_{\sim j}^{(l)} - \vec{\phi}_{:,j}^{(l)} x_j \|_2^2 \Bigr) +
       x_j \, \expval_{q(\pi_j)} \bigl[ \ln \pi_j \bigr] + (1-x_j) \, \expval_{q(\pi_j)} \bigl[ \ln (1-\pi_j) \bigr] + \const  \\
    &= \sum_{l=1}^L -\frac{1}{2} \hat{\beta}^{(l)} \left( x_j^2 \vec{\phi}_{:,j}^{(l)^\top} \vec{\phi}_{:,j}^{(l)} -2 x_j \vec{\phi}_{:,j}^{(l)^\top} \vec{y}_{\sim j}^{(l)} \right) +
       x_j \, \expval_{q(\pi_j)} \bigl[ \ln \pi_j \bigr] + (1-x_j) \, \expval_{q(\pi_j)} \bigl[ \ln (1-\pi_j) \bigr] + \const .
\end{split}
\end{equation}
The probabilities of $x_j$ being $1$ and $0$ are
\begin{subequations}
\begin{align}
    \prob(x_j=1) &\propto
    \exp \Bigl[ \expval_{q(\pi_j)} \bigl[ \ln \pi_j \bigr] \Bigr] \times
    \prod_{l=1}^L \exp\Bigl[ -\frac{1}{2} \hat{\beta}^{(l)} \Bigl( \vec{\phi}_{:,j}^{(l)^\top} \vec{\phi}_{:,j}^{(l)} -2 \vec{\phi}_{:,j}^{(l)^\top} \vec{y}_{\sim j}^{(l)} \Bigr) \Bigr] ,  \\
    \prob(x_j=0) &\propto \exp \Bigl[ \expval_{q(\pi_j)} \bigl[ \ln (1-\pi_j) \bigr] \Bigr] .
\end{align}
\end{subequations}
Then, the mean of $x_j$ with respect to $q(x_j)$, denoted as $\hat{x}_j$, is given by
\begin{equation}
    \hat{x}_j = \frac{\prob(x_j=1)}{\prob(x_j=1) + \prob(x_j=0)} .
\end{equation}

\paragraph*{Update of $q(\pi_j)$}
According to the update rule \eqref{eq:mean-field-update-rule},
\begin{equation}
\begin{split}
    \ln q(\pi_j)
    &= \expval_{q(x_j)} \bigl[ \ln p(x_j | \pi_j) p(\pi_j) \bigr] + \const  \\
    &= \expval_{q(x_j)} [x_j] \ln \pi_j + \expval_{q(x_j)} [1-x_j] \ln (1-\pi_j) +
       (a-1) \ln \pi_j + (b-1) \ln (1-\pi_j) + \const  \\
    &= ( \underbrace{\hat{x}_j + a}_{\tilde{a}_j} - 1 ) \ln \pi_j +
       ( \underbrace{(1-\hat{x}_j) + b}_{\tilde{b}_j} - 1 ) \ln (1-\pi_j) + \const  \\
    &= \ln \mathrm{Beta} ( \pi_j | \tilde{a}_j, \tilde{b}_j ) .
\end{split}
\end{equation}
The moments of the logarithm of $\pi_j$ and $1-\pi_j$ with respect to $q(\pi_j)$ are
\begin{subequations}  \label{eq:mean-field-derivation-last}
\begin{align}
    \expval_{q(\pi_j)} [\ln    \pi_j ] &= \psi(\tilde{a}_j) - \psi(\tilde{a}_j+\tilde{b}_j) ,  \\
    \expval_{q(\pi_j)} [\ln (1-\pi_j)] &= \psi(\tilde{b}_j) - \psi(\tilde{a}_j+\tilde{b}_j) ,
\end{align}
\end{subequations}
where $\psi(z) = \frac{d}{dz} \ln \Gamma(z)$ is the digamma function.

Following the derivations \eqref{eq:mean-field-derivation-first} through \eqref{eq:mean-field-derivation-last}, the mean field approximation procedure is summarized in \cref{alg:mean-field-approx}.\\
\begin{algorithm}[H]
    \caption{Mean field approximate inference for binary vector recovery}
    \label{alg:mean-field-approx}
    \SetKwInOut{Input}{Input}
    \SetKwInOut{Init}{Initialize}
    \SetKwInOut{Output}{Output}
    \SetKwRepeat{Do}{do}{while}

    \Input{$\mat{\Phi}^{(1)}, \dots, \mat{\Phi}^{(L)}, \vec{y}^{(1)}, \dots, \vec{y}^{(L)}$}
    \Init{$\hat{\vec{x}} = 0.5 \cdot \vec{1}$}
    \Do{$\hat{\vec{x}}$ not converged}{
        \ForEach{$l = 1, \dots, L$}{
            $\hat{\beta}^{(l)} \gets  M / \bigl( \| \vec{y}^{(l)} - \mat{\Phi}^{(l)} \hat{\vec{x}} \|_2^2 + \trace( \diag( \hat{\vec{x}} \odot (\vec{1} - \hat{\vec{x}}) ) \mat{\Phi}^{(l)^\top} \mat{\Phi}^{(l)} ) \bigr)$  \\
        }
        \ForEach{$j = 1, \dots, N$}{
            $\expval_{q(\pi_j)} [\ln    \pi_j ] \gets \psi(\hat{x}_j + a) - \psi(a+b+1)$  \\
            $\expval_{q(\pi_j)} [\ln (1-\pi_j)] \gets \psi(1-\hat{x}_j + b) - \psi(a+b+1)$  \\
            $\vec{y}_{\sim j}^{(l)} \gets \vec{y}^{(l)} - \sum_{k \neq j} \vec{\phi}_{:,k}^{(l)} \hat{x}_k$  \\
            $\xi_{j1} \gets \expval_{q(\pi_j)} [\ln \pi_j] + \sum_{l=1}^L -\frac{1}{2} \hat{\beta}^{(l)} \bigl( \vec{\phi}_{:,j}^{(l)^\top} \vec{\phi}_{:,j}^{(l)} -2 \vec{\phi}_{:,j}^{(l)^\top} \vec{y}_{\sim j}^{(l)} \bigr)$  \\
            $\xi_{j0} \gets \expval_{q(\pi_j)} [\ln (1-\pi_j)]$  \\
            $\hat{x}_j \gets 1 / ( 1 + \exp(\xi_{j0} - \xi_{j1}))$  \\
        }
    }
    \Output{$\hat{\vec{x}}$}
\end{algorithm}

\begin{claim}
    \cref{alg:mean-field-approx} converges globally.
\end{claim}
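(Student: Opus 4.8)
The plan is to recognize \cref{alg:mean-field-approx} as a block coordinate ascent procedure on the energy functional $F\bigl( p(\set{Y},\set{\Theta}), q(\set{\Theta}) \bigr)$ introduced in the proof of \cref{thm:mean-field-approx}, and then argue convergence via the monotone convergence theorem. The key identity to exploit is
\[
\KL\bigl( q(\set{\Theta}) \Vert p(\set{\Theta}|\set{Y}) \bigr) + F\bigl( p(\set{Y},\set{\Theta}), q(\set{\Theta}) \bigr) = \ln p(\set{Y}) ,
\]
which holds for any $q$ and whose right-hand side is a constant independent of $q$. Hence every update performed by the algorithm can be measured against the single scalar objective $F$.

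First I would establish monotonicity. Each line of the inner loops---the update of $q(\beta^{(l)})$, of $q(x_j)$, and of $q(\pi_j)$---is exactly the closed-form coordinate maximizer supplied by \cref{thm:mean-field-approx}: holding all remaining factors fixed at their current values, the chosen factor is set to $\tilde{p}(\set{\Theta}_\ell)$, which is the unique maximizer of $F$ over that factor. Consequently each individual update can only increase (never decrease) $F$, and a full sweep of the do-while loop produces a nondecreasing sequence $F^{(0)} \leq F^{(1)} \leq F^{(2)} \leq \dots$. The sequential Gauss--Seidel nature of the sweep---where updated means such as $\hat{x}_k$ are reused immediately---does not break this, because monotonicity holds micro-step by micro-step.

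Next I would establish boundedness. By \cref{thm:KL-divergence-positive} the \ac{KL}-divergence is nonnegative, so the identity above gives
\[
F\bigl( p(\set{Y},\set{\Theta}), q(\set{\Theta}) \bigr) = \ln p(\set{Y}) - \KL\bigl( q(\set{\Theta}) \Vert p(\set{\Theta}|\set{Y}) \bigr) \leq \ln p(\set{Y}) .
\]
Thus $\{ F^{(t)} \}$ is a nondecreasing sequence bounded above by the constant $\ln p(\set{Y})$, and by the monotone convergence theorem it converges. Because neither the monotonicity nor the upper bound depends on the initialization $\hat{\vec{x}} = 0.5\cdot\vec{1}$, convergence holds from any starting point, which is the sense in which the algorithm converges globally.

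The main obstacle is the gap between convergence of the objective value $F^{(t)}$ and convergence of the iterates---the factors $q(\beta^{(l)})$, $q(x_j)$, $q(\pi_j)$, equivalently the means $\hat{\beta}^{(l)}$ and $\hat{x}_j$---themselves, since a convergent objective does not by itself pin down a convergent parameter sequence. I would close this gap by noting that each coordinate subproblem has a unique maximizer (again by \cref{thm:mean-field-approx} together with \cref{thm:KL-divergence-positive}) and that the update maps are continuous in the current state, so that the standard convergence theory for coordinate ascent with unique block maximizers applies, with limit points being fixed points of the update, i.e.\ stationary points of $F$. I would also stress that ``globally'' here refers to initialization-independence rather than attainment of the global maximum of $F$, since the mean field objective is generally nonconvex and only a stationary point of $F$ (equivalently a local minimizer of the \ac{KL}-divergence) is guaranteed.
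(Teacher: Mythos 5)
Your proof is correct and follows essentially the same route as the paper: monotonicity of the objective under the coordinate updates of \cref{thm:mean-field-approx} (you phrase it as nondecrease of $F$, the paper as nonincrease of the \ac{KL}-divergence---equivalent via the identity $F + \KL = \ln p(\set{Y})$), boundedness from \cref{thm:KL-divergence-positive}, and uniqueness of the block maximizers to conclude the limit is a fixed point. If anything, you are more careful than the paper in flagging the gap between convergence of the objective value and convergence of the iterates, and in clarifying that ``globally'' means initialization-independence rather than global optimality.
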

\begin{proof}
    From \cref{thm:mean-field-approx}, sequentially applying the update rule in \eqref{eq:mean-field-update-rule} ensures that the \ac{KL}-divergence $\KL\bigl( q(\set{\Theta}) \Vert p(\set{\Theta} | \set{Y}) \bigr)$ is monotonically nonincreasing. Since the \ac{KL}-divergence is bounded below by zero, as stated in \cref{thm:KL-divergence-positive}, the iterations must converge. Furthermore, \cref{thm:mean-field-approx} guarantees the uniqueness of the minimizers, so the convergence point must be a fixed point.
\end{proof}

\subsection{Inference using approximate message passing}
Inspired by expectation propagation \cite{minka2001}, another approximate inference algorithm can be developed using message passing on cluster graphs \cite{koller2009pgm}. For convenience, the prior and conditional probabilities from \eqref{eq:distribution-2} through \eqref{eq:distribution-4} are repeated here:
\begin{equation*}
    p(x_j | \pi_j)
    = \mathrm{Bernoulli}(x_j | \pi_j)
    = \pi_j^{x_j} (1-\pi_j)^{1-x_j} ,
\end{equation*}
\begin{equation*}
    p(\pi_j)
    = \mathrm{Beta}(\pi_j | a,b)
    = \frac{1}{\mathrm{B}(a,b)} \pi_j^{a-1} (1 - \pi_j)^{b-1} ,
\end{equation*}
\begin{equation*}
    p(\vec{y}^{(l)} | \vec{x})
    = \biggl( \frac{\beta^{(l)}}{2\pi} \biggr)^{M/2} \exp \biggl[ -\frac{1}{2} \beta^{(l)} \| \vec{y}^{(l)} - \mat{\Phi}^{(l)} \vec{x} \|_2^2 \biggr] .
\end{equation*}
Here, $\beta^{(l)}$ are treated as parameters instead of random variables.
\cref{fig:cluster-graph} illustrates a cluster graph corresponding to these probability distributions. The leftmost cluster has an initial potential of $\prod_{j=1}^N p(\pi_j)$. The central cluster has an initial potential of $\prod_{j=1}^N p(x_j | \pi_j)$. The rightmost cluster has an initial potential of $\prod_{l=1}^L p(\vec{y}^{(l)} | \vec{x})$.
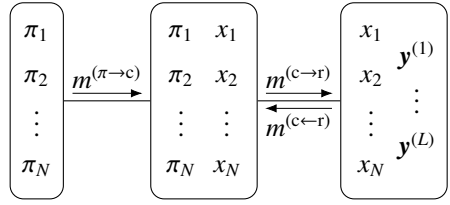
\begin{figure}
    \centering
    \begin{tikzpicture}
    \tikzstyle{bg} = [draw, fill=white, fill opacity=0.6, rounded corners=5pt, minimum width=0.7cm]
    \tikzstyle{msg} = [midway,inner sep=2pt]

    \begin{scope}[shift={(-0.3,0)}]
        \node[bg, minimum height=2.6cm] (cluster1) at (0,0) {};
        \node at (0, 0.9) {$\pi_1$};
        \node at (0, 0.3) {$\pi_2$};
        \node at (0,-0.2) {$\vdots$};
        \node at (0,-0.9) {$\pi_N$};
    \end{scope}

    \begin{scope}[shift={(1.6,0)}]
        \node[bg, minimum width=1.4cm, minimum height=2.6cm] (cluster2) at (0.3,0) {};
        \node at (0, 0.9) {$\pi_1$};
        \node at (0, 0.3) {$\pi_2$};
        \node at (0,-0.2) {$\vdots$};
        \node at (0,-0.9) {$\pi_N$};
        \node at (0.6, 0.9) {$x_1$};
        \node at (0.6, 0.3) {$x_2$};
        \node at (0.6,-0.2) {$\vdots$};
        \node at (0.6,-0.9) {$x_N$};
    \end{scope}
    \draw (cluster1.east) -- (cluster2.west);
    \draw[-Latex,very thin] ($0.9*(cluster1.east)+0.1*(cluster2.west)+(0,0.1)$) -- ($0.1*(cluster1.east)+0.9*(cluster2.west)+(0,0.1)$) node[msg,above] {$m^{(\pi \rightarrow \text{c})}$};

    \begin{scope}[shift={(4.1,0)}]
        \node[bg, minimum width=1.4cm, minimum height=2.6cm] (cluster3) at (0.3,0) {};
        \node at (0, 0.9) {$x_1$};
        \node at (0, 0.3) {$x_2$};
        \node at (0,-0.2) {$\vdots$};
        \node at (0,-0.9) {$x_N$};
        \node at (0.6, 0.6) {$\vec{y}^{(1)}$};
        \node at (0.6, 0.1) {$\vdots$};
        \node at (0.6,-0.6) {$\vec{y}^{(L)}$};
    \end{scope}
    \draw (cluster2.east) -- (cluster3.west);
    \draw[-Latex,very thin] ($0.9*(cluster2.east)+0.1*(cluster3.west)+(0,0.1)$) -- ($0.1*(cluster2.east)+0.9*(cluster3.west)+(0,0.1)$) node[msg,above] {$m^{(\text{c} \rightarrow \text{r})}$};
    \draw[Latex-,very thin] ($0.9*(cluster2.east)+0.1*(cluster3.west)-(0,0.1)$) -- ($0.1*(cluster2.east)+0.9*(cluster3.west)-(0,0.1)$) node[msg,below] {$m^{(\text{c} \leftarrow  \text{r})}$};

\end{tikzpicture}
    \caption{A cluster graph where each vertex corresponds to a subset of variables. Priors and conditionals are associated with specific clusters. The arrows indicate the direction of message passing.}
    \label{fig:cluster-graph}
\end{figure}

The message passing algorithm operates on cluster graphs by facilitating communication between clusters. In general, message passing operates in the following steps:
\begin{enumerate}[nosep, leftmargin=*]
    \item Each cluster receives messages from its neighboring clusters.
    \item The cluster multiplies its initial potential with all incoming messages, forming the cluster's belief.
    \item The cluster optionally marginalizes the belief or projects the belief onto another distribution.
    \item Finally, the cluster sends a message to each neighboring cluster. The message is calculated by dividing the cluster's belief by the incoming message from that neighbor.
\end{enumerate}

All distributions and messages over the variables $x_1, \dots, x_N$ are represented in the exponential family as follows:
\begin{equation}  \label{eq:exponential-family}
    \underbrace{\prod_{j=1}^N (1-\hat{x}_j)}_{\frac{1}{Z}} \times \exp \biggl(
        \underbrace{\begin{bmatrix} x_1 \\ \vdots \\ x_N \end{bmatrix}^\top}_{\vec{x}^\top}
        \underbrace{\begin{bmatrix} \ln\frac{\hat{x}_1}{1-\hat{x}_1} \\ \vdots \\ \ln\frac{\hat{x}_N}{1-\hat{x}_N} \end{bmatrix}}_{\vec{\eta}} \biggr) ,
\end{equation}
where $\vec{x}$ is the sufficient statistics, $\vec{\eta}$ is the natural parameter, $Z$ is the normalization constant, and $\hat{x}_j$ is the mean of the Bernoulli-distributed variable $x_j$.
The natural parameter for the messages $m^{(\text{c} \to \text{r})}$ and $m^{(\text{c} \leftarrow \text{r})}$ are defined as $\vec{\eta}^{(\text{c} \to \text{r})}$ and $\vec{\eta}^{(\text{c} \leftarrow \text{r})}$, resulting in
\begin{equation*}
    m^{(\text{c} \to        \text{r})} \propto \exp(\vec{x}^\top \vec{\eta}^{(\text{c} \to        \text{r})}) , \quad
    m^{(\text{c} \leftarrow \text{r})} \propto \exp(\vec{x}^\top \vec{\eta}^{(\text{c} \leftarrow \text{r})}) .
\end{equation*}

In the cluster graph illustrated in \cref{fig:cluster-graph}, the leftmost cluster's outgoing message, denoted as $m^{(\pi \to \text{c})}$, is given by
\begin{equation}  \label{eq:message-passing-derivation-first}
    m^{(\pi \to \text{c})} = \prod_{j=1}^N p(\pi_j) .
\end{equation}
The central cluster receives messages $m^{(\pi \to \text{c})}, m^{(\text{c} \leftarrow \text{r})}$ and multiplies them with its initial potential, resulting in the belief
\begin{equation}
    m^{(\pi \to \text{c})} m^{(\text{c} \leftarrow \text{r})} \times \prod_{j=1}^N p(x_j | \pi_j)
    = m^{(\text{c} \leftarrow \text{r})} \times \prod_{j=1}^N p(x_j | \pi_j) p(\pi_j) .
\end{equation}
This belief, when marginalized over $\pi_j$, results in
\begin{equation}
\begin{split}
    m^{(\text{c} \leftarrow \text{r})} \times \prod_{j=1}^N \int_0^1 p(x_j | \pi_j) p(\pi_j) \, d\pi_j
    &= m^{(\text{c} \leftarrow \text{r})} \times \prod_{j=1}^N \frac{1}{\mathrm{B}(a,b)} \int_0^1 \pi_j^{a+x_j-1} (1-\pi_j)^{b+(1-x_j)-1} \, d\pi_j  \\
    &= m^{(\text{c} \leftarrow \text{r})} \times \prod_{j=1}^N \frac{\mathrm{B}(a+x_j, b+1-x_j)}{\mathrm{B}(a,b)} .
\end{split}
\end{equation}
This marginal is then projected onto the exponential family defined in \eqref{eq:exponential-family}, and is given by
\begin{equation}
    \argmin_{\vec{\eta}}\ \KL \biggl( m^{(\text{c} \leftarrow \text{r})} \times \prod_{j=1}^N \frac{\mathrm{B}(a+x_j, b+1-x_j)}{\mathrm{B}(a,b)} \,\Big\Vert\, \frac{1}{Z} \exp(\vec{x}^\top \vec{\eta}) \biggr)
    = \vec{\eta}^{(\text{c} \leftarrow \text{r})} + \ln \frac{\mathrm{B}(a+1,b)}{\mathrm{B}(a,b+1)} \cdot \vec{1}
\end{equation}
Finally, the message passed from the central cluster to the rightmost cluster, denoted as $m^{(\text{c} \to \text{r})}$, is computed as follows:
\begin{equation}
\begin{split}
    m^{(\text{c} \to \text{r})}
    &\propto \frac{\exp\Bigl( \vec{x}^\top \Bigl( \vec{\eta}^{(\text{c} \leftarrow \text{r})} + \ln \frac{\mathrm{B}(a+1,b)}{\mathrm{B}(a,b+1)} \cdot \vec{1} \Bigr) \Bigr)}{m^{(\text{c} \leftarrow \text{r})}}  \\
    &= \exp\Bigl( \vec{x}^\top \Bigl( \ln \frac{\mathrm{B}(a+1,b)}{\mathrm{B}(a,b+1)} \cdot \vec{1} \Bigr) \Bigr) .
\end{split}
\end{equation}
The message $m^{(\text{c} \to \text{r})}$ is multiplied with the rightmost cluster's initial potential and projected onto the exponential family defined in \eqref{eq:exponential-family}, resulting in
\begin{equation}  \label{eq:approx-message-passing-M-project}
    \vec{\eta}^{(\text{r})}
    = \argmin_{\vec{\eta}}\ \KL \biggl( \frac{1}{Z} \exp(\vec{x}^\top \vec{\eta}) \,\Big\Vert\, \exp(\vec{x}^\top \vec{\eta}^{(\text{c} \to \text{r})}) \prod_{l=1}^L p(\vec{y}^{(l)} | \vec{x}) \biggr) .
\end{equation}
This projection is chosen because the other projection
\begin{equation*}
    \argmin_{\vec{\eta}}\ \KL \biggl( \exp(\vec{x}^\top \vec{\eta}^{(\text{c} \to \text{r})}) \prod_{l=1}^L p(\vec{y}^{(l)} | \vec{x}) \,\Big\Vert\, \frac{1}{Z} \exp(\vec{x}^\top \vec{\eta}) \biggr)
\end{equation*}
is computationally intractable.
The minimization \eqref{eq:approx-message-passing-M-project} can be achieved by employing \cref{thm:mean-field-approx}. According to the update rule \eqref{eq:mean-field-update-rule},
\begin{equation}
\begin{split}
    \ln q(x_j)
    &= \expval_{\prod_{k \neq j} q(x_k)} \biggl[ \ln\, \Bigl( \exp(\vec{x}^\top \vec{\eta}^{(\text{c} \to \text{r})}) \prod_{l=1}^L p(\vec{y}^{(l)} | \vec{x}) \Bigr) \biggr] + \const  \\
    &= \expval_{\prod_{k \neq j} q(x_k)} \biggl[ \vec{x}^\top \vec{\eta}^{(\text{c} \to \text{r})} + \sum_{l=1}^L -\frac{1}{2} \beta^{(l)} \| \vec{y}^{(l)} - \mat{\Phi}^{(l)} \vec{x} \|_2^2 \biggr] + \const  \\
    &= \expval_{\prod_{k \neq j} q(x_k)} \biggl[ \sum_{l=1}^L -\frac{1}{2} \beta^{(l)} \| \vec{y}^{(l)} - \vec{\phi}_{:,j}^{(l)} x_j - \mat{\Phi}_{:,\sim j}^{(l)} \vec{x}_{\sim j} \|_2^2 + \vec{x}_{\sim j}^\top \vec{\eta}_{\sim j}^{(\text{c} \to \text{r})} + x_j \eta_j^{(\text{c} \to \text{r})}  \biggr] + \const  \\
    &= \sum_{l=1}^L -\frac{1}{2} \beta^{(l)} \Bigl[ \| \vec{y}^{(l)} - \vec{\phi}_{:,j}^{(l)} x_j - \mat{\Phi}_{:,\sim j}^{(l)} \hat{\vec{x}}_{\sim j} \|_2^2 + \trace( \diag( \hat{\vec{x}}_{\sim j} \odot (\vec{1} - \hat{\vec{x}}_{\sim j}) ) \mat{\Phi}_{:,\sim j}^{(l)^\top} \mat{\Phi}_{:,\sim j}^{(l)} ) \Bigr] + \hat{\vec{x}}_{\sim j}^\top \vec{\eta}_{\sim j}^{(\text{c} \to \text{r})} + x_j \eta_j^{(\text{c} \to \text{r})} + \const ,
\end{split}
\end{equation}
where
\begin{itemize}[nosep, leftmargin=*, label={}]
    \item $\vec{\phi}_{:,j}^{(l)}$ denotes the $j$th column of $\mat{\Phi}^{(l)}$,
    \item $\mat{\Phi}_{:,\sim j}^{(l)}$ is the matrix $\mat{\Phi}^{(l)}$ excluding the $j$th column,
    \item $\vec{x}_{\sim j}$ is the vector $\vec{x}$ without the $j$th entry,
    \item $\vec{\eta}_{\sim j}^{(\text{c} \to \text{r})}$ is the vector $\vec{\eta}^{(\text{c} \to \text{r})}$ without the $j$th entry, and
    \item $\hat{\vec{x}}_{\sim j}$ represents the expected value of $\vec{x}_{\sim j}$ with respect to $\prod_{k \neq j} q(x_k)$.
\end{itemize}
The probabilities of $x_j$ being $1$ and $0$ are
\begin{subequations}
\begin{align}
    \prob(x_j=1) &\propto \exp \biggl[ \sum_{l=1}^L -\frac{1}{2} \beta^{(l)} \Bigl[ \| \vec{y}^{(l)} - \vec{\phi}_{:,j}^{(l)} - \mat{\Phi}_{:,\sim j}^{(l)} \hat{\vec{x}}_{\sim j} \|_2^2 + \trace( \diag( \hat{\vec{x}}_{\sim j} \odot (\vec{1} - \hat{\vec{x}}_{\sim j}) ) \mat{\Phi}_{:,\sim j}^{(l)^\top} \mat{\Phi}_{:,\sim j}^{(l)} ) \Bigr] + \hat{\vec{x}}_{\sim j}^\top \vec{\eta}_{\sim j}^{(\text{c} \to \text{r})} + \eta_j^{(\text{c} \to \text{r})} \biggr] ,  \\
    \prob(x_j=0) &\propto \exp \biggl[ \sum_{l=1}^L -\frac{1}{2} \beta^{(l)} \Bigl[ \| \vec{y}^{(l)}   \hspace{2.4em}         - \mat{\Phi}_{:,\sim j}^{(l)} \hat{\vec{x}}_{\sim j} \|_2^2 + \trace( \diag( \hat{\vec{x}}_{\sim j} \odot (\vec{1} - \hat{\vec{x}}_{\sim j}) ) \mat{\Phi}_{:,\sim j}^{(l)^\top} \mat{\Phi}_{:,\sim j}^{(l)} ) \Bigr] + \hat{\vec{x}}_{\sim j}^\top \vec{\eta}_{\sim j}^{(\text{c} \to \text{r})}                      \biggr] .
\end{align}
\end{subequations}
Then, the mean of $x_j$ with respect to $q(x_j)$, denoted as $\hat{x}_j$, is given by
\begin{equation}
    \hat{x}_j = \frac{\prob(x_j=1)}{\prob(x_j=1) + \prob(x_j=0)} .
\end{equation}
After the belief of the rightmost cluster, denoted as $\frac{1}{Z} \exp(\vec{x}^\top \vec{\eta}^{(\text{r})})$, is computed, the parameters $\beta^{(l)}$ are updated using the \ac{EM} algorithm. The derivation for this update is similar to that presented in \cref{thm:expectation-maximization-instance}.
\begin{equation}  \label{eq:message-passing-derivation-last}
\begin{split}
    (\beta^{(l)})^{(\text{new})}
    &= \argmax_{\beta^{(l)}} \int p\Bigl( \vec{x} | \vec{y}^{(1)}, \dots, \vec{y}^{(L)}, (\beta^{(1)})^{(\text{old})}, \dots, (\beta^{(L)})^{(\text{old})} \Bigr) \ln p\Bigl( \vec{y}^{(l)}, \vec{x} | \beta^{(l)} \Bigr) \, d\vec{x}  \\
    &= \argmax_{\beta^{(l)}} \int p\Bigl( \vec{x} | \vec{y}^{(1)}, \dots, \vec{y}^{(L)}, (\beta^{(1)})^{(\text{old})}, \dots, (\beta^{(L)})^{(\text{old})} \Bigr) \ln p\Bigl( \vec{y}^{(l)} | \vec{x}, \beta^{(l)} \Bigr) \, d\vec{x}  \\
    &= \argmax_{\beta^{(l)}} \int \exp(\vec{x}^\top \vec{\eta}^{(\text{r})}) \ln \left( \biggl( \frac{\beta^{(l)}}{2\pi} \biggr)^{M/2} \exp \biggl[ -\frac{1}{2} \beta^{(l)} \| \vec{y}^{(l)} - \mat{\Phi}^{(l)} \vec{x} \|_2^2 \biggr] \right) d\vec{x}  \\
    &= \frac{M}{\| \vec{y}^{(l)} - \mat{\Phi}^{(l)} \hat{\vec{x}} \|_2^2 + \trace( \diag( \hat{\vec{x}} \odot (\vec{1} - \hat{\vec{x}}) ) \mat{\Phi}^{(l)^\top} \mat{\Phi}^{(l)} )} ,
\end{split}
\end{equation}
where $\hat{\vec{x}}$ denotes the expected value of $\frac{1}{Z} \exp(\vec{x}^\top \vec{\eta}^{(\text{r})})$.
The derivations from \eqref{eq:message-passing-derivation-first} through \eqref{eq:message-passing-derivation-last} are summarized in \cref{alg:approx-message-passing}.\\
\begin{algorithm}[H]
    \caption{Approximate message passing for binary vector recovery}
    \label{alg:approx-message-passing}
    \SetKwInOut{Input}{Input}
    \SetKwInOut{Init}{Initialize}
    \SetKwInOut{Output}{Output}
    \SetKwRepeat{Do}{do}{while}

    \Input{$\mat{\Phi}^{(1)}, \dots, \mat{\Phi}^{(L)}, \vec{y}^{(1)}, \dots, \vec{y}^{(L)}$}
    \Init{$\beta^{(l)} = \beta_0 \ \ \forall l=1,\dots,L$}
    \Do{$\hat{\vec{x}}$ not converged}{
        $\vec{\eta}^{(\text{c} \to \text{r})} = \ln \frac{\mathrm{B}(a+1,b)}{\mathrm{B}(a,b+1)} \cdot \vec{1}$  \\
        $\hat{\vec{x}} \gets \vec{1} / ( \vec{1} + \exp( -\vec{\eta}^{(\text{c} \to \text{r})} ) )$  \\
        \Do{$\hat{\vec{x}}$ not converged}{
            \ForEach{$j = 1, \dots, N$}{
                $\xi_{j0} \gets \sum_{l=1}^L -\frac{1}{2} \beta^{(l)} \| \vec{y}^{(l)}   \hspace{2.4em}         - \mat{\Phi}_{:,\sim j}^{(l)} \hat{\vec{x}}_{\sim j} \|_2^2                     $  \\
                $\xi_{j1} \gets \sum_{l=1}^L -\frac{1}{2} \beta^{(l)} \| \vec{y}^{(l)} - \vec{\phi}_{:,j}^{(l)} - \mat{\Phi}_{:,\sim j}^{(l)} \hat{\vec{x}}_{\sim j} \|_2^2 + \eta_j^{(\text{c} \to \text{r})}$  \\
                $\hat{x}_j \gets 1 / ( 1 + \exp(\xi_{j0} - \xi_{j1}))$  \\
            }
        }
        \ForEach{$l = 1, \dots, L$}{
            $\beta^{(l)} \gets  M / \bigl( \| \vec{y}^{(l)} - \mat{\Phi}^{(l)} \hat{\vec{x}} \|_2^2 + \trace( \diag( \hat{\vec{x}} \odot (\vec{1} - \hat{\vec{x}}) ) \mat{\Phi}^{(l)^\top} \mat{\Phi}^{(l)} ) \bigr)$  \\
        }
    }
    \Output{$\hat{\vec{x}}$}
\end{algorithm}
\noindent
In \cref{alg:convex-optimization}, the process begins by initializing all $\beta^{(l)}$ values to an initial guess $\beta_0$.
Furthermore, lines 3 through 10 computes to projection of $\exp(\vec{x}^\top \vec{\eta}^{(\text{c} \to \text{r})}) \prod_{l=1}^L p(\vec{y}^{(l)} | \vec{x})$ onto the exponential family defined in \eqref{eq:exponential-family}.

\section{Validation of the algorithms}
\label{sec:algorithmic-experiment}

\begin{table}
    \caption{List of algorithms.}
    \label{tbl:algorithms}
    \centering
    \begin{tabular}{l|l|l}
        \thickhline
        & Description & Computational \\
        &             & complexity    \\
        \semithickhline
        \cref{alg:convex-optimization}    & {\small Convex optimization for binary vector recovery        } & $O(N^3)$   \\
        \cref{alg:mean-field-approx}      & {\small Mean field approximation for binary vector recovery   } & $O(NML)$   \\
        \cref{alg:approx-message-passing} & {\small Approximate message passing for binary vector recovery} & $O(N^2ML)$ \\
        Split Bregman \cite{RN93}         & {\small Convex optimization for sparse vector recovery        } & $O(N^2)$   \\
        \ac{SBL} \cite{RN211}             & {\small Bayesian inference for sparse vector recovery         } & $O(NM^2)$  \\
        \thickhline
    \end{tabular}
\end{table}

\cref{tbl:algorithms} lists the binary vector recovery algorithms developed in this article, alongside selected sparse vector recovery algorithms from the literature. The split Bregman method \cite{RN93} uses convex optimization to solve the $\ell_1$ minimization problem and is an instance of the \ac{ADMM} \cite{RN1311}. Among the various variants of the \ac{SBL} algorithm, the one included here, based on \cite{RN211}, does not impose any prior correlation between the entries. \cref{tbl:algorithms} also includes the per-iteration computational complexity of each algorithm.

Each algorithm is tested on its ability to recover binary vectors when the measurement matrices are Gaussian random. Gaussian random matrices are chosen because, as per the Johnson-Lindenstrauss lemma, they satisfy the \ac{RIP}. In each experiment with parameters $N$, $M$, $L$, and $s$:
\begin{enumerate}[nosep, leftmargin=*]
    \item Generate matrices $\mat{\Phi}^{(l)} \in \mathbb{R}^{M \times N}$ for $l=1,\dots,L$ with entries drawn from $\gaussdist(0,1)$.
    \item Create a random binary vector $\vec{x}_\text{true} \in \{0,1\}^N$ containing exactly $s$ ones.
    \item Compute $\vec{y}^{(l)} = \mat{\Phi}^{(l)} \vec{x}_\text{true} + \vec{n}^{(l)}$ for all $l=1,\dots,L$, where $\vec{n}^{(l)}$ is an \ac{IID} Gaussian noise with a specified \ac{SNR}.
\end{enumerate}
The objective is to recover $\vec{x}_\text{true}$ using each algorithm. Performance is evaluated using the \ac{IoU} metric, defined as:
\begin{equation*}  \label{eq:IoU}
    \text{\ac{IoU}} = \frac{\left| \support{\vec{x}} \cap \support{\vec{x}_\text{true}} \right|}{\left| \support{\vec{x}} \cup \support{\vec{x}_\text{true}} \right|} ,
\end{equation*}
where $\support{\vec{x}}$ denotes the set of indices of nonzero entries in $\vec{x}$, and $\left|\cdot\right|$ denotes the cardinality of a set.
The \ac{IoU} metric is well-suited for comparing binary vectors, where a value of one indicates exact recovery, and a value of zero indicates no similarity.

The first set of experiments examine the simple case where $L=1$ and no noise is added to the measurement vector. Each combination of parameter values is tested over 1000 trials, with the average \ac{IoU} recorded. Results are presented in \cref{fig:IoU-single-measure}.

\cref{fig:IoU-single-measure-vary-M} shows the results when $s$ is fixed and $M$ varies. The \ac{IoU} increases as the number of measurements grows. \cref{alg:convex-optimization} consistently achieves the highest \ac{IoU}. At very low $M$, \cref{alg:mean-field-approx} outperforms both the split Bregman method and \ac{SBL}. As $M$ increases slightly, the split Bregman method and \ac{SBL} surpass \cref{alg:mean-field-approx} in \ac{IoU}, with all algorithms eventually reaching an \ac{IoU} of one. In contrast, \cref{alg:approx-message-passing} shows the lowest performance across all $M$ values.

\cref{fig:IoU-single-measure-vary-k} shows the results when $M$ is fixed and $s$ varies. As $s$ increases and making the vector less sparse, the \ac{IoU} decreases. Both \cref{alg:convex-optimization} and \cref{alg:mean-field-approx} outperform the sparse vector recovery algorithms. Although \cref{alg:approx-message-passing} has a lower \ac{IoU} at small $s$, it eventually surpasses the sparse vector recovery algorithms as $s$ increases. This suggests that, when binary constraints are applicable, binary vector recovery algorithms should be preferred over sparse recovery methods for improved performance.

\begin{figure}
    \centering
    \subcaptionbox{\label{fig:IoU-single-measure-vary-M}}{
        \begin{tikzpicture}

\begin{axis}[%
width=0.49\columnwidth,
height=0.35\columnwidth,
xmin=0.1,
xmax=0.9,
xlabel={$M/N$},
ymin=0,
ymax=1,
ylabel={\hyperref[eq:IoU]{IoU}},
xmajorgrids,
ymajorgrids,
tick label style={
  /pgf/number format/fixed
},
legend style={
  anchor=south east,
  at={(0.99,0.01)},
  fill opacity=0.7,
  text opacity=1
}
]

\addplot+ [mark=x]
  table[row sep=crcr]{%
0.1	0.205572961943701\\
0.12	0.284339275180032\\
0.14	0.415427031008568\\
0.16	0.705262889071891\\
0.18	0.925913055054119\\
0.2	0.992211496217703\\
0.22	1\\
0.24	1\\
0.26	1\\
0.28	1\\
0.3	1\\
0.32	1\\
0.34	1\\
0.36	1\\
0.38	1\\
0.4	1\\
0.5	1\\
0.6	1\\
0.7	1\\
0.8	1\\
0.9	1\\
};
\addlegendentry{\cref{alg:convex-optimization}}

\addplot+ [mark=x]
table[row sep=crcr]{%
0.1	0.182594828916021\\
0.12	0.232865980764085\\
0.14	0.293513743220171\\
0.16	0.390363780773841\\
0.18	0.556293754459154\\
0.2	0.730189580662104\\
0.22	0.858194614072164\\
0.24	0.935107849674678\\
0.26	0.975885678597335\\
0.28	0.986530616627866\\
0.3	0.992501615037545\\
0.32	0.997641469458274\\
0.34	1\\
0.36	1\\
0.38	1\\
0.4	1\\
0.5	1\\
0.6	1\\
0.7	1\\
0.8	1\\
0.9	1\\
};
\addlegendentry{\cref{alg:mean-field-approx}}

\addplot+ [mark=x]
  table[row sep=crcr]{%
0.1	0.0616754887424489\\
0.12	0.0658761477866182\\
0.14	0.0696620648344616\\
0.16	0.0730777214719684\\
0.18	0.080968831310227\\
0.2	0.0869703964238566\\
0.22	0.100117254971148\\
0.24	0.131841336315998\\
0.26	0.160110220326674\\
0.28	0.220014961562648\\
0.3	0.330434230821849\\
0.32	0.457399177371662\\
0.34	0.5827658778974\\
0.36	0.72409983649313\\
0.38	0.806253979317499\\
0.4	0.879920696009061\\
0.5	0.995145627545325\\
0.6	1\\
0.7	1\\
0.8	1\\
0.9	1\\
};
\addlegendentry{\cref{alg:approx-message-passing}}

\addplot+ [mark=x]
  table[row sep=crcr]{%
0.1	0.138492777774972\\
0.12	0.178978163620388\\
0.14	0.223621884305264\\
0.16	0.30728943367978\\
0.18	0.441254414117481\\
0.2	0.703642654670341\\
0.22	0.923440857648099\\
0.24	0.9912150997151\\
0.26	0.99964\\
0.28	0.99972\\
0.3	1\\
0.32	1\\
0.34	1\\
0.36	1\\
0.38	1\\
0.4	1\\
0.5	1\\
0.6	1\\
0.7	1\\
0.8	1\\
0.9	1\\
};
\addlegendentry{Split Bregman}

\addplot+ [mark=x]
  table[row sep=crcr]{%
0.1	0.126157308732372\\
0.12	0.17376112477887\\
0.14	0.231475078709347\\
0.16	0.31626089709293\\
0.18	0.536000144619171\\
0.2	0.821552314957711\\
0.22	0.97419530830394\\
0.24	0.998826210826211\\
0.26	1\\
0.28	1\\
0.3	1\\
0.32	1\\
0.34	1\\
0.36	1\\
0.38	1\\
0.4	1\\
0.5	1\\
0.6	1\\
0.7	1\\
0.8	1\\
0.9	1\\
};
\addlegendentry{\ac{SBL}}

\end{axis}

\end{tikzpicture}%
    } \hfill
    \subcaptionbox{\label{fig:IoU-single-measure-vary-k}}{
        \begin{tikzpicture}

\begin{axis}[%
width=0.49\columnwidth,
height=0.35\columnwidth,
xmin=0.0,
xmax=0.4,
xlabel={$s/N$},
ymin=0,
ymax=1,
ylabel={\hyperref[eq:IoU]{IoU}},
xmajorgrids,
ymajorgrids,
tick label style={
  /pgf/number format/fixed
},
legend style={
  anchor=south west,
  at={(0.01,0.01)},
  fill opacity=0.7,
  text opacity=1
}
]

\addplot+ [mark=x]
  table[row sep=crcr]{%
0.002	1\\
0.022	1\\
0.042	1\\
0.062	1\\
0.082	1\\
0.102	1\\
0.122	1\\
0.142	1\\
0.162	1\\
0.182	1\\
0.202	1\\
0.222	1\\
0.242	1\\
0.262	1\\
0.282	0.9997668997669\\
0.302	0.998792724963969\\
0.322	0.99419051634343\\
0.342	0.985236040836652\\
0.362	0.969766199149501\\
0.382	0.956463392128645\\
0.402	0.926277024454523\\
};
\addlegendentry{\cref{alg:convex-optimization}}

\addplot+ [mark=x]
  table[row sep=crcr]{%
0.002	1\\
0.022	1\\
0.042	1\\
0.062	1\\
0.082	1\\
0.102	1\\
0.122	0.999571856714714\\
0.142	0.998664714626818\\
0.162	0.994829939564512\\
0.182	0.981668051969071\\
0.202	0.96077389520089\\
0.222	0.931317275056367\\
0.242	0.892518076816223\\
0.262	0.829942526251149\\
0.282	0.779426900110177\\
0.302	0.747989147651833\\
0.322	0.70856918644156\\
0.342	0.685674196142733\\
0.362	0.664122492207269\\
0.382	0.654360935942796\\
0.402	0.642077527199933\\
};
\addlegendentry{\cref{alg:mean-field-approx}}

\addplot+ [mark=x]
  table[row sep=crcr]{%
0.002	0.999030381383323\\
0.022	0.998178632961242\\
0.042	0.990871218366034\\
0.062	0.986024404122795\\
0.082	0.968186420935023\\
0.102	0.943762159168885\\
0.122	0.924031131091067\\
0.142	0.855293154779878\\
0.162	0.785948710253338\\
0.182	0.723950215799792\\
0.202	0.6544446408804\\
0.222	0.598831425491042\\
0.242	0.55815674698141\\
0.262	0.551760575095592\\
0.282	0.524071190608154\\
0.302	0.51456647569499\\
0.322	0.511200643959119\\
0.342	0.510552832911586\\
0.362	0.497102598926513\\
0.382	0.49504568996778\\
0.402	0.484078362970572\\
};
\addlegendentry{\cref{alg:approx-message-passing}}

\addplot+ [mark=x]
  table[row sep=crcr]{%
0.002	1\\
0.022	1\\
0.042	1\\
0.062	1\\
0.082	1\\
0.102	1\\
0.122	1\\
0.142	1\\
0.162	0.996635926127014\\
0.182	0.92603394808854\\
0.202	0.734494272210229\\
0.222	0.573088474307083\\
0.242	0.509637049491155\\
0.262	0.472464568721457\\
0.282	0.444985359895592\\
0.302	0.425815383267195\\
0.322	0.410136832640816\\
0.342	0.397864597842434\\
0.362	0.387174460953712\\
0.382	0.38046687371664\\
0.402	0.373578431073521\\
};
\addlegendentry{Split Bregman}

\addplot+ [mark=x]
  table[row sep=crcr]{%
0.002	1\\
0.022	1\\
0.042	1\\
0.062	1\\
0.082	1\\
0.102	1\\
0.122	1\\
0.142	1\\
0.162	0.998537098359936\\
0.182	0.962229572059696\\
0.202	0.751233895705961\\
0.222	0.531660870561393\\
0.242	0.467801251639504\\
0.262	0.433588127697741\\
0.282	0.40833815984365\\
0.302	0.38964053344654\\
0.322	0.376062473434354\\
0.342	0.362731639202968\\
0.362	0.352708566574119\\
0.382	0.342654879026933\\
0.402	0.335762675272416\\
};
\addlegendentry{\ac{SBL}}

\end{axis}

\end{tikzpicture}%
    }
    \caption{Algorithm performance for $L=1$ and $N=500$. (a) $s=25$. (b) $M=250$.}
    \label{fig:IoU-single-measure}
\end{figure}

The second set of experiments examine the case where there are multiple measurement vectors and unknown noise variance. Specifically, the tests are conducted with $L=3$, and noise levels corresponding to SNR values of 29~dB, 30~dB, and 31~dB are added to the measurement vectors. As before, 1000 trials are conducted for each combination of parameter values, and the average \ac{IoU} is recorded. The results are shown in \cref{fig:IoU-multiple-measure}. Notably, the split Bregman method and \ac{SBL} are excluded from these experiments because they do not support multiple measurement vectors.

\cref{fig:IoU-multiple-measure-vary-M} illustrates the results for varying $M$ with a fixed $s$, the \ac{IoU} increases as the number of measurements grows. \cref{fig:IoU-multiple-measure-vary-k} shows the results for varying $s$ with a fixed $M$, the \ac{IoU} decreases as the vector become less sparse.
Among the three algorithms tested, \cref{alg:approx-message-passing} exhibits the lowest performance. This is likely due to fundamental differences in the approaches. Both \cref{alg:convex-optimization} and \cref{alg:mean-field-approx} incorporate optimization: \cref{alg:convex-optimization} optimizes over the convex hull of the binary constraint, while \cref{alg:mean-field-approx} minimizes the \ac{KL}-divergence over a class of distributions. In contrast, \cref{alg:approx-message-passing} employs approximate message passing, which do not explicitly optimize any objective function but instead operates by propagating messages.
Notably, \cref{alg:convex-optimization} consistently outperforms \cref{alg:mean-field-approx}. This is likely because \cref{alg:mean-field-approx} uses mean field approximation that neglects the correlation between the posterior probabilities of entries, resulting in biased solutions. In contrast, \cref{alg:convex-optimization} benefits from the recovery guarantees provided in \cref{thm:tight-convex}, assuming the vector is sufficiently sparse. However, \cref{alg:convex-optimization} has a much higher computational complexity compared to \cref{alg:mean-field-approx}.

\begin{figure}
    \centering
    \subcaptionbox{\label{fig:IoU-multiple-measure-vary-M}}{
        \begin{tikzpicture}

\begin{axis}[%
width=0.49\columnwidth,
height=0.35\columnwidth,
xmin=0.034,
xmax=0.3,
xlabel={$M/N$},
ymin=0,
ymax=1,
ylabel={\hyperref[eq:IoU]{IoU}},
xmajorgrids,
ymajorgrids,
tick label style={
  /pgf/number format/fixed
},
legend style={
  anchor=south east,
  at={(0.99,0.01)},
  fill opacity=0.7,
  text opacity=1
}
]

\addplot+ [mark=x]
table[row sep=crcr]{%
0.034	0.227016840757282\\
0.04	0.290173218448218\\
0.046	0.389086513486513\\
0.054	0.641382421744922\\
0.06	0.763620891608392\\
0.066	0.871894605394605\\
0.074	0.972654545454545\\
0.08	0.988527272727273\\
0.086	0.993354545454545\\
0.094	0.9993\\
0.1	  1\\
0.106	0.9998\\
0.114	1\\
0.12	1\\
0.126	0.9999\\
0.134	1\\
0.166	1\\
0.2 	1\\
0.234	1\\
0.266	1\\
0.3  	1\\
};
\addlegendentry{\cref{alg:convex-optimization}}

\addplot+ [mark=x]
  table[row sep=crcr]{%
0.034	0.182568759205403\\
0.04	0.230424108003783\\
0.046	0.282681269347825\\
0.054	0.362541992156796\\
0.06	0.457172804951066\\
0.066	0.588198512339428\\
0.074	0.744784102302565\\
0.08	0.830834887901835\\
0.086	0.898592226508983\\
0.094	0.944143519676009\\
0.1	  0.971332457886328\\
0.106	0.984186204777018\\
0.114	0.988134280482951\\
0.12	0.993961383862274\\
0.126	0.997170023404779\\
0.134	0.998708360197722\\
0.166	0.999578947368421\\
0.2 	1\\
0.234	1\\
0.266	1\\
0.3  	1\\
};
\addlegendentry{\cref{alg:mean-field-approx}}

\addplot+ [mark=x]
  table[row sep=crcr]{%
0.034 0.0575690340711466\\
0.04  0.0605930303911313\\
0.046 0.0629941691519551\\
0.054 0.0661214371382199\\
0.06  0.0669388154184892\\
0.066 0.0725477911153037\\
0.074 0.0808696799480666\\
0.08  0.0842100216061126\\
0.086 0.0917951684213043\\
0.094 0.107469675687401\\
0.1	  0.120484775962916\\
0.106 0.149263088727582\\
0.114 0.222760263245951\\
0.12  0.270609217508441\\
0.126 0.282134752319574\\
0.134 0.430920668533086\\
0.166 0.759666638615489\\
0.2   0.952672409746974\\
0.234 0.925198601548825\\
0.266 0.863491914189405\\
0.3   0.903106392401794\\
};
\addlegendentry{\cref{alg:approx-message-passing}}

\end{axis}

\end{tikzpicture}%
    } \hfill
    \subcaptionbox{\label{fig:IoU-multiple-measure-vary-k}}{
        \begin{tikzpicture}

\begin{axis}[%
width=0.49\columnwidth,
height=0.35\columnwidth,
xmin=0.0,
xmax=0.4,
xlabel={$s/N$},
ymin=0,
ymax=1,
ylabel={\hyperref[eq:IoU]{IoU}},
xmajorgrids,
ymajorgrids,
tick label style={
  /pgf/number format/fixed
},
legend style={
  anchor=south west,
  at={(0.01,0.01)},
  fill opacity=0.7,
  text opacity=1
}
]

\addplot+ [mark=x]
table[row sep=crcr]{%
0.002	1\\
0.022	1\\
0.042	1\\
0.062	1\\
0.082	0.999412352353529\\
0.102	0.997478711764426\\
0.122	0.988654422500576\\
0.142	0.970765163646142\\
0.162	0.932131576707589\\
0.182	0.883661426509325\\
0.202	0.821456717151912\\
0.222	0.773548628835479\\
0.242	0.711957237161148\\
0.262	0.673594736427562\\
0.282	0.632897189602857\\
0.302	0.609481669476311\\
0.322	0.584555863711491\\
0.342	0.560285233576178\\
0.362	0.551532622163253\\
0.382	0.539576367998435\\
0.402	0.529415895719613\\
};
\addlegendentry{\cref{alg:convex-optimization}}

\addplot+ [mark=x]
  table[row sep=crcr]{%
0.002	1\\
0.022	1\\
0.042	1\\
0.062	0.999631947000368\\
0.082	0.994192208491199\\
0.102	0.948825266245989\\
0.122	0.843537562267124\\
0.142	0.735768262553648\\
0.162	0.66273899646125\\
0.182	0.61711549227838\\
0.202	0.598774828553239\\
0.222	0.585308596247557\\
0.242	0.577224559532369\\
0.262	0.567364744739791\\
0.282	0.566479853629206\\
0.302	0.564338962962388\\
0.322	0.562396415555376\\
0.342	0.562376095190142\\
0.362	0.561783265755956\\
0.382	0.561595796236551\\
0.402	0.561436097113004\\
};
\addlegendentry{\cref{alg:mean-field-approx}}

\addplot+ [mark=x]
  table[row sep=crcr]{%
0.002	0.918410777187615\\
0.022	0.85527908198254\\
0.042	0.756571316948114\\
0.062	0.730285061924257\\
0.082	0.617351457510212\\
0.102	0.576344951389011\\
0.122	0.541095569063621\\
0.142	0.484383235639026\\
0.162	0.467431708814725\\
0.182	0.4538823331009\\
0.202	0.4353424414341\\
0.222	0.434788647865595\\
0.242	0.422030914131262\\
0.262	0.417057769756764\\
0.282	0.411429348090782\\
0.302	0.408022996863993\\
0.322	0.401443086230148\\
0.342	0.395093484647475\\
0.362	0.382542979572083\\
0.382	0.381766419478042\\
0.402	0.371681543276872\\
};
\addlegendentry{\cref{alg:approx-message-passing}}

\end{axis}

\end{tikzpicture}%
    }
    \caption{Algorithm performance for $L=3$ and $N=500$. (a) $s=25$. (b) $M=83$.}
    \label{fig:IoU-multiple-measure}
\end{figure}

\section{Application to model based defect imaging}
\label{sec:application}

A common sensing method for detecting structural defects is eddy current sensing. This section demonstrates how binary vector recovery can be applied to these imaging problems.

\subsection{Formulating the defect imaging problem as a binary vector recovery problem}
An eddy current system, where the fields oscillate at an angular frequency $\omega$, is described by the following Maxwell's equations:
\begin{subequations}  \label{eq:electromagnetic-system1}
\begin{align}
    \nabla \times \vec{H} &= \sigma \vec{E} + \vec{J}_s ,  \\
    \nabla \times \vec{E} &= -j \omega \mu \vec{H} ,
\end{align}
\end{subequations}
where $\vec{E}$ is the electric field, $\vec{H}$ the magnetic field intensity, $\vec{J}_s$ the source electric current density, $\sigma$ is the electrical conductivity field, and $\mu$ is the magnetic permeability field.
The \ac{MFD} component measured by a magnetic sensor is given by
\begin{equation}  \label{eq:MFD-measurement}
    B = \int_V \mu_0 \vec{H} \cdot \hat{\vec{n}}_\text{sensor} \, \delta(\vec{r} - \vec{r}_\text{sensor}) \, dV,
\end{equation}
where $\mu_0$ is the vacuum magnetic permeability, $\hat{\vec{n}}_\text{sensor}$ is the sensor's sensing axis, $\vec{r}_\text{sensor}$ is the sensor's position, and $\delta(\cdot)$ is the Dirac delta function.

\begin{theorem}  \label{thm:eddy-current-linearize}
    The perturbation in the \ac{MFD} component measurement, denoted by $\delta B$, is related to the perturbations in the electrical conductivity field, $\delta \sigma$, and the magnetic permeability field, $\delta \mu$, as follows:
    \begin{equation}  \label{eq:measurement-perturb}
        \delta B = \int_V \bigl( (\vec{E} \cdot \acute{\vec{E}}) \, \delta\sigma + (-j \omega \vec{H} \cdot \acute{\vec{H}}) \, \delta\mu \bigr) \, dV,
    \end{equation}
    where $(\vec{E}, \vec{H})$ represents the electromagnetic field governed by \eqref{eq:electromagnetic-system1}, and $(\acute{\vec{E}}, \acute{\vec{H}})$ represents the electromagnetic field governed by
    \begin{subequations}  \label{eq:electromagnetic-system2}
    \begin{align}
        \nabla \times \acute{\vec{H}} &= \sigma \acute{\vec{E}},  \\
        \nabla \times \acute{\vec{E}} &= -j \omega \mu \acute{\vec{H}} + \mu_0 \hat{\vec{n}}_\mathrm{sensor} \, \delta(\vec{r} - \vec{r}_\mathrm{sensor}).
    \end{align}
    \end{subequations}
\end{theorem}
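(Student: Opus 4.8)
The plan is to identify \eqref{eq:measurement-perturb} as a first-order sensitivity relation and to derive it through a Lorentz-type reciprocity argument that pairs the field perturbation with the adjoint field defined by \eqref{eq:electromagnetic-system2}. The essential observation is that the source driving the adjoint system, $\mu_0\hat{\vec{n}}_\mathrm{sensor}\,\delta(\vec{r}-\vec{r}_\mathrm{sensor})$, is precisely the sensing kernel appearing in \eqref{eq:MFD-measurement}, so a reciprocity identity converts the measurement perturbation into a volume integral over the material perturbations.

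First I would linearize. Writing $\sigma \to \sigma + \delta\sigma$ and $\mu \to \mu + \delta\mu$ induces field responses $\vec{E} \to \vec{E} + \delta\vec{E}$ and $\vec{H} \to \vec{H} + \delta\vec{H}$; substituting into \eqref{eq:electromagnetic-system1}, subtracting the unperturbed equations, and discarding the second-order products $\delta\sigma\,\delta\vec{E}$ and $\delta\mu\,\delta\vec{H}$ yields the linearized system $\nabla\times\delta\vec{H} = \sigma\,\delta\vec{E} + \delta\sigma\,\vec{E}$ and $\nabla\times\delta\vec{E} = -j\omega\mu\,\delta\vec{H} - j\omega\,\delta\mu\,\vec{H}$. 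The same linearization of \eqref{eq:MFD-measurement} gives $\delta B = \int_V \mu_0\,\delta\vec{H}\cdot\hat{\vec{n}}_\mathrm{sensor}\,\delta(\vec{r}-\vec{r}_\mathrm{sensor})\,dV$, since the measurement functional is linear in $\vec{H}$.

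Next I would apply the vector identity $\nabla\cdot(\vec{A}\times\vec{B}) = \vec{B}\cdot(\nabla\times\vec{A}) - \vec{A}\cdot(\nabla\times\vec{B})$ to the combination $\delta\vec{E}\times\acute{\vec{H}} - \acute{\vec{E}}\times\delta\vec{H}$. Inserting the four curls, two from the linearized system above and two from the adjoint system \eqref{eq:electromagnetic-system2}, the reciprocal reaction terms $j\omega\mu\,(\delta\vec{H}\cdot\acute{\vec{H}})$ and $\sigma\,(\delta\vec{E}\cdot\acute{\vec{E}})$ appear with opposite signs and cancel, leaving
\begin{equation*}
\nabla\cdot(\delta\vec{E}\times\acute{\vec{H}} - \acute{\vec{E}}\times\delta\vec{H}) = (\vec{E}\cdot\acute{\vec{E}})\,\delta\sigma + (-j\omega\,\vec{H}\cdot\acute{\vec{H}})\,\delta\mu - \mu_0\,\delta\vec{H}\cdot\hat{\vec{n}}_\mathrm{sensor}\,\delta(\vec{r}-\vec{r}_\mathrm{sensor}).
\end{equation*}
Integrating over $V$, the divergence theorem turns the left-hand side into a boundary flux, while the last volume term on the right is exactly $-\delta B$; once the boundary flux is shown to vanish, rearranging reproduces \eqref{eq:measurement-perturb}.

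The main obstacle is justifying that the boundary term $\oint_{\partial V}(\delta\vec{E}\times\acute{\vec{H}} - \acute{\vec{E}}\times\delta\vec{H})\cdot d\vec{S}$ vanishes. I would address this by taking $V$ large enough to enclose all sources and all material perturbations and letting $\partial V$ recede to infinity, where the Silver--M\"uller radiation condition forces the cross-product flux to decay faster than $1/r^2$; on a bounded domain the same conclusion follows from standard perfect-conductor or impedance boundary conditions that make the surviving tangential cross products cancel. A secondary point worth flagging is that the identity holds only to first order in the perturbations, so \eqref{eq:measurement-perturb} should be read as the Fr\'echet derivative valid for small $\delta\sigma$ and $\delta\mu$, precisely the linear regime underlying the binary-shift measurement model used throughout the paper.
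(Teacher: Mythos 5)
Your proposal is correct and follows essentially the same route as the paper's \ref{app:linearization-proof}: a first-order perturbation of the eddy-current system is paired with the adjoint field of \eqref{eq:electromagnetic-system2} via a Lorentz reciprocity identity, the sensor kernel of \eqref{eq:MFD-measurement} enters as the adjoint magnetic current source, and the boundary flux is discarded at infinity. The only difference is organizational---you linearize first and apply a single divergence identity to $\delta\vec{E}\times\acute{\vec{H}}-\acute{\vec{E}}\times\delta\vec{H}$, whereas the paper subtracts perturbed from unperturbed reciprocity relations with general adjoint sources $\acute{\vec{J}}_s,\acute{\vec{L}}_s$ and only at the end fixes them by matching $\partial f/\partial\vec{E}$ and $\partial f/\partial\vec{H}$---but the computation is the same.
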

\begin{proof}
    The proof is provided in \ref{app:linearization-proof}.
\end{proof}

To apply \cref{thm:eddy-current-linearize}, it is necessary to solve the electromagnetic fields governed by \eqref{eq:electromagnetic-system1} and \eqref{eq:electromagnetic-system2}. In several configurations, semi-analytical solutions are available. For example, for a metal plate with an excitation source above it, solutions can be found in \cite{RN1414}. For a metal pipe with an outer excitation source, solutions are provided in \cite{RN1284}. Additionally, for a metal pipe with an inner excitation source, semi-analytical solutions are derived in \ref{app:electromagnetic-field-pipe-semianalytical}.

Suppose the inspected metal structure is discretized into $N$ voxels, \eqref{eq:measurement-perturb} can be written as
\begin{equation}  \label{eq:measurement-perturb-discretize}
    \delta B = \sum_{j=1}^N \biggl( \Bigl( \int_{V_{\text{voxel}_j}} \vec{E} \cdot \acute{\vec{E}} \, dV \Bigr) \, \delta \sigma_j + \Bigl( \int_{V_{\text{voxel}_j}} -j \omega \vec{H} \cdot \acute{\vec{H}} \, dV \Bigr) \, \delta \mu_j \biggr) ,
\end{equation}
where $\delta \sigma_j$ and $\delta \mu_j$ represent voxel $j$'s change in electrical conductivity and magnetic permeability, respectively.
Consider the case where there are $M$ sensors, \eqref{eq:measurement-perturb-discretize} can be written in vector-matrix form:
\begin{equation}  \label{eq:relation-linearize}
    \Delta\vec{B} = \mat{S}_\sigma \Delta\vec{\sigma} + \mat{S}_\mu \Delta\vec{\mu} ,
\end{equation}
where $\Delta\vec{B}$ is a vector representing the change in \ac{MFD} component measurements from the $M$ sensors.

A structural defect can be viewed as a region within the structure where the electrical conductivity and magnetic permeability deviate from their nominal values to those of air. Thus, \eqref{eq:relation-linearize} is normalized as follows:
\begin{equation}
    \Delta\vec{B} = \mat{S}_\sigma \diag(-\vec{\sigma}) \diag(-\vec{\sigma})^{-1} \Delta\vec{\sigma} + \mat{S}_\mu \diag(\mu_0 \vec{1} - \vec{\mu}) \diag(\mu_0 \vec{1} - \vec{\mu})^{-1} \Delta\vec{\mu} ,
\end{equation}
where $\vec{\sigma}$ and $\vec{\mu}$ denote the nominal electrical conductivity and magnetic permeability of the voxels, respectively.
Both $\bigl( \diag(-\vec{\sigma})^{-1} \Delta\vec{\sigma} \bigr)$ and $\bigl( \diag(\mu_0 \vec{1} - \vec{\mu})^{-1} \Delta\vec{\mu} \bigr)$ form binary vectors that indicate voxel vacancies, and are denoted here as $\vec{x}$. This yields
\begin{equation}  \label{eq:relation-linearize-normalize}
    \underbrace{\Delta\vec{B}}_{\vec{y}}
    = \bigl( \underbrace{\mat{S}_\sigma \diag(-\vec{\sigma}) + \mat{S}_\mu \diag(\mu_0 \vec{1} - \vec{\mu})}_{\mat{\Phi}} \bigr) \vec{x} .
\end{equation}
In \eqref{eq:relation-linearize-normalize}, $\vec{y}$ and $\mat{\Phi}$ have complex entries, which can be separated into their real and imaginary parts. When considering multiple frequencies and the noise associated with the measurements, the following relation applies:
\begin{equation}  \label{eq:relation-linearize-normalize-seperate}
    \vec{y}^{(l)} = \mat{\Phi}^{(l)} \vec{x} + \vec{n}^{(l)} \quad \forall l = 1,\dots,L ,
\end{equation}
where $\vec{y}^{(l)}$ and $\mat{\Phi}^{(l)}$ represents the real and imaginary parts of $\vec{y}$ and $\mat{\Phi}$ across all frequencies, respectively. Hence, $L$ equals twice the number of frequencies.
By utilizing binary vector recovery algorithms to solve for a binary vector $\vec{x}$ that satisfies \eqref{eq:relation-linearize-normalize-seperate}, a computed tomography image of the defects is obtained.

\subsection{Imaging defects in metal plates}
The first example concerns the imaging of defects in a metal plate. Traditional methods typically involve scanning an eddy current probe across the plate surface, recording the response at each point to form an image, and then using image processing or neural network techniques to identify any defects. However, the resolution of the recorded image is constrained by the Nyquist sampling theorem. By using a model-based approach along with binary vector recovery algorithms, it is possible to significantly reduce the number of samples needed.

\cref{fig:plate-sensing-schematic} shows the example setup for imaging defects in an aluminum plate using an 8$\times$8 array of magnetic sensors spaced 4~mm apart. These sensors are positioned below the central region of a double-D-shaped coil with their sensing axis aligned  parallel to the current direction. This orientation minimizes the \ac{MFD} component they sense when the plate has no defects. When a defect is present, however, the altered distribution of eddy currents causes nearby sensors to sense an \ac{MFD} component. This arrangement enhances the system's sensitivity to defects. The imaging region covers a 30$\times$30 mm$^2$ area below the magnetic sensors and spans the full 2~mm thickness of the aluminum plate. This region is discretized into 60$\times$60$\times$4 voxels, as illustrated in \cref{fig:plate-sensing-schematic-zoomin}. Since the voxels span four layers, excitation frequencies of 1000, 2000, 4000, and 16000~Hz are selected by approximately matching the eddy current penetration depth with the depth of the layers. The linearized sensitivities for all voxels across all frequencies are precomputed as described in \cref{thm:eddy-current-linearize} and \eqref{eq:measurement-perturb-discretize}.

\begin{figure}
    \centering
    \subcaptionbox{\label{fig:plate-sensing-schematic}}{
        \begin{tikzpicture}
            \tikzstyle{indicator} = [color=yellow, -{Circle[length=3pt]}, thick]
            \tikzstyle{label} = [anchor=west, rounded corners, fill=yellow, fill opacity=0.4, text=black, text opacity=1, align=left]

            \node[inner sep=0] at (0,0) {\includegraphics[width=11cm]{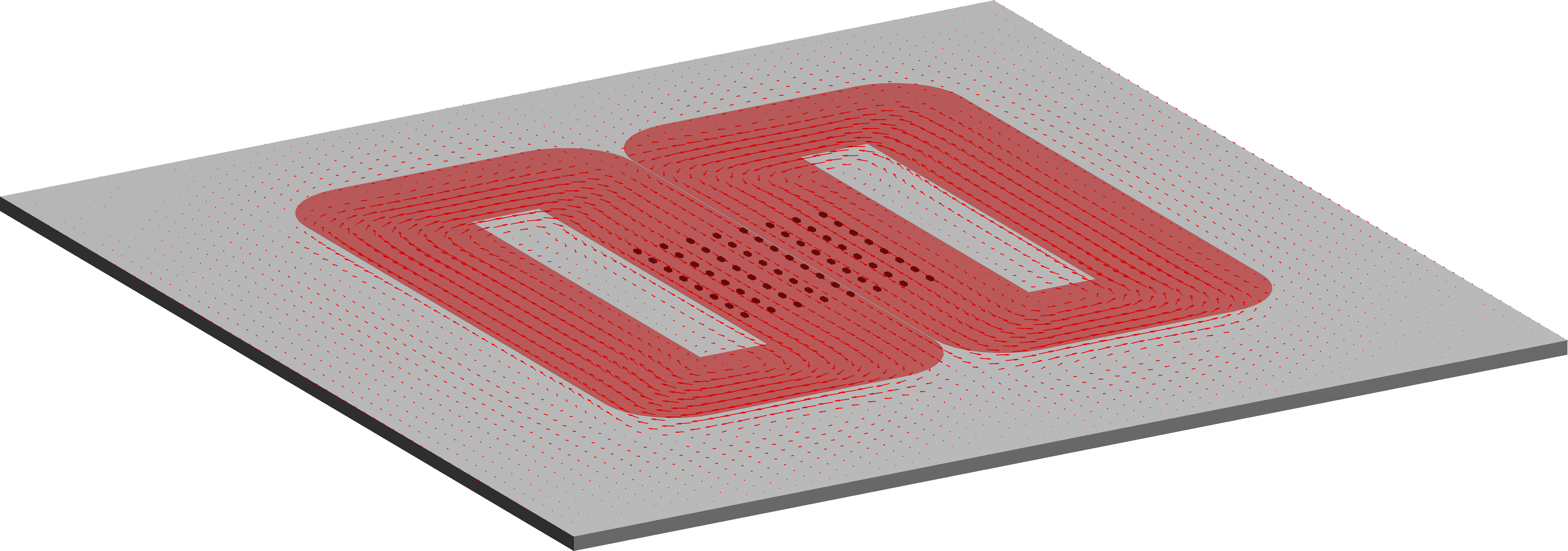}};

            \node[label] (label1) at (-4, 1.2) {Coil};
            \node[label] (label2) at (-4, 0.0) {Magnetic\\sensors};
            \node[label] (label3) at (-4,-1.2) {Metal plate};

            \draw[indicator] (label1.east) -- (-2.0, 0.7);
            \draw[indicator] (label2.east) -- (-0.5, 0.1);
            \draw[indicator] (label3.east) -- (-0.6,-1.7);
        \end{tikzpicture}
    } \hfill
    \subcaptionbox{\label{fig:plate-sensing-schematic-zoomin}}{
        \includegraphics[width=0.34\columnwidth]{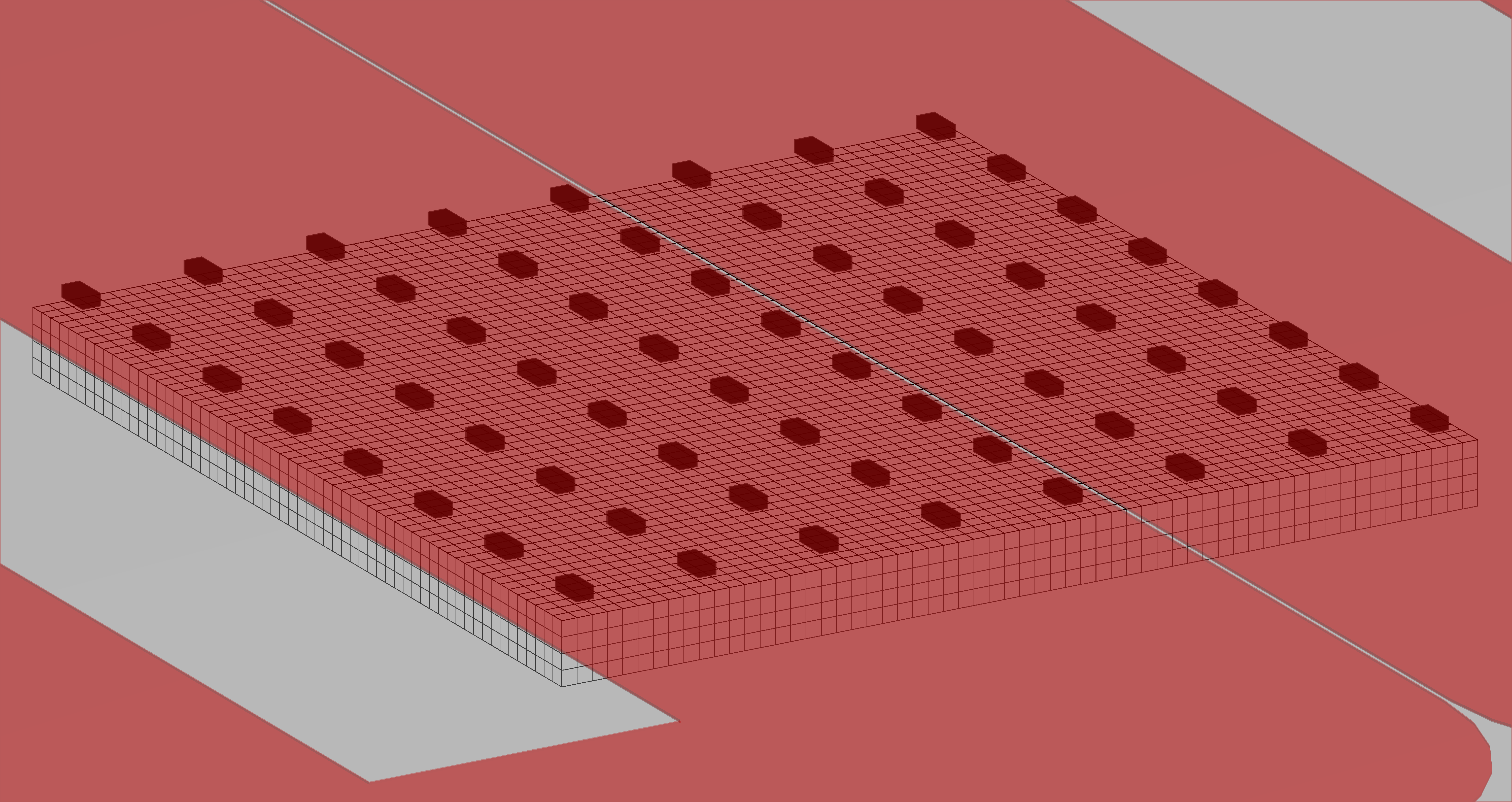}
    } \\
    \subcaptionbox{\label{fig:plate-sensing-simulation}}{
        \adjincludegraphics[width=11cm, trim={0 {0.23\height} 0 {0.23\height}}, clip]{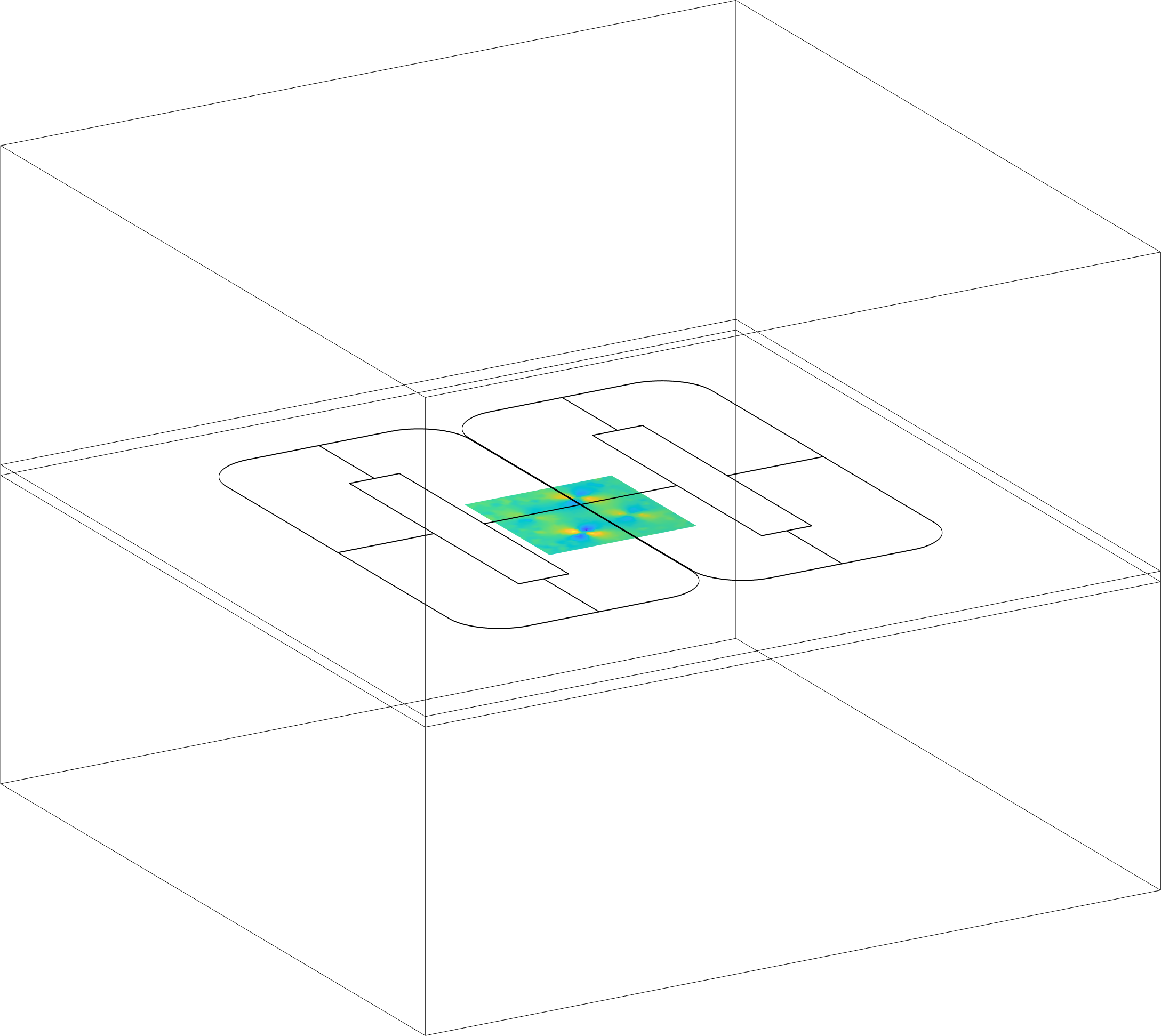}
    } \hfill
    \subcaptionbox{\label{fig:plate-sensing-MFD-diff}}{
        \begin{tikzpicture}
            \node[inner sep=0] at (0,0) {\includegraphics[height=4cm]{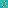}};
            \node[inner sep=0,draw,thick] at (2.3,0) {\includegraphics[height=4cm, width=2mm]{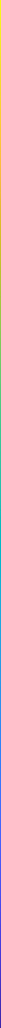}};
            \node[inner sep=0,anchor=east] at (2.9, 2.3) {\footnotesize ($\mu$T)};
            \node[inner sep=0,anchor=east] at (2.9, 2) {\footnotesize 2.5};
            \node[inner sep=0,anchor=east] at (2.9, 0) {\footnotesize 0\hspace*{2mm}};
            \node[inner sep=0,anchor=east] at (2.9,-2) {\footnotesize -2.5};
        \end{tikzpicture}
    }
    \caption{Eddy current sensing system for imaging defects in metal plates.
             (a) Full view with red arrows indicating the eddy currents.
             (b) Close-up view showing the voxels to be reconstructed.
             (c) Finite element simulation setup and simulated \ac{MFD} component difference.
             (d) \ac{MFD} component difference sampled at the 64 sensor positions.}
\end{figure}

In this study, the \ac{MFD} component measured by the magnetic sensors is simulated using commercial finite element software. The simulation setup is illustrated in \cref{fig:plate-sensing-simulation}. The simulation domain also includes the air regions above and below the metal plate. The simulated \ac{MFD} component along the sensor's sensing axis is plotted in \cref{fig:plate-sensing-simulation}. In traditional dense sampling methods, this color plot would be the end result, with image processing techniques applied to identify any defects. In this example, however, the response is sampled more sparsely, at only 64 sensor positions, as shown in \cref{fig:plate-sensing-MFD-diff}. From this spatially sparse sampling, it is possible to identify four defects, but accurately determining their exact locations, sizes, and depths is more challenging.

To clarify, \cref{fig:plate-sensing-MFD-diff} shows only a single $\vec{y}^{(\ell)}$ vector, which has a length of 64, and there are seven other $\vec{y}^{(\ell)}$ vectors that are not plotted. In total, there are eight $\vec{y}^{(\ell)}$ measurement vectors, corresponding to four frequencies, each with real and imaginary components. Using these eight measurement vectors and applying the binary vector recovery algorithms, the vacancies of the 14400 voxels can be imaged. The results are presented in \cref{tbl:plate-result}.
\begin{table}
    \caption{Metal plate imaging results.}
    \label{tbl:plate-result}

    \setlength{\tabcolsep}{4pt}
    \def\w{24mm}

    \centering
    \begin{tabular}{|cc|c|c|c|c|c|c|}
        \thickhline
        && \multicolumn{3}{c|}{} & \multicolumn{3}{c|}{} \\[-9pt]
        \multicolumn{2}{|c|}{\multirow{-5.6}{*}{\rotatebox{90}{Ground truth}}} &
        \multicolumn{3}{c|}{
            \begin{tikzpicture}
                \node[inner sep=0, anchor=west] at (-0.58*\w,0) {\includeinkscape[width=\w]{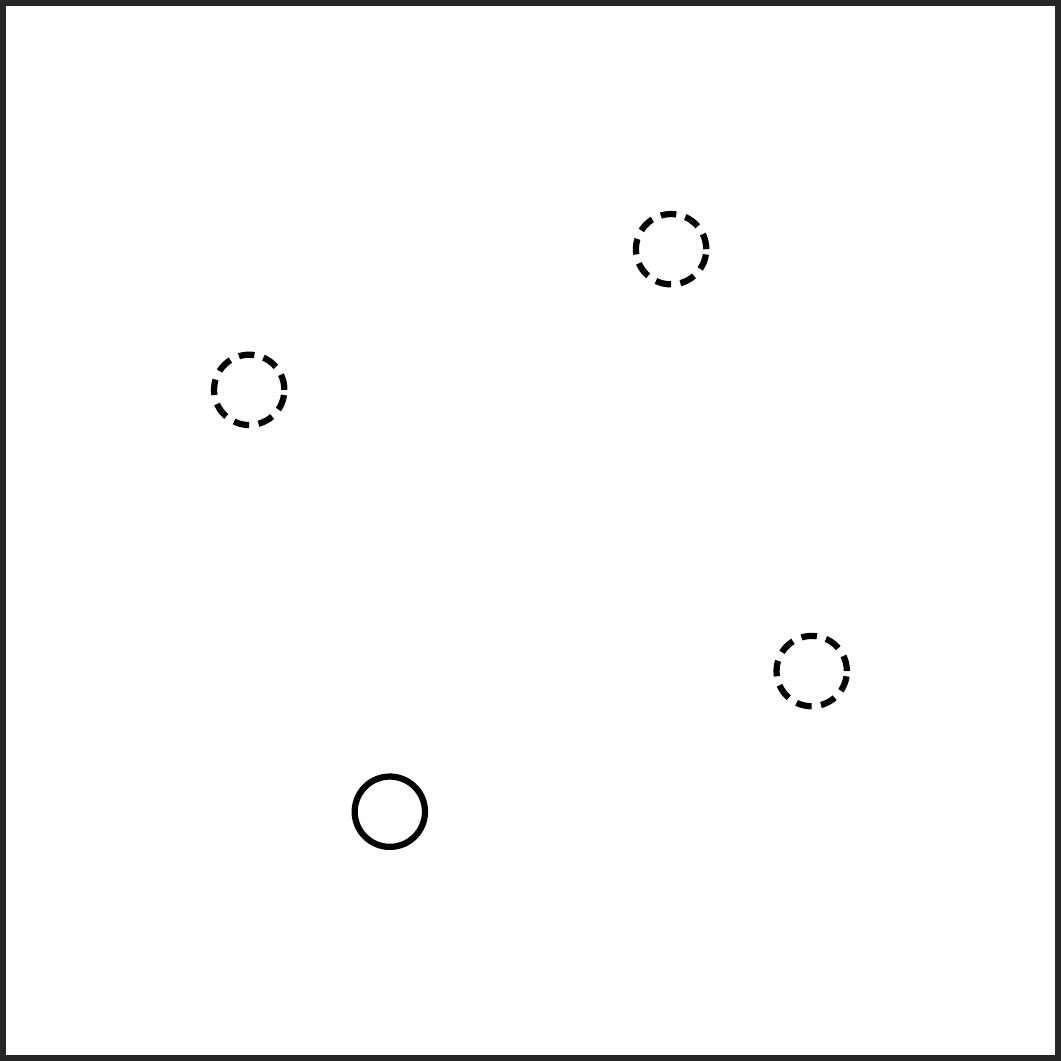_tex}};
                \node[inner sep=0, anchor=east] at ( 0.58*\w,0) {\rotatebox{90}{\includeinkscape[width=\w]{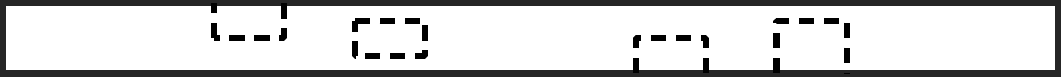_tex}}};
                \tikzstyle{label} = [rounded corners, fill=black, fill opacity=0.1, text opacity=1, inner sep=1.5pt]
                \node[label] at (0,-0.42*\w) {\footnotesize 2~mm diameter defects};
            \end{tikzpicture}
        } &
        \multicolumn{3}{c|}{
            \begin{tikzpicture}
                \node[inner sep=0, anchor=west] at (-0.58*\w,0) {\includeinkscape[width=\w]{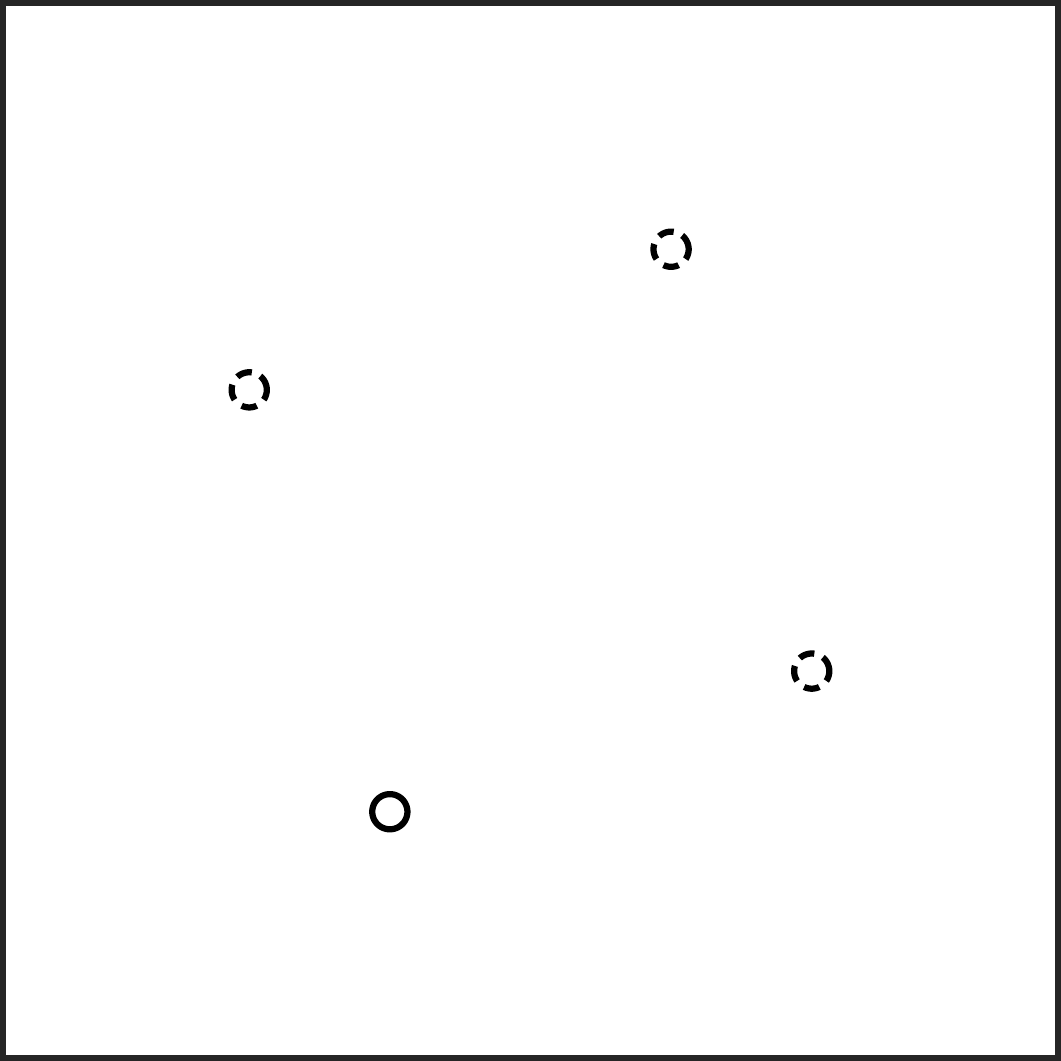_tex}};
                \node[inner sep=0, anchor=east] at ( 0.58*\w,0) {\rotatebox{90}{\includeinkscape[width=\w]{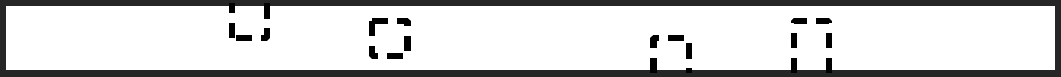_tex}}};
                \tikzstyle{label} = [rounded corners, fill=black, fill opacity=0.1, text opacity=1, inner sep=1.5pt]
                \node[label] at (0,-0.42*\w) {\footnotesize 1~mm diameter defects};
            \end{tikzpicture}
        } \\
        \semithickhline
        & &
        \cref{alg:convex-optimization}    &
        \cref{alg:mean-field-approx}      &
        \cref{alg:approx-message-passing} &
        \cref{alg:convex-optimization}    &
        \cref{alg:mean-field-approx}      &
        \cref{alg:approx-message-passing} \\
        \semithickhline
        \multirow{24.4}{*}{\hspace*{-1mm}\rotatebox{90}{Imaging result across depth}\hspace*{-2mm}} &
        {\footnotesize 0.0~mm} &&&&&& \\[-9pt]
        & \multirow{1.6}{*}{\footnotesize 0.5~mm} &
        \includeinkscape[width=\w]{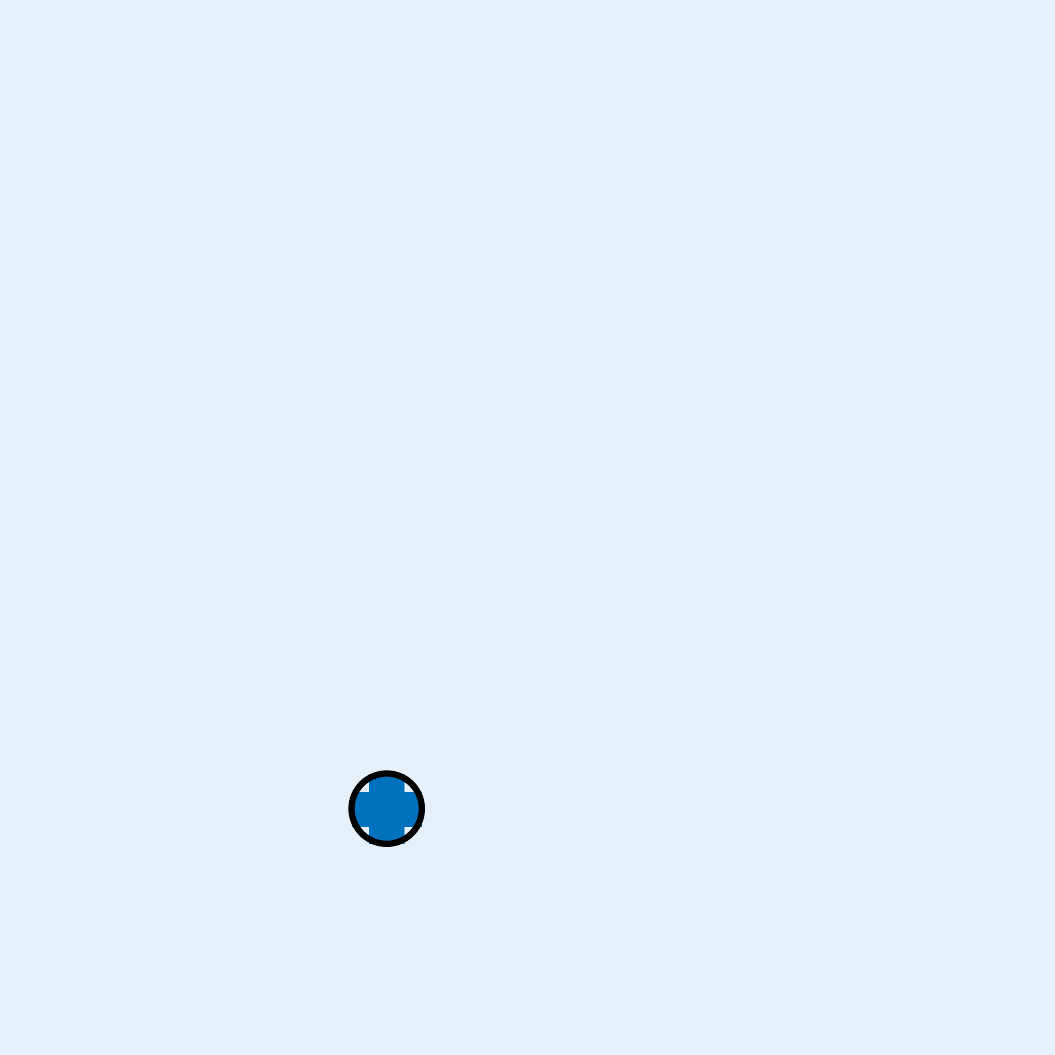_tex} &
        \includeinkscape[width=\w]{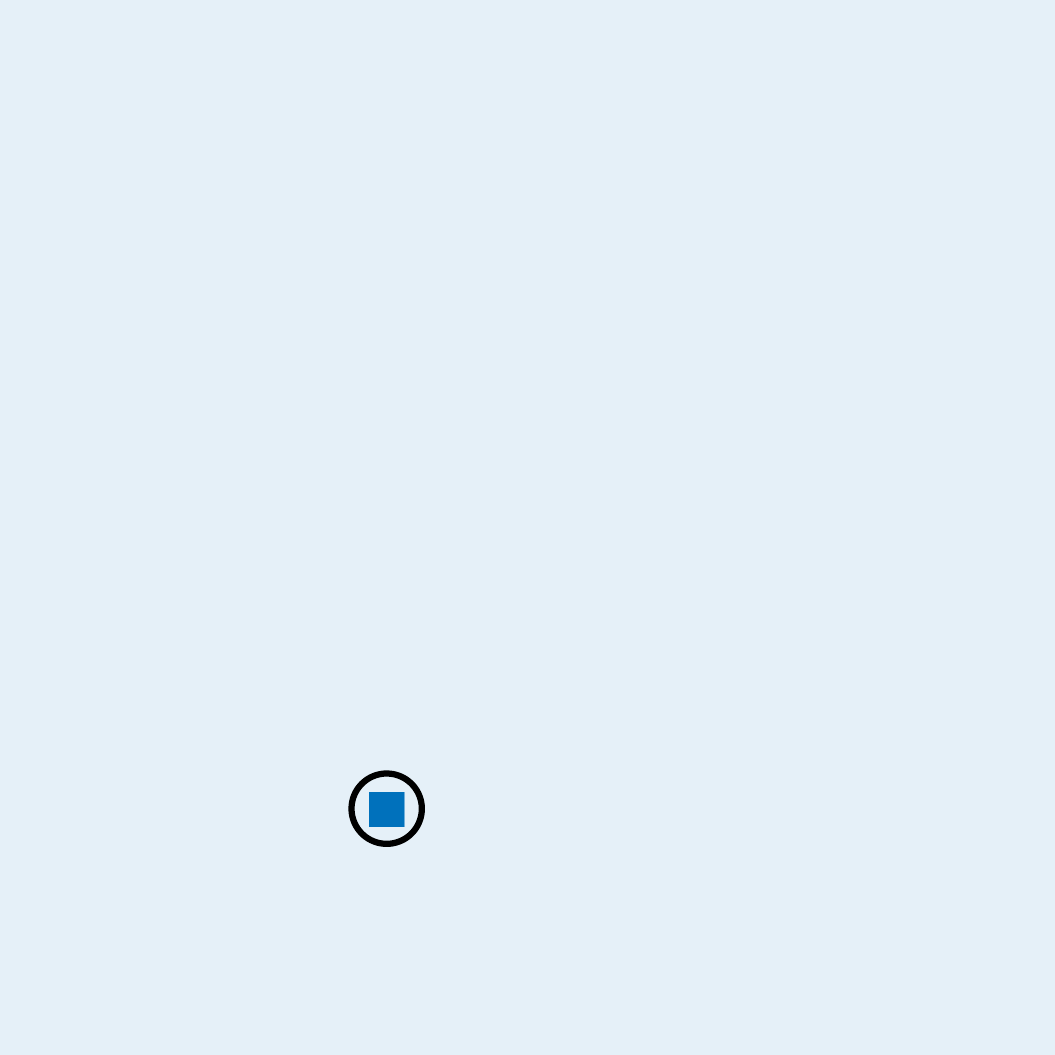_tex}  &
        \includeinkscape[width=\w]{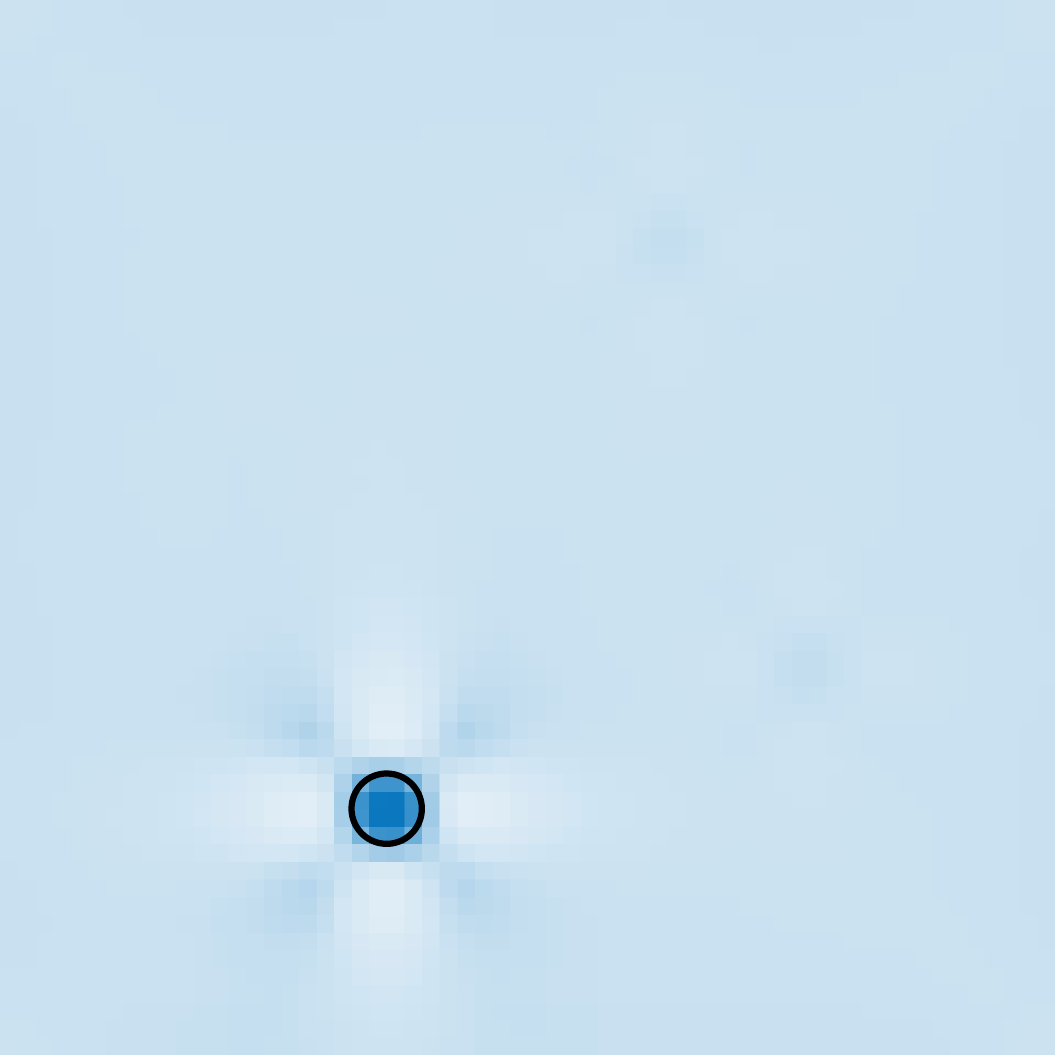_tex} &
        \includeinkscape[width=\w]{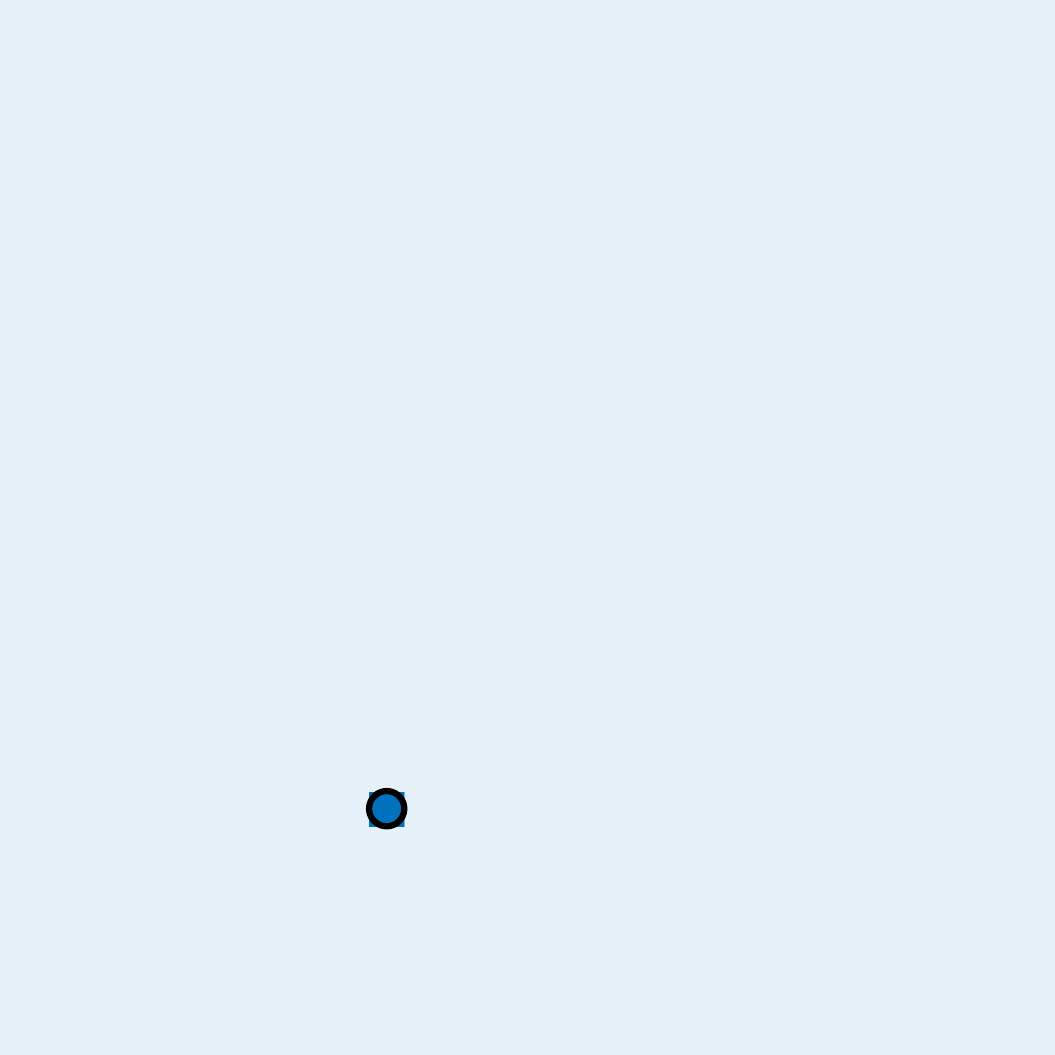_tex} &
        \includeinkscape[width=\w]{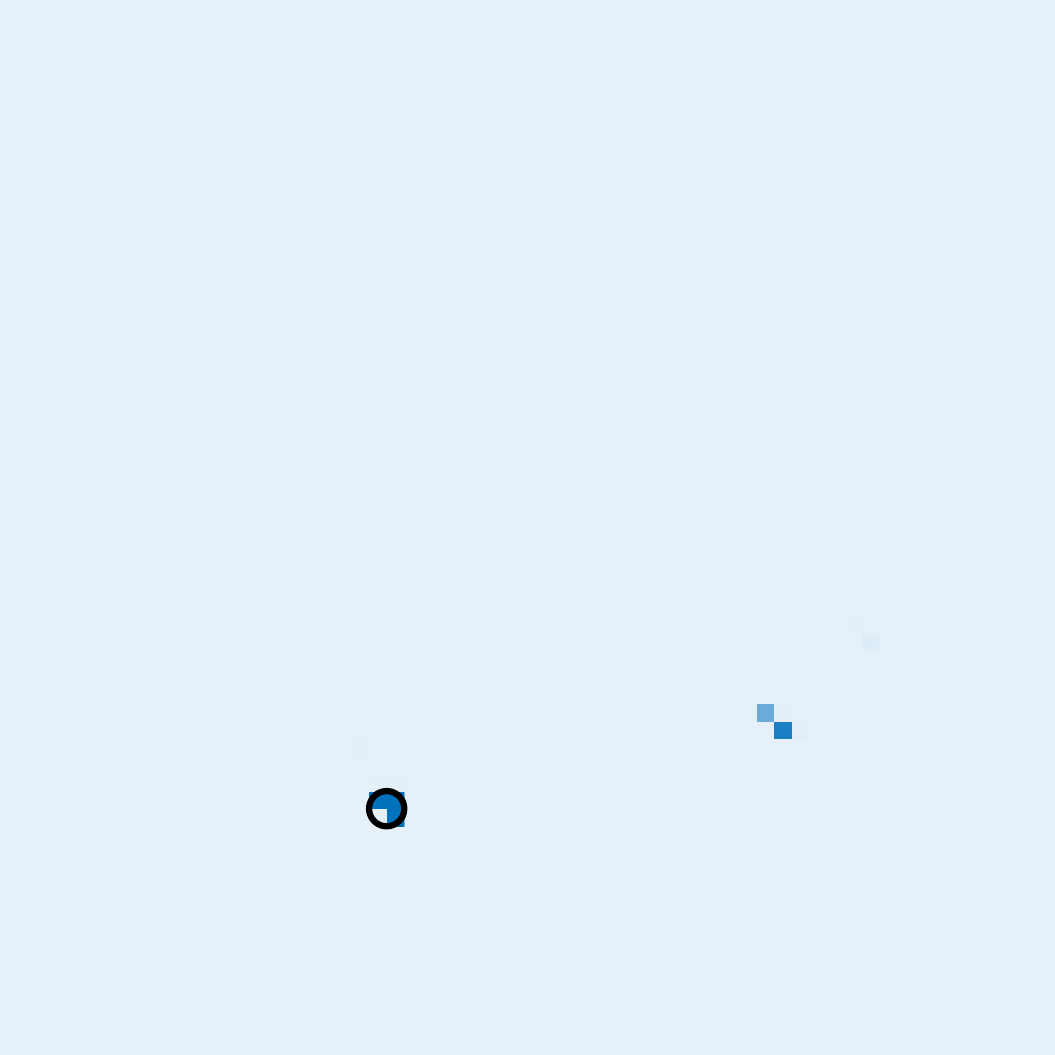_tex}  &
        \includeinkscape[width=\w]{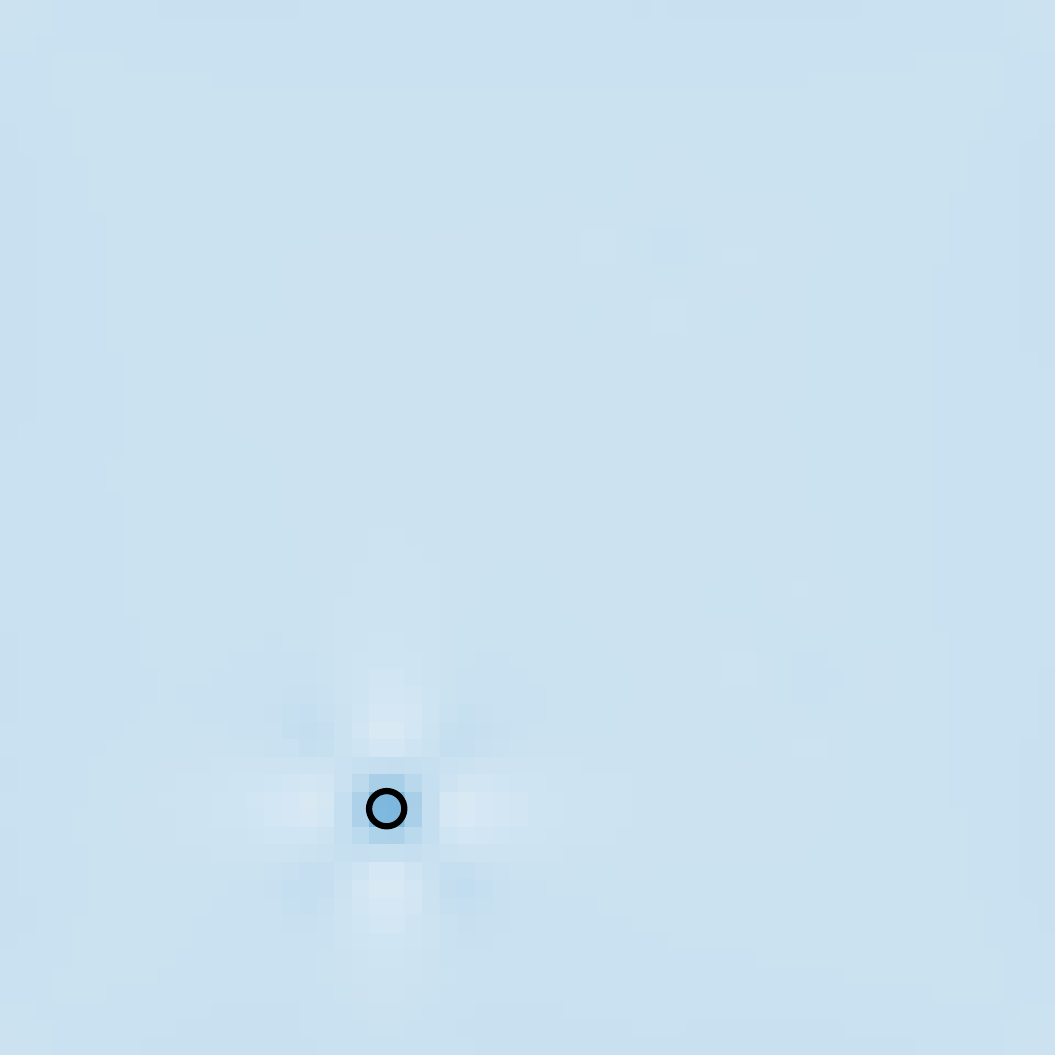_tex} \\
        & \multirow{1.6}{*}{\footnotesize 1.0~mm} &
        \includeinkscape[width=\w]{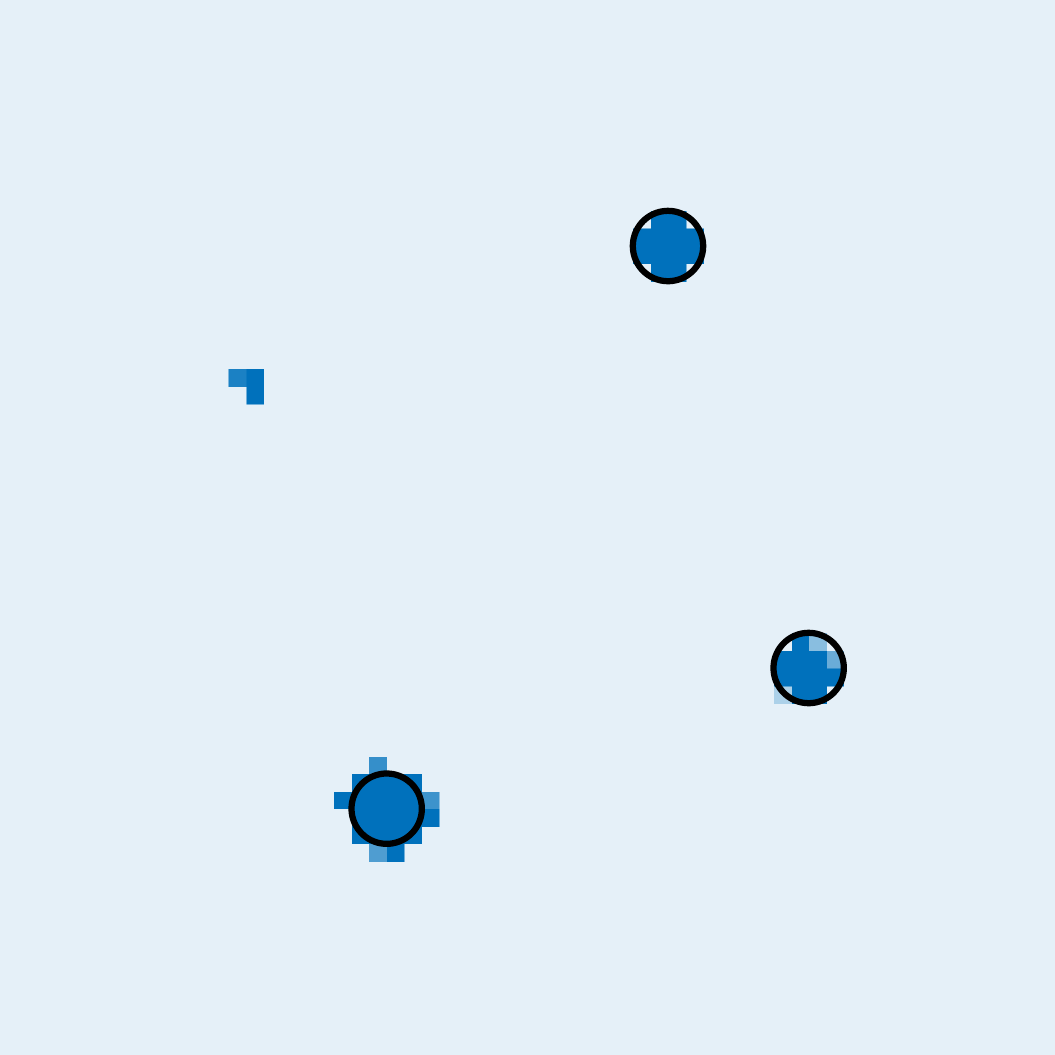_tex} &
        \includeinkscape[width=\w]{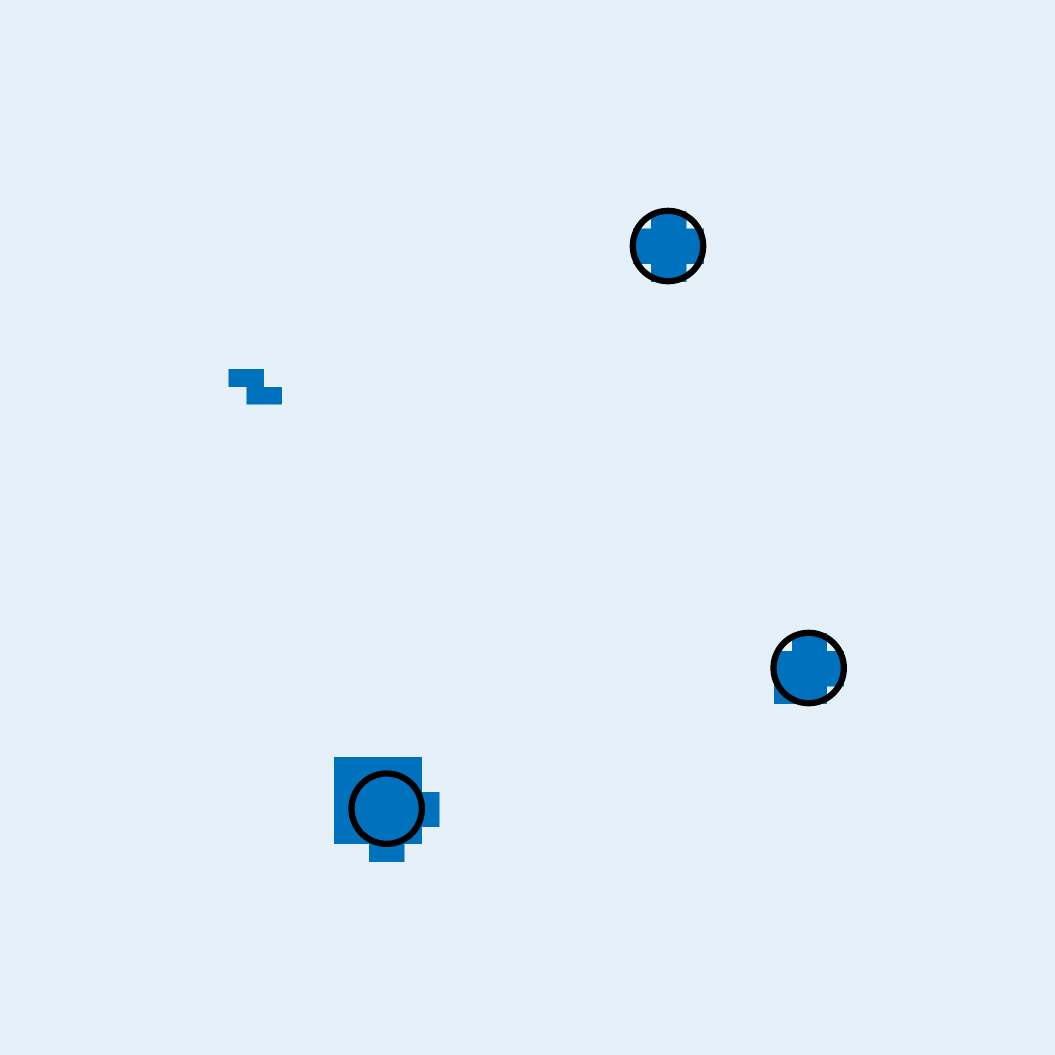_tex}  &
        \includeinkscape[width=\w]{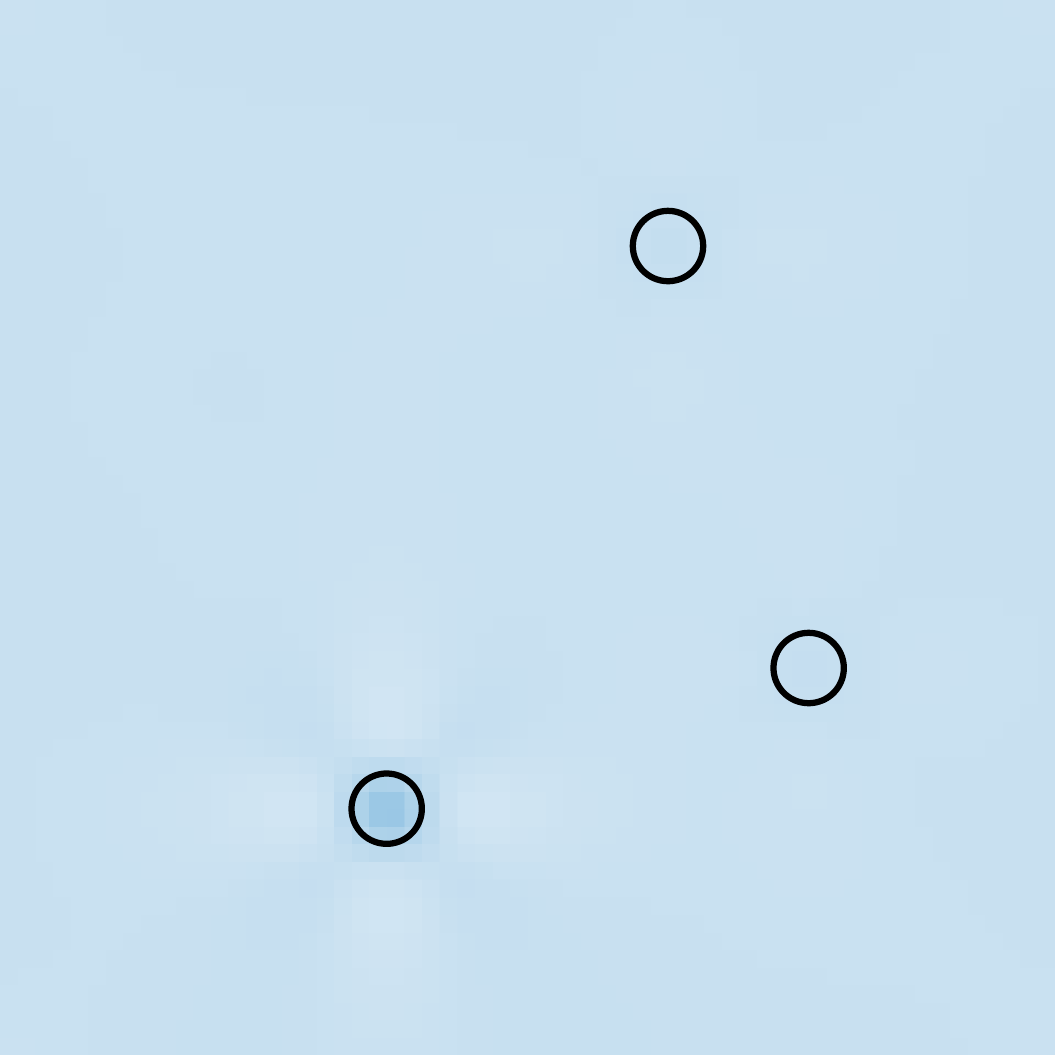_tex} &
        \includeinkscape[width=\w]{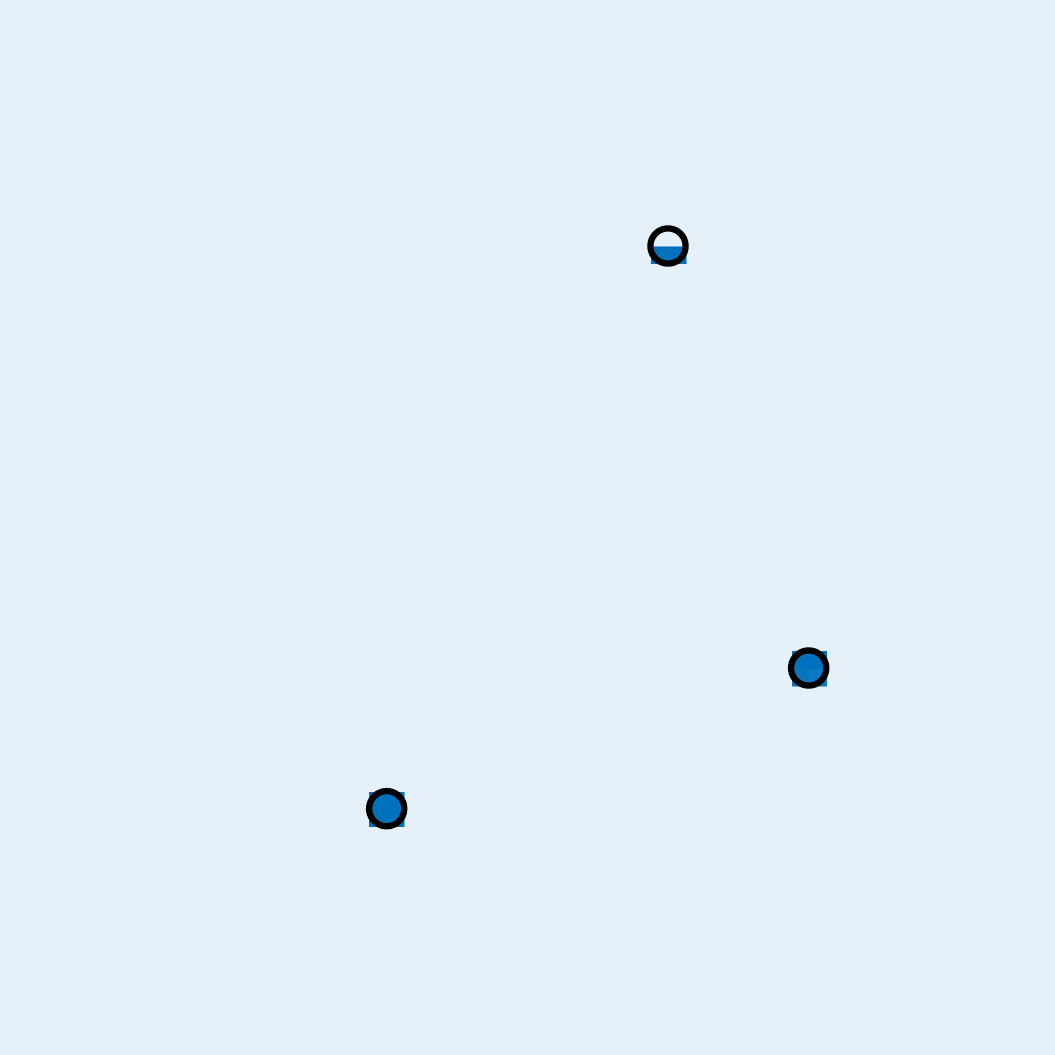_tex} &
        \includeinkscape[width=\w]{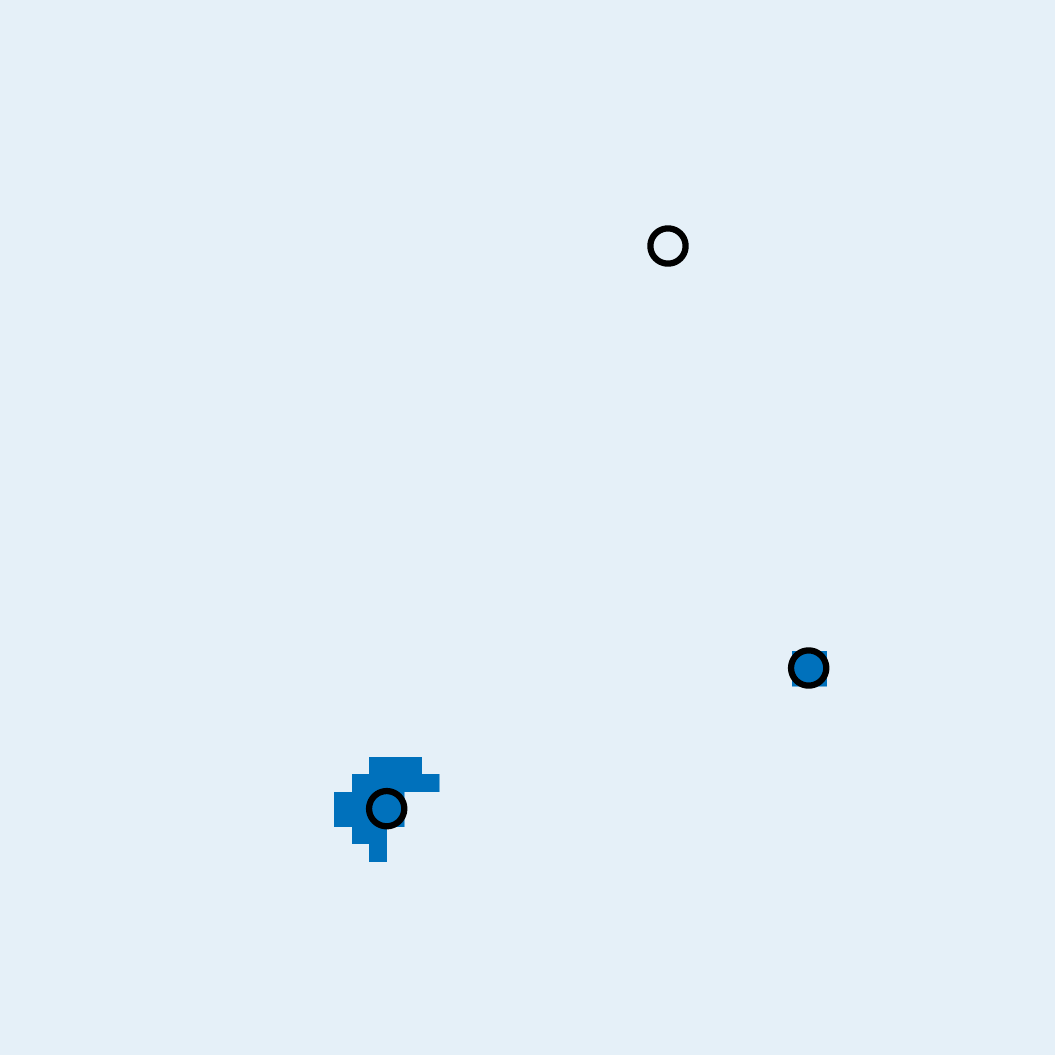_tex}  &
        \includeinkscape[width=\w]{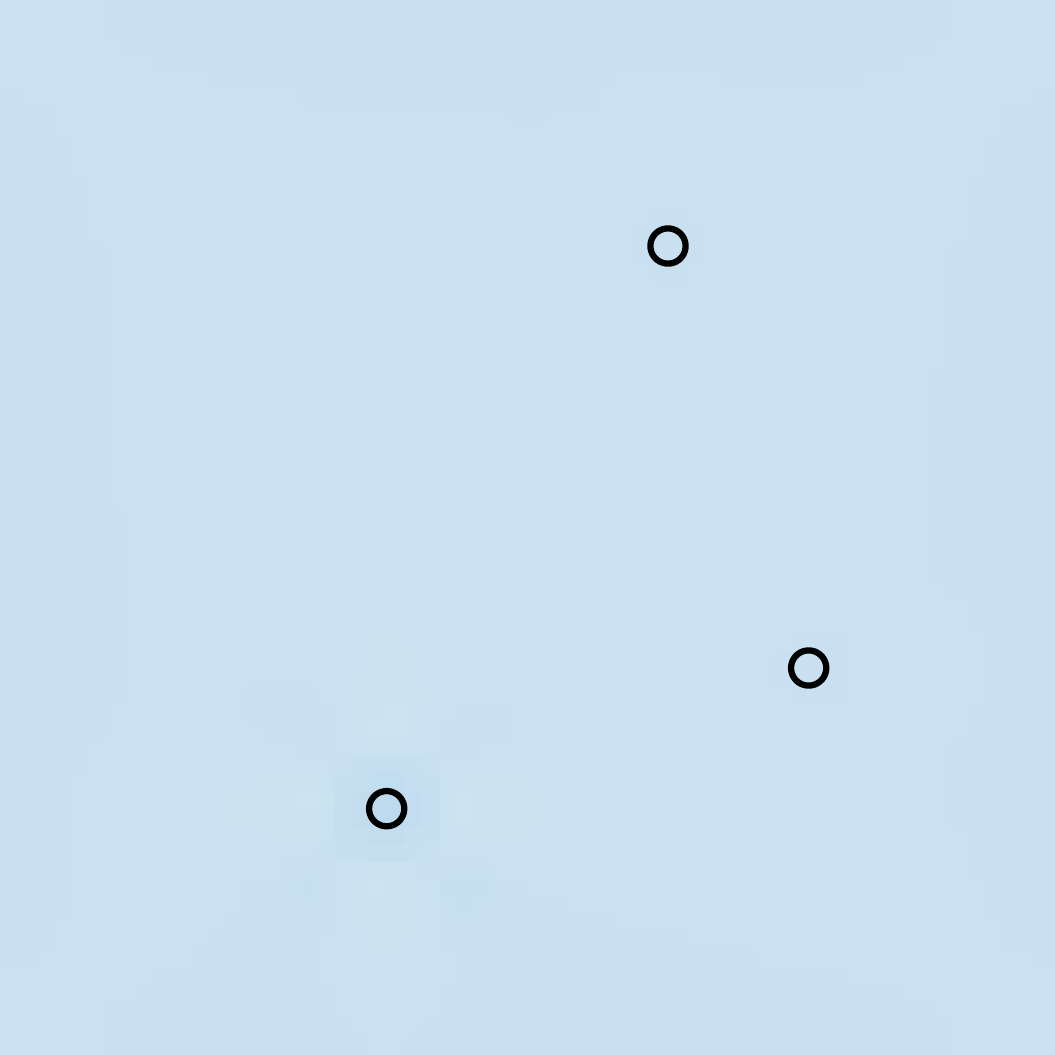_tex}\\
        & \multirow{1.6}{*}{\footnotesize 1.5~mm} &
        \includeinkscape[width=\w]{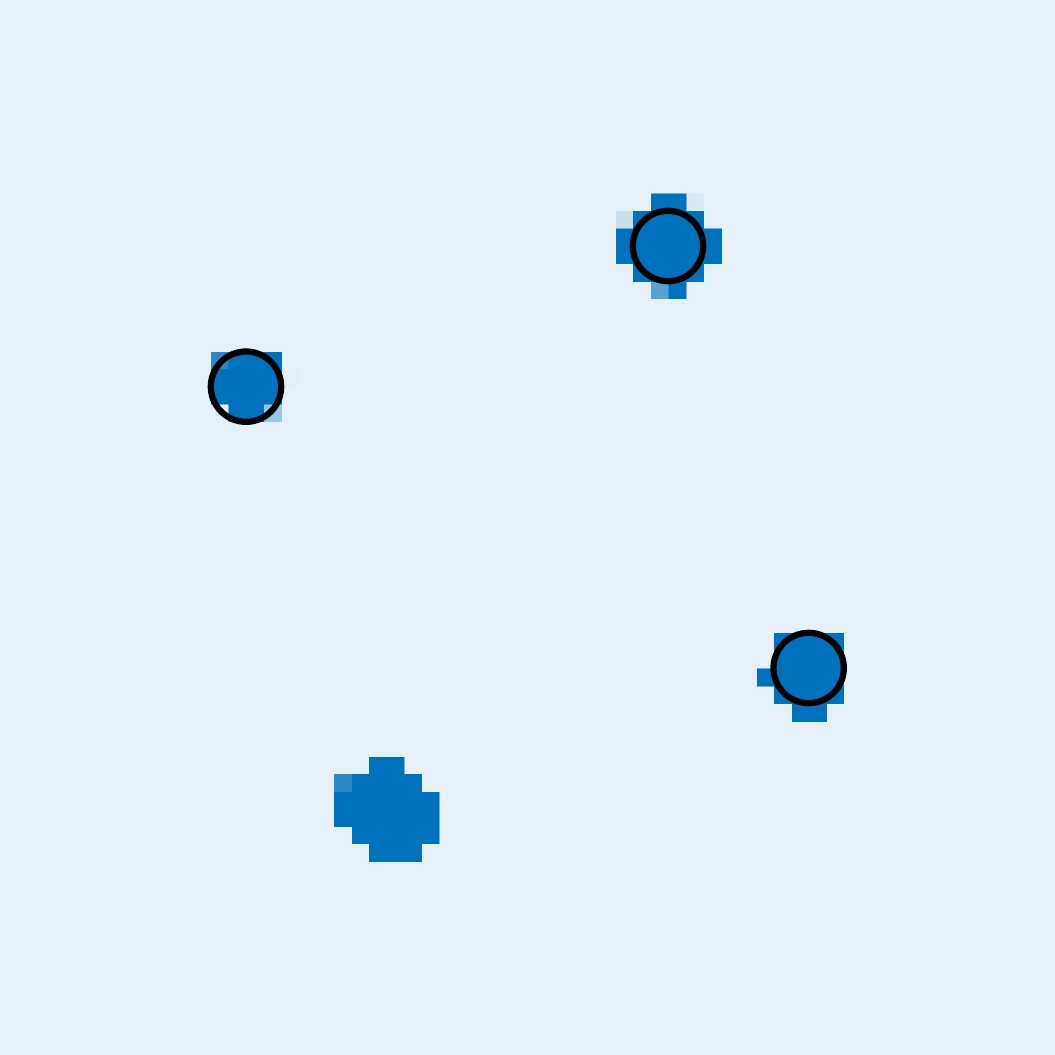_tex} &
        \includeinkscape[width=\w]{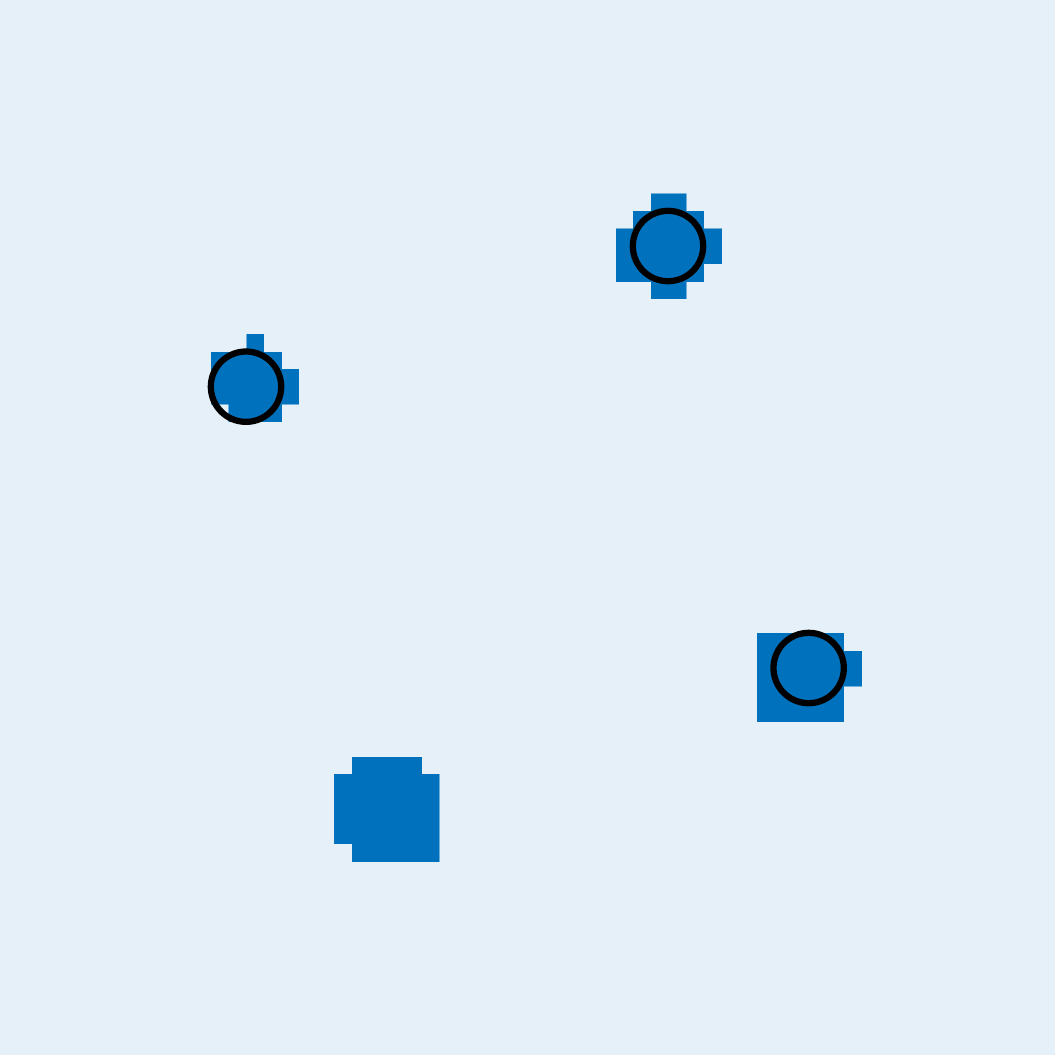_tex}  &
        \includeinkscape[width=\w]{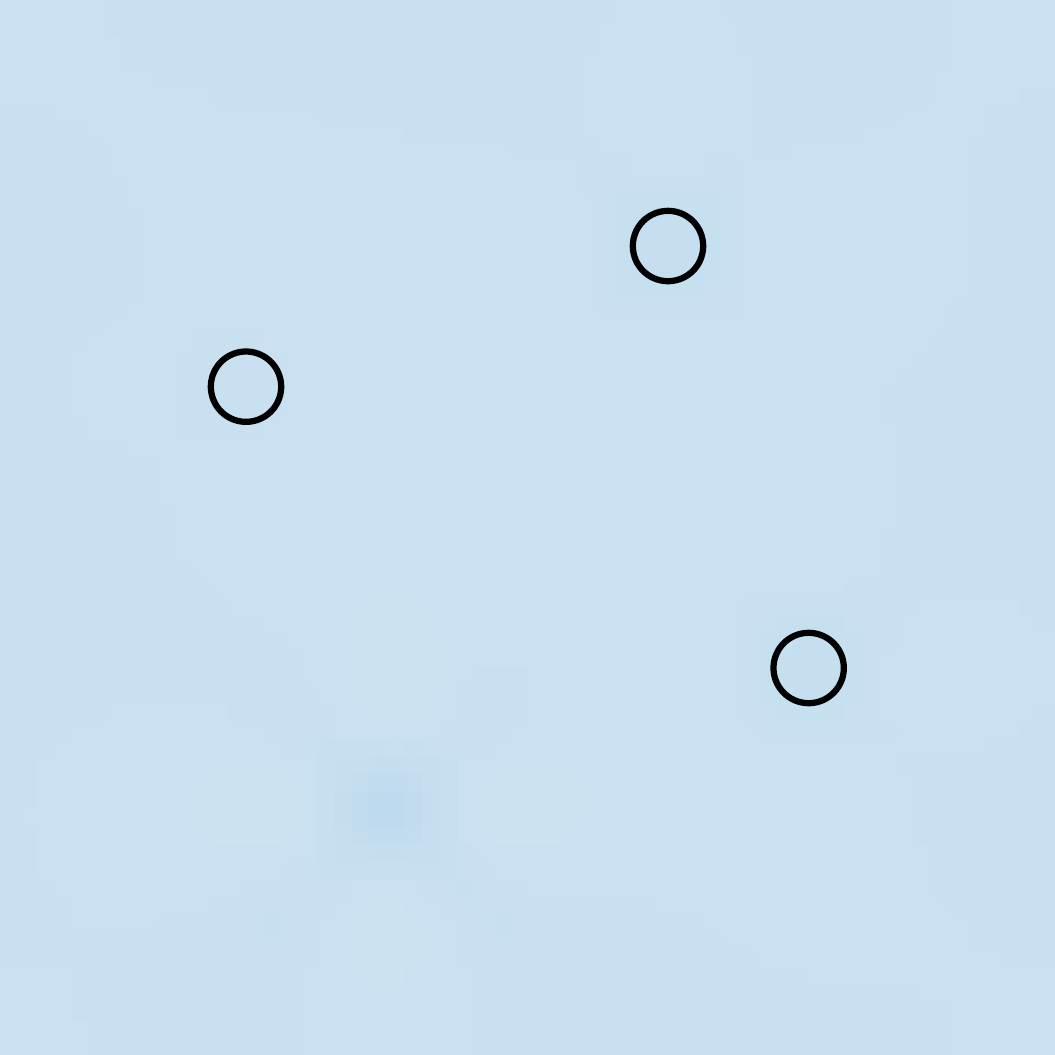_tex} &
        \includeinkscape[width=\w]{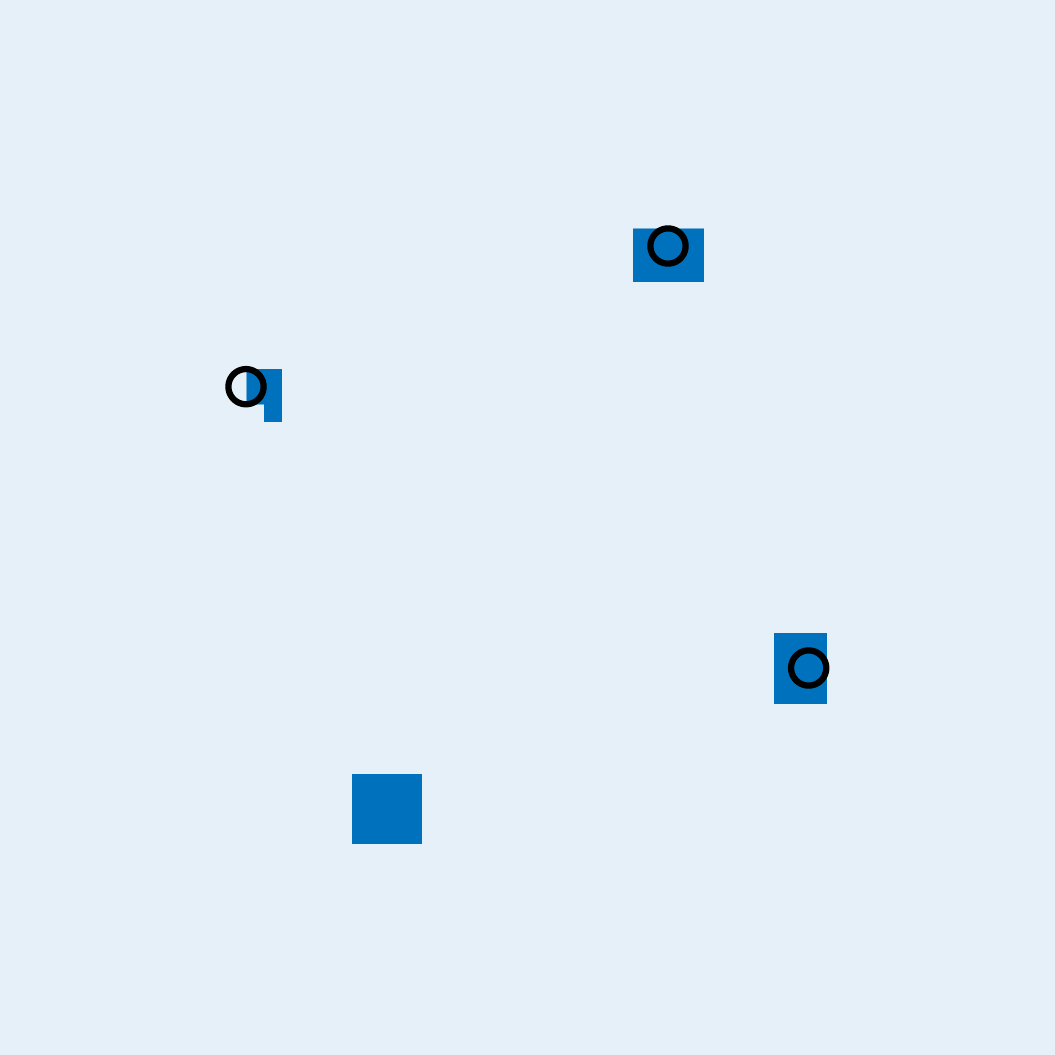_tex} &
        \includeinkscape[width=\w]{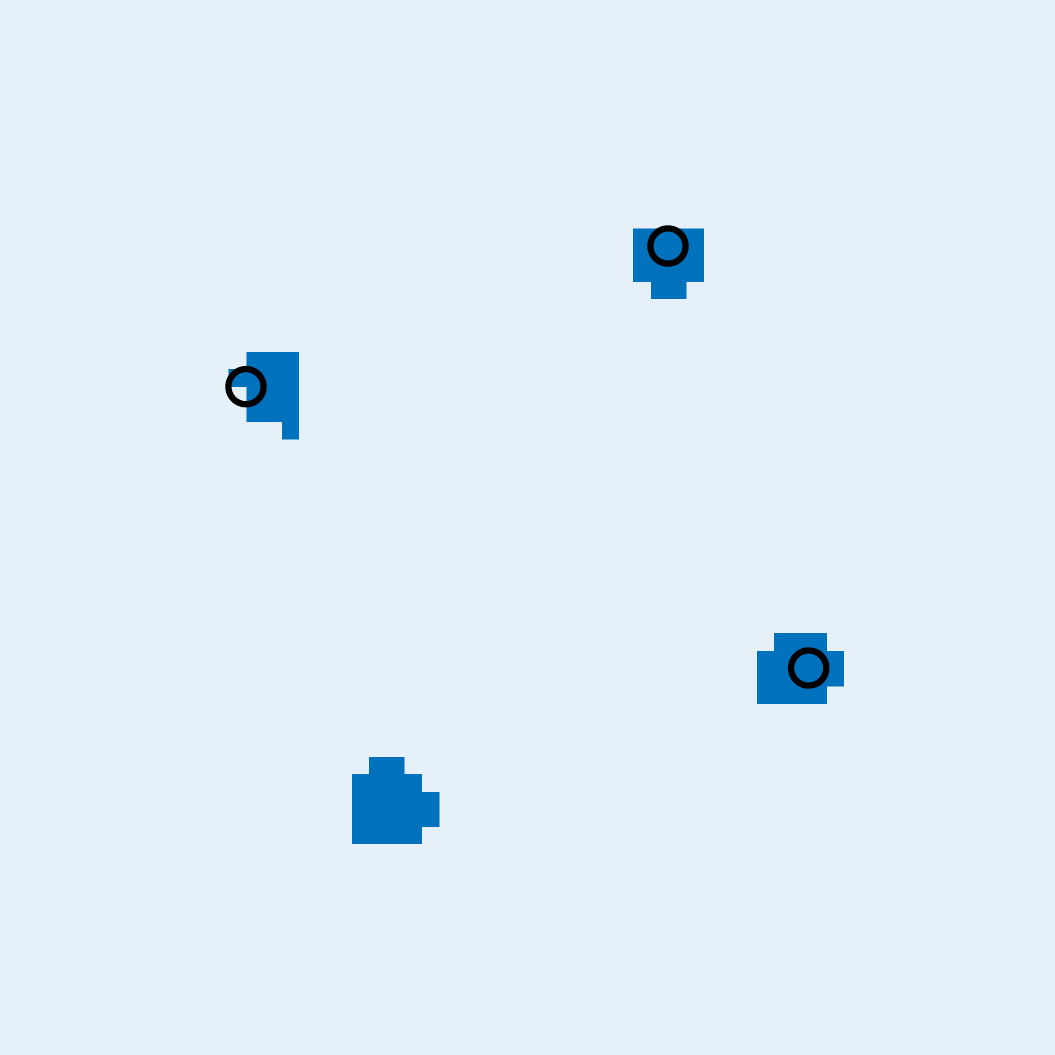_tex}  &
        \includeinkscape[width=\w]{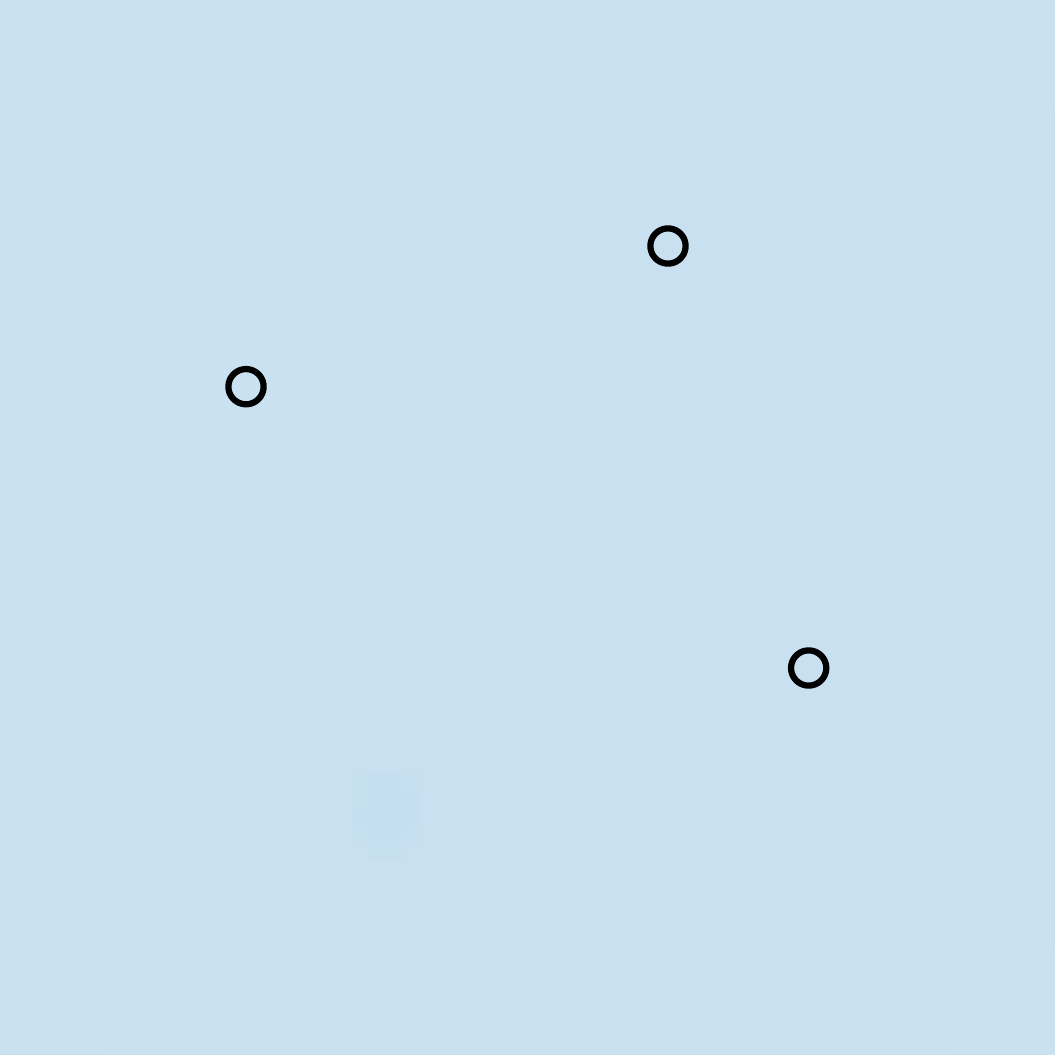_tex} \\
        & \multirow{1.2}{*}{\footnotesize 2.0~mm} &
        \includeinkscape[width=\w]{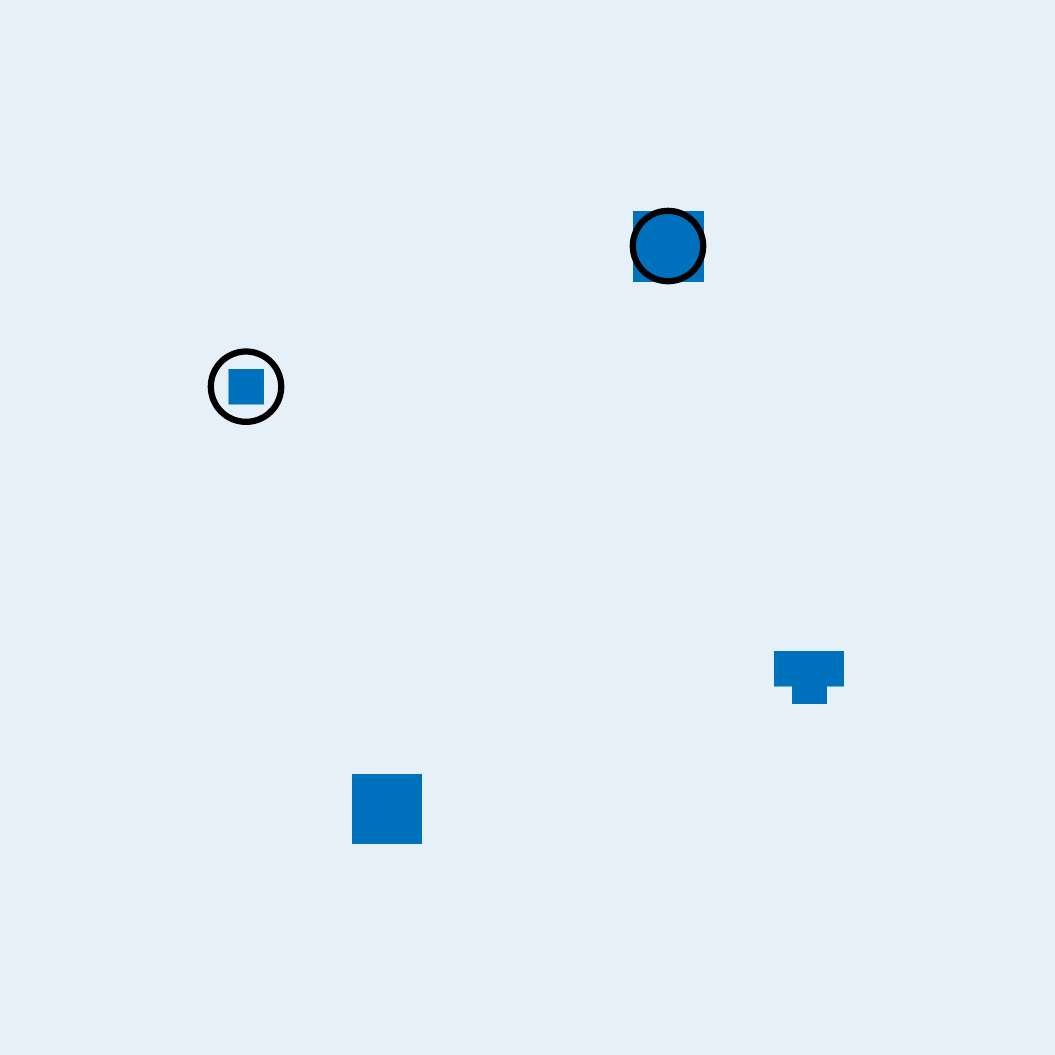_tex} &
        \includeinkscape[width=\w]{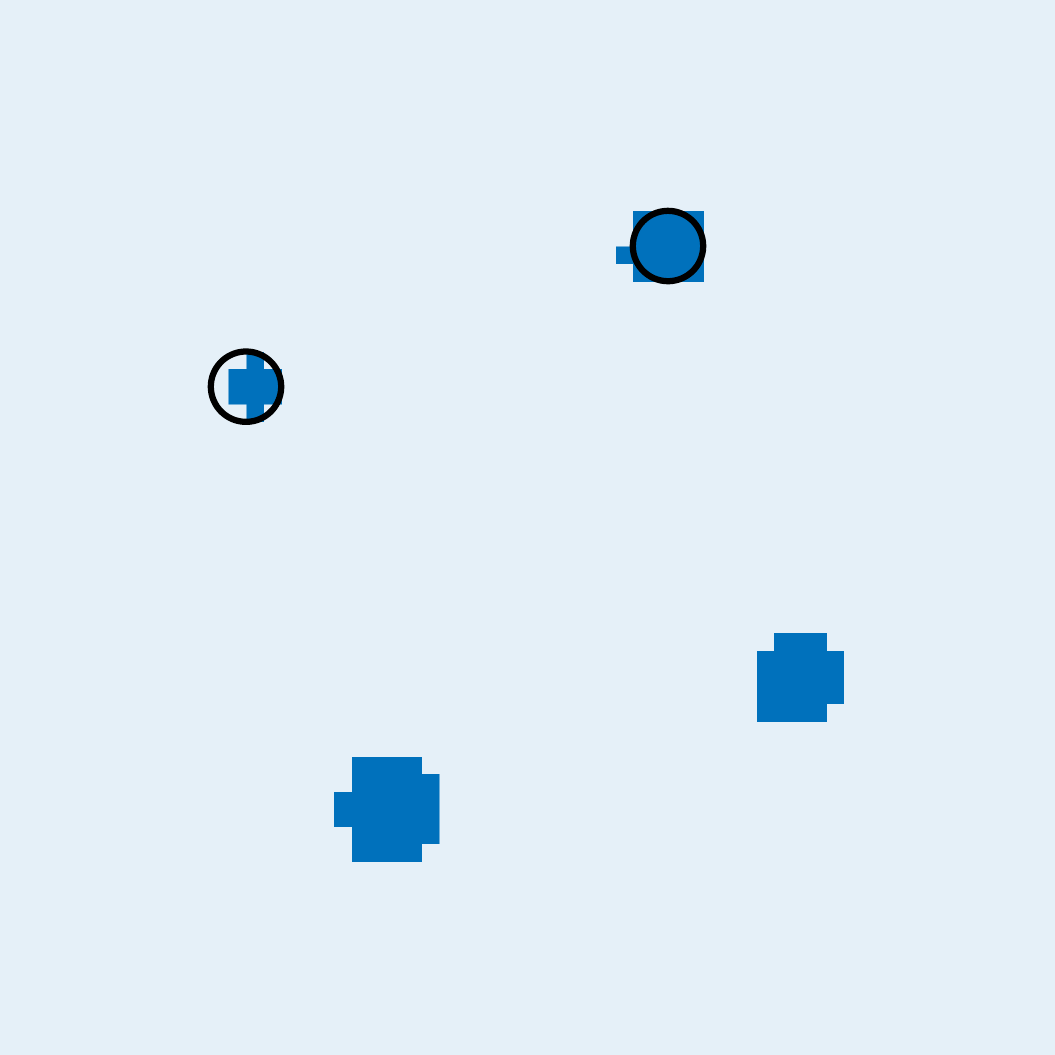_tex}  &
        \includeinkscape[width=\w]{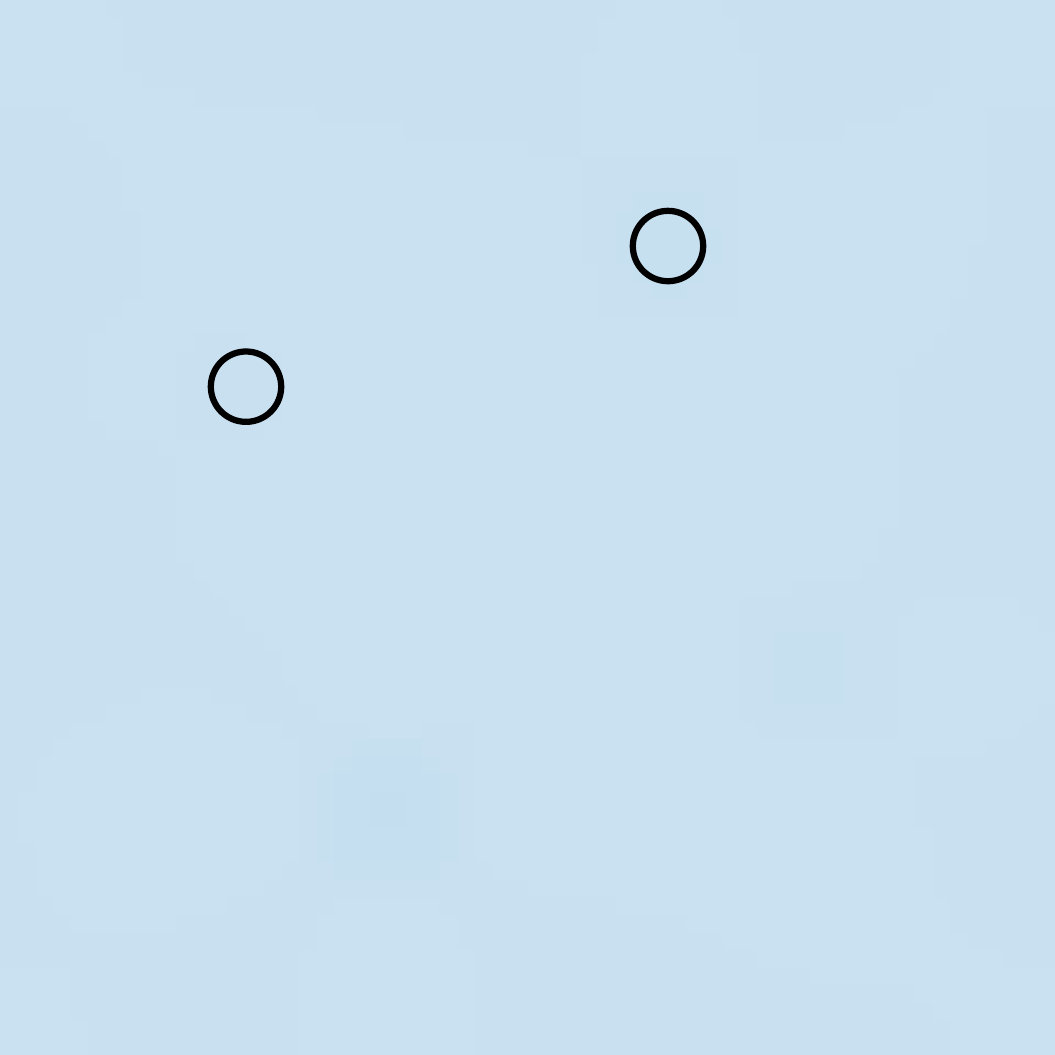_tex} &
        \includeinkscape[width=\w]{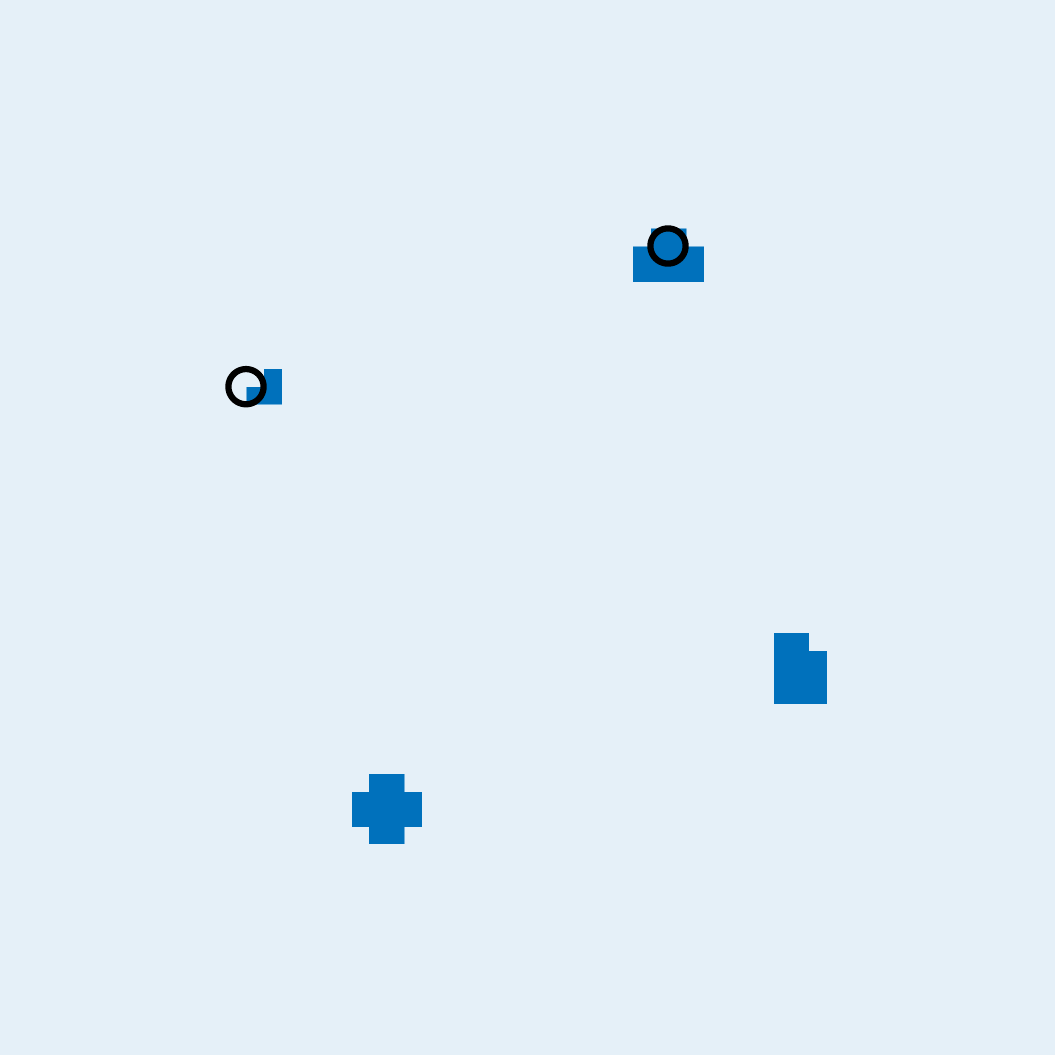_tex} &
        \includeinkscape[width=\w]{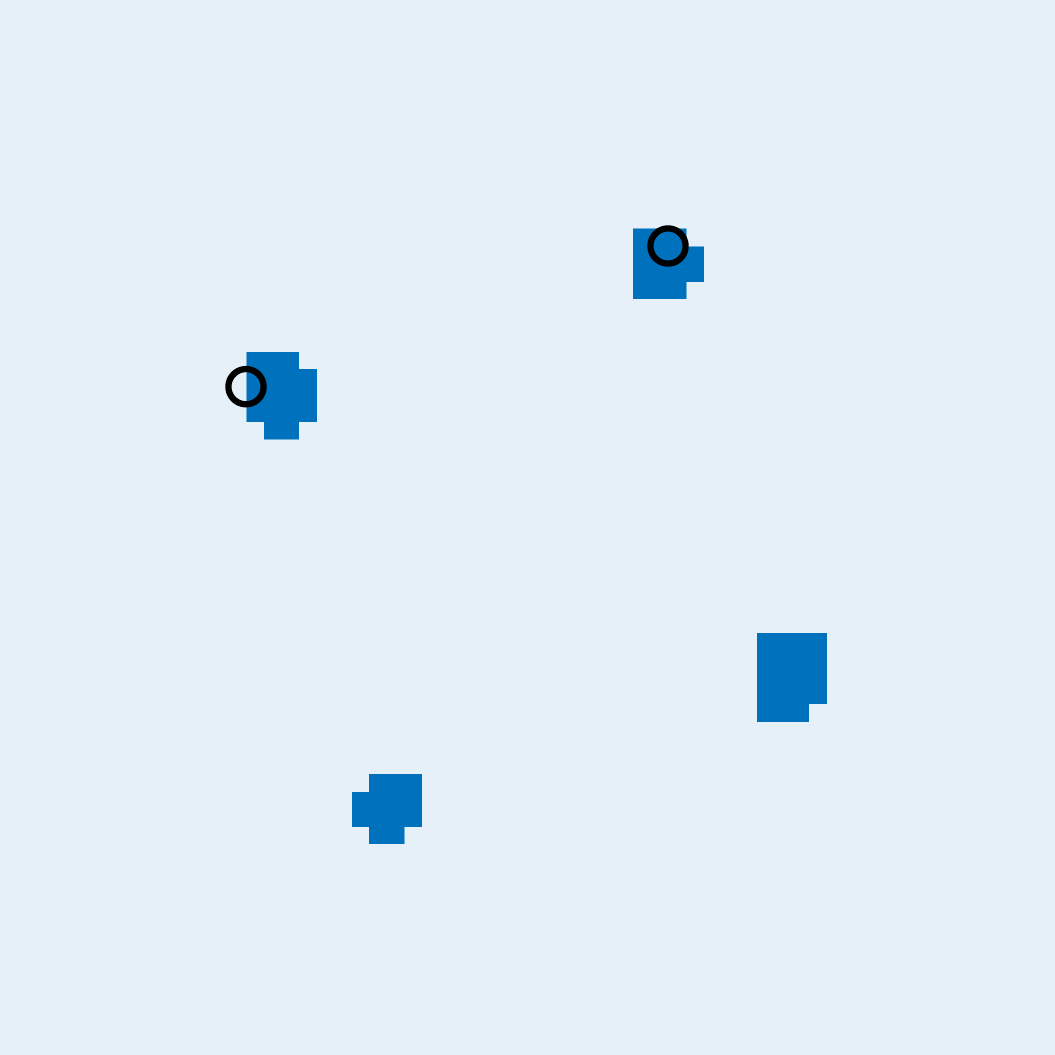_tex}  &
        \includeinkscape[width=\w]{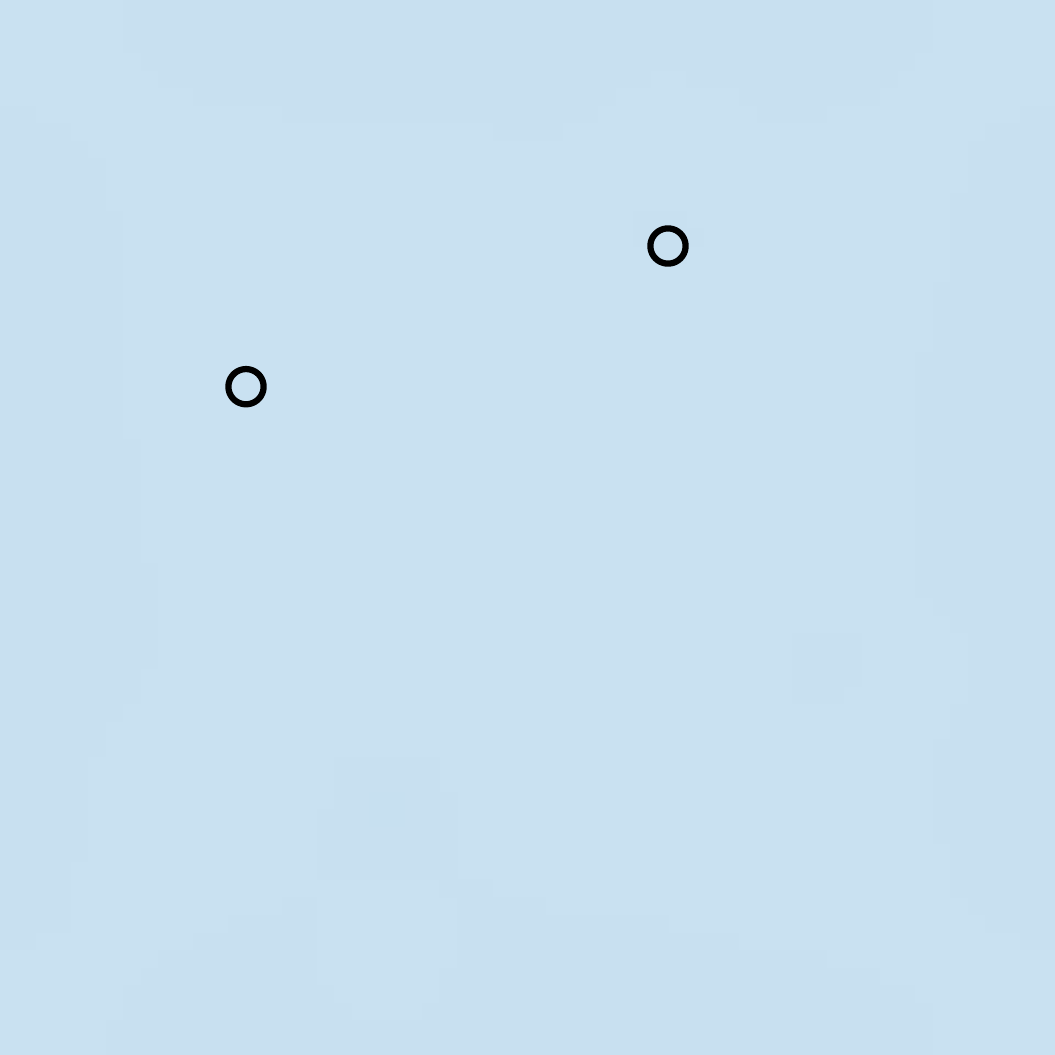_tex} \\
        \thickhline
    \end{tabular}
\end{table}
From the results, several observations can be made:
\begin{enumerate}[nosep, leftmargin=*]
    \item Both \cref{alg:convex-optimization} and \cref{alg:mean-field-approx} are able to image defects as small as 1~mm. This surpasses the resolution limits imposed by the Nyquist theorem, given that the sensors are spaced 4~mm apart. This high imaging resolution is achieved through explicit modeling of the relationships between measurement quantities and defects, as well as the use of binary vector recovery algorithms.
    \item In contrast, \cref{alg:approx-message-passing} yields poor imaging result. The imaged defects are presence in only the topmost layer, with noticeable artifacts surrounding the defects. This outcome aligns with the discussion in \cref{sec:algorithmic-experiment}, which states that \cref{alg:approx-message-passing} relies on message passing without an explicit optimization objective. In comparison, \cref{alg:convex-optimization} and \cref{alg:mean-field-approx} leverage a convex optimization objective and a \ac{KL}-divergence minimization objective, respectively, resulting in superior imaging performance.
    \item In the imaging results produced by \cref{alg:convex-optimization} and \cref{alg:mean-field-approx}, some defects are imaged as being as deep as 2~mm, even when the actual defect is shallower. This discrepancy is a physical limitation rather than an algorithmic issue. It arises because the sensitivity—defined as the degree to which a defect in a given region affects the \ac{MFD} measurements—decreases for deeper regions within the metal plate. The presence of a deeper defect in the imaging results can better fit the linear measurement model, especially considering the presence of noise and the intrinsic nonlinear relationship between the measurements and the physical property fields.
    \item Using multiple excitation frequencies, selected to match the eddy current penetration depth to the depth of the layers, ensures that the imaging results accurately reveal the starting depth of the defects. This approach leverages the physical property that high-frequency eddy currents penetrate only shallow regions and thus provide information exclusively from those layers.
    \item Visual observations suggest that \cref{alg:convex-optimization} performs slightly better than \cref{alg:mean-field-approx}, as its imaging results deviate less from the ground truth. This difference is likely because \cref{alg:mean-field-approx} relies on mean field approximation, which neglects correlations between the posterior probabilities of entries, leading to a biased solution, whereas \cref{alg:convex-optimization} has \cref{thm:tight-convex} that provides recovery guarantees. However, it is worth noting that \cref{alg:mean-field-approx} has lower computational complexity than \cref{alg:convex-optimization}.
\end{enumerate}

\subsection{Imaging defects in metal pipes}
\begin{figure}
    \centering
    \subcaptionbox{\label{fig:pipe-sensing-schematic}}{
        \begin{tikzpicture}
            \tikzstyle{indicator} = [color=yellow, -{Circle[length=3pt]}, thick]
            \tikzstyle{label} = [anchor=west, rounded corners, fill=yellow, fill opacity=0.4, text=black, text opacity=1, align=left]

            \node[inner sep=0] at (0,0) {\includegraphics[width=11cm]{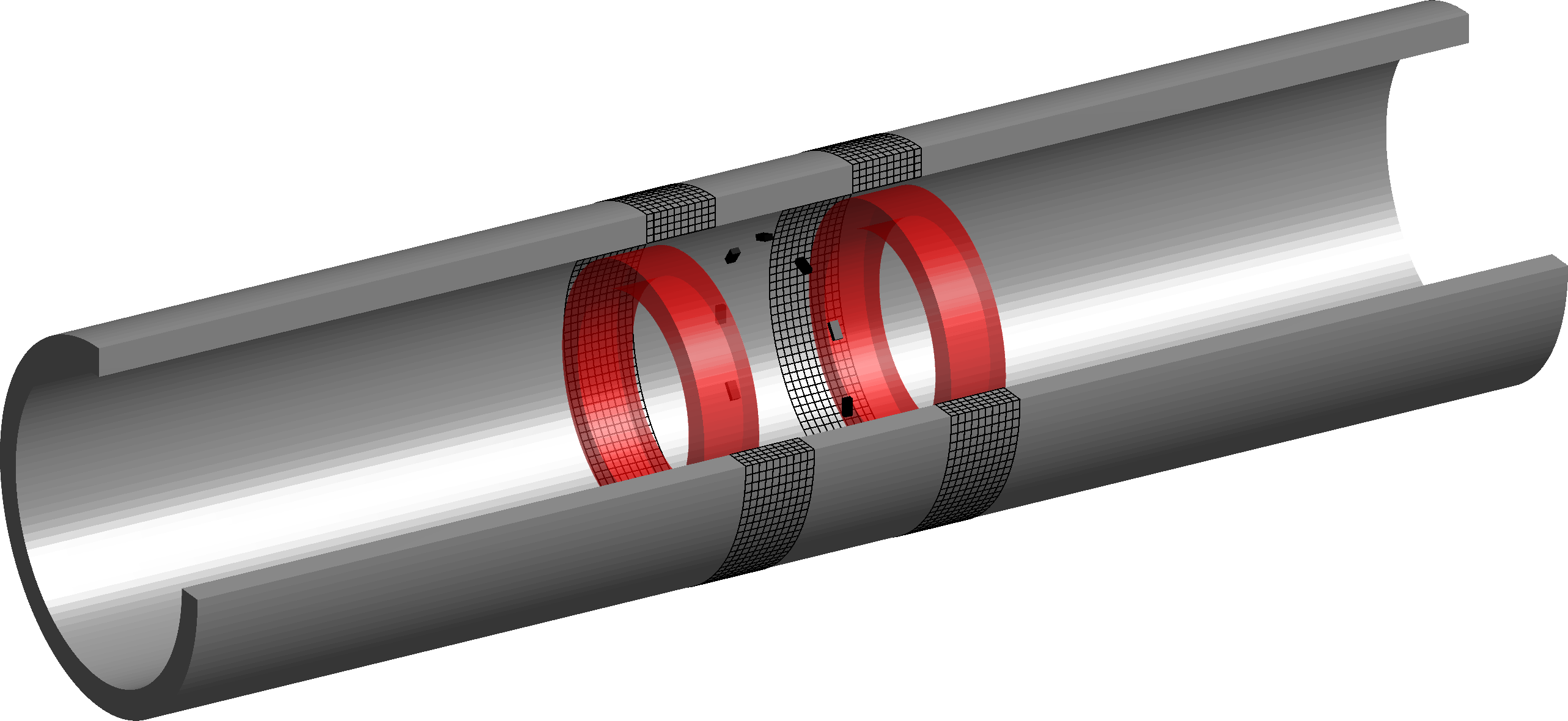}};

            \node[label] (label1) at (-4, 1.2) {Coils};
            \node[label] (label2) at (-4, 0.0) {Magnetic\\sensors};
            \node[label] (label3) at (-4,-1.2) {Metal pipe};

            \draw[indicator] (label1.east) -- (-0.6, 0.4);
            \draw[indicator] (label2.east) -- (0.37, 0.2);
            \draw[indicator] (label3.east) -- (-0.8,-1.1);
        \end{tikzpicture}
    } \\
    \subcaptionbox{\label{fig:pipe-sensing-simulation}}{
        \begin{tikzpicture}
            \node[inner sep=0] at (0,0) {\adjincludegraphics[width=15.17cm, trim={0 {0.31\height} 0 {0.31\height}}, clip]{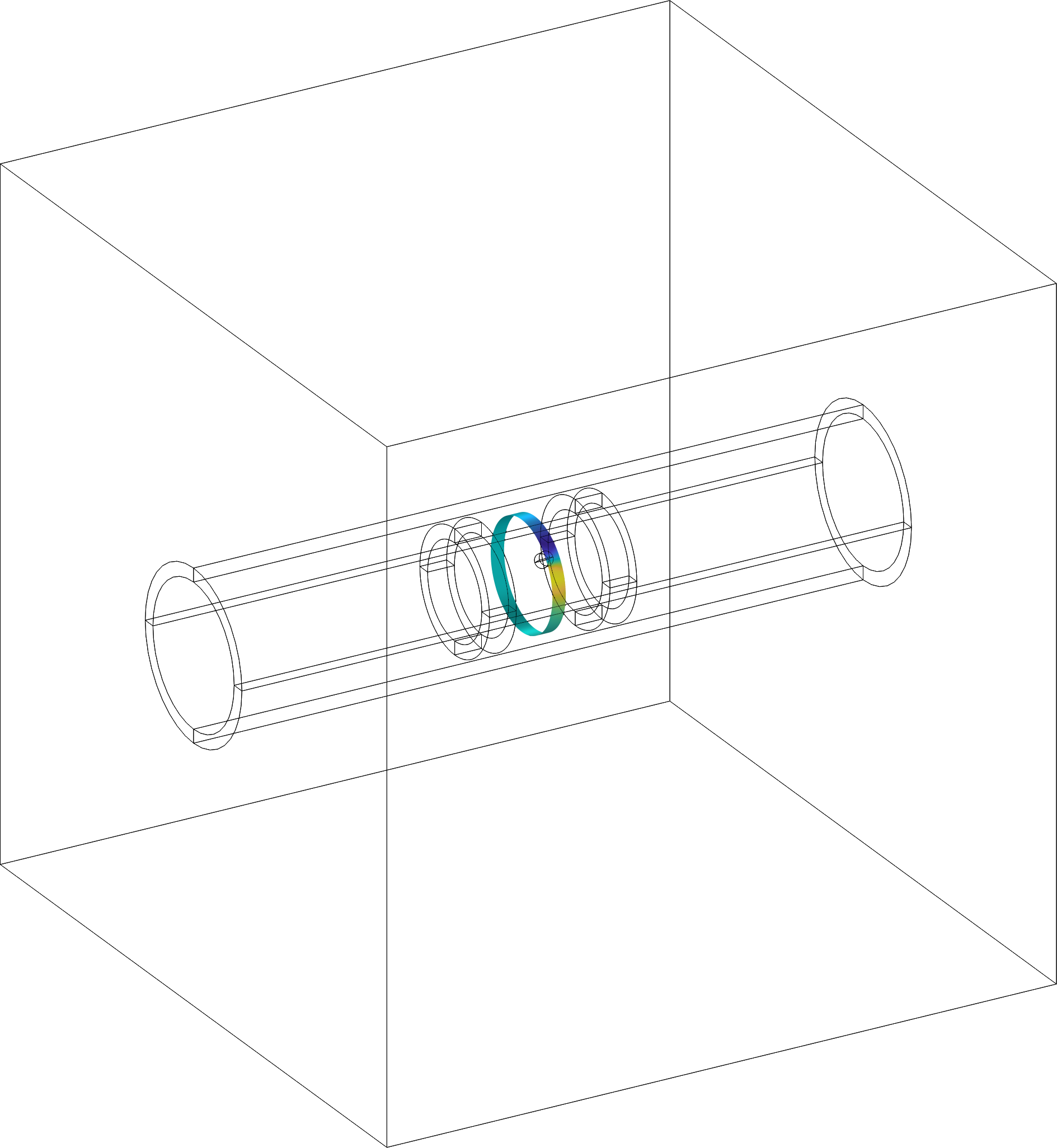}};
            \begin{scope}[shift={(6.8,0)}]
                \node[inner sep=0,draw,thick] at (0.03,0) {\includegraphics[height=3.0cm, width=1.5mm]{figures/colorbar.png}};
                \node[inner sep=0,anchor=east] at (0.6, 1.8) {\footnotesize ($\mu$T)};
                \node[inner sep=0,anchor=east] at (0.6, 1.5) {\footnotesize 2.2};
                \node[inner sep=0,anchor=east] at (0.6, 0.0) {\footnotesize 0\hspace*{2mm}};
                \node[inner sep=0,anchor=east] at (0.6,-1.5) {\footnotesize -2.2};
            \end{scope}
        \end{tikzpicture}
    }
    \caption{Eddy current sensing system for imaging defects in metal pipes.
             (a) Schematic of the bobbin probe and three-quarters of the pipe, along with the arrangement of voxels for reconstruction.
             (b) Finite element simulation setup and simulated \ac{MFD} component difference.}
\end{figure}

The seconds example concerns the imaging of defect in a metal pipe using a bobbin probe. This cylindrical probe, equipped with coils and magnetic sensors, moves through the interior of the pipe. At each position along the pipe, a computed tomography image of any nearby defects is obtained. As the probe advances, these individual images are merged, eventually producing a continuous defect map of the entire pipe length.

To merge the local images obtained at each position, binary Bayes filtering \cite{RN980} is used. In this method, each voxel covered by multiple local images is updated by summing the logit values from each recovery. Logit values are clamped to a large number to prevent the probability of defect presence from being exactly 0 or 1, which is a standard practice in Bayesian filtering to avoid irreversible probability states. As the bobbin probe moves along the pipe, each new local image provides updated evidence, which adjusts the voxel's probability of containing a defect based on both current observations and prior accumulated information. Since summation is commutative, the order in which each local image is merged does not affect the result.

Figure \ref{fig:pipe-sensing-schematic} shows the schematic of a bobbin probe, which comprises two coils carrying current in the same direction and a circular array of ten magnetic sensors with circumferential sensing axes. The bobbin probe is positioned inside the aluminum pipe. In this setup, regions with high sensitivity—where defects can significantly affect the measured \ac{MFD}—are discretized into voxels for imaging. The pipe has an inner diameter of 21~mm and an outer diameter of 25~mm. The full 2~mm wall thickness of the pipe is discretized into four layers, and excitation frequencies of 1000, 2000, 4000, and 16000~Hz are chosen to approximately match the eddy current penetration depth to the depth of each layer. The linearized sensitivities for these voxels across all frequencies are precomputed as described in \cref{thm:eddy-current-linearize} and \eqref{eq:measurement-perturb-discretize}.

In this study, the \ac{MFD} component measured by the magnetic sensors is simulated using commercial finite element software. The simulation setup, shown in \cref{fig:pipe-sensing-simulation}, includes the air regions both inside and outside the metal pipe. The simulated \ac{MFD} component in the circumferential direction is also illustrated in the figure, and the sensor signals equal to ten discrete samples taken along this strip. The simulations are conducted as the bobbin probe moves along the pipe in 4 mm increments. These spatially sparse samples are then processed using binary vector recovery algorithms, followed by refinement with binary Bayes filtering, to reconstruct the defect map of the entire pipe. The imaging results for a pipe with defects of 2~mm diameter at various depths are presented in \cref{tbl:pipe-result}.
\begin{table}
    \caption{Metal pipe imaging results.}
    \label{tbl:pipe-result}

    \setlength{\tabcolsep}{4pt}
    \def\w{45mm}

    \centering
    \begin{tabular}{|cc|c|c|c|}
        \thickhline
        && \multicolumn{3}{c|}{} \\[-9pt]
        \multicolumn{2}{|c|}{\multirow{-6.8}{*}{\rotatebox{90}{Ground truth}}} &
        \multicolumn{3}{c|}{
            \begin{tikzpicture}
                \node[inner sep=0] at (0,0) {\includegraphics[width=\w]{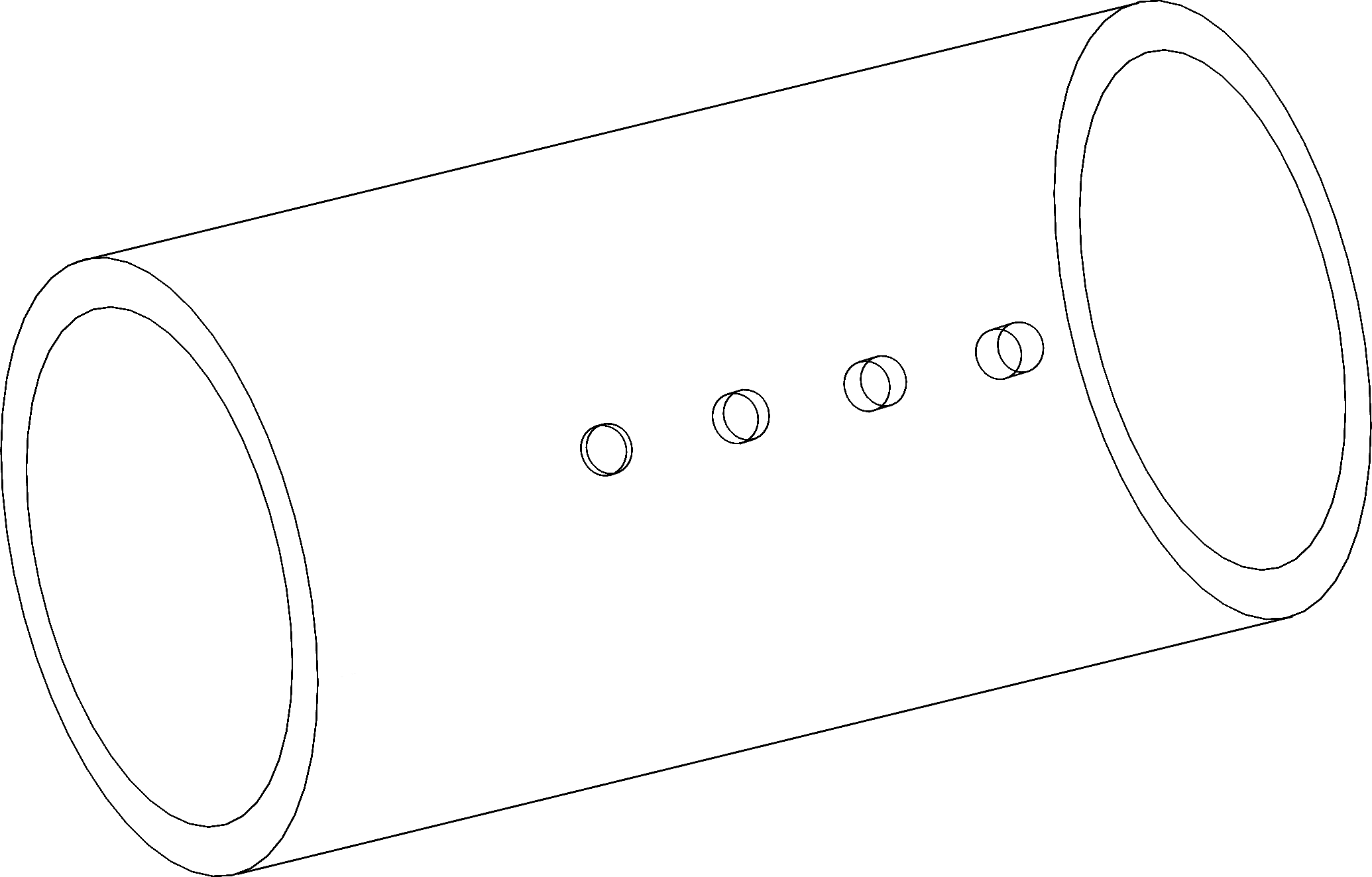}};
                \tikzstyle{label} = [rounded corners, fill=black, fill opacity=0.1, text opacity=1, inner sep=1.5pt]
                \node[label] at (0.1*\w,-0.2*\w) {\footnotesize 2~mm diameter defects};
            \end{tikzpicture}
        } \\
        \semithickhline
        & &
        \cref{alg:convex-optimization}    &
        \cref{alg:mean-field-approx}      &
        \cref{alg:approx-message-passing} \\
        \semithickhline
        \multirow{28.7}{*}{\hspace*{-1mm}\rotatebox{90}{Imaging result across radius}\hspace*{-2mm}} &
        {\footnotesize 10.5~mm} &&& \\[-9pt]
        & \multirow{1.6}{*}{\footnotesize 11.0~mm} &
        \includegraphics[width=\w]{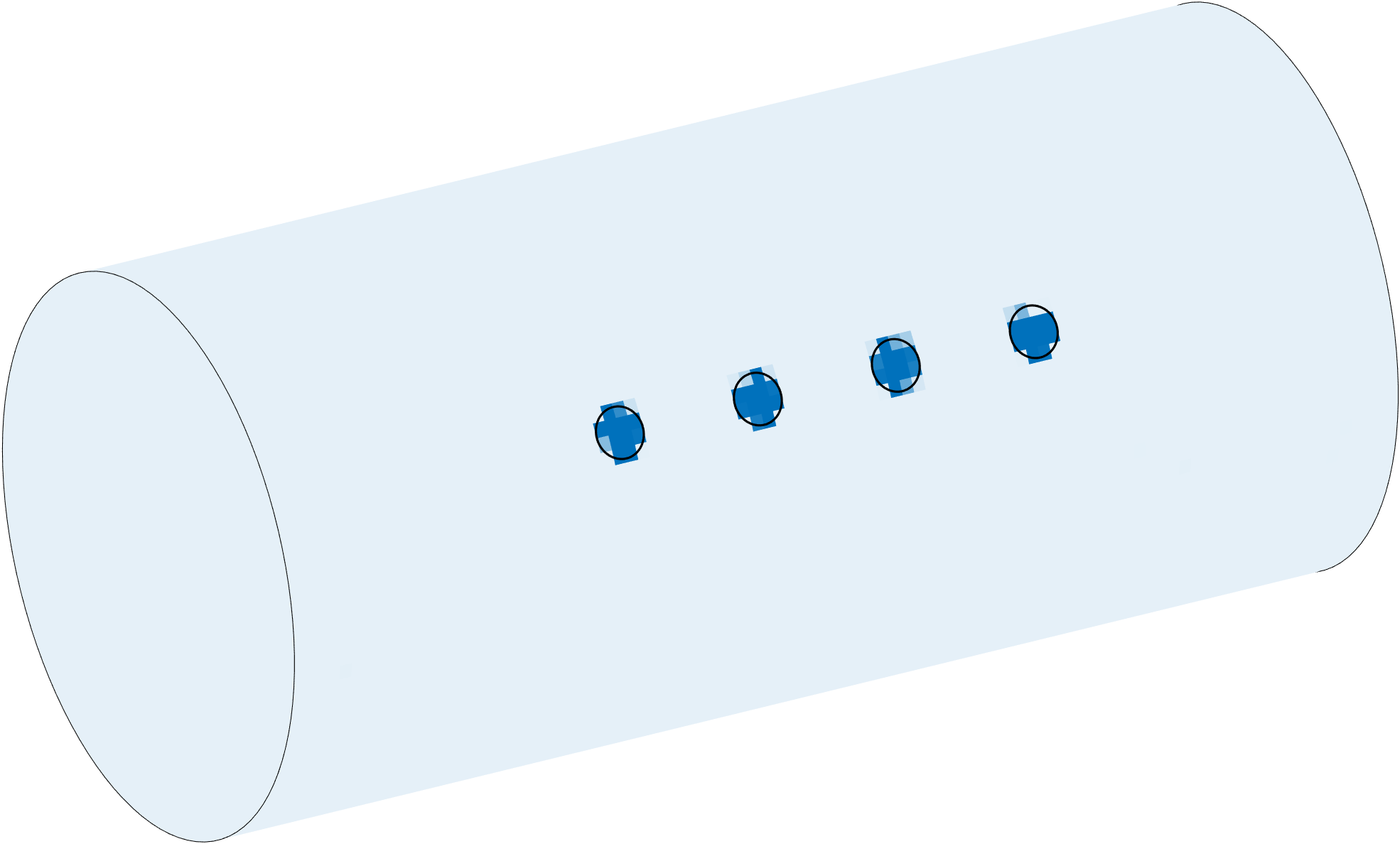} &
        \includegraphics[width=\w]{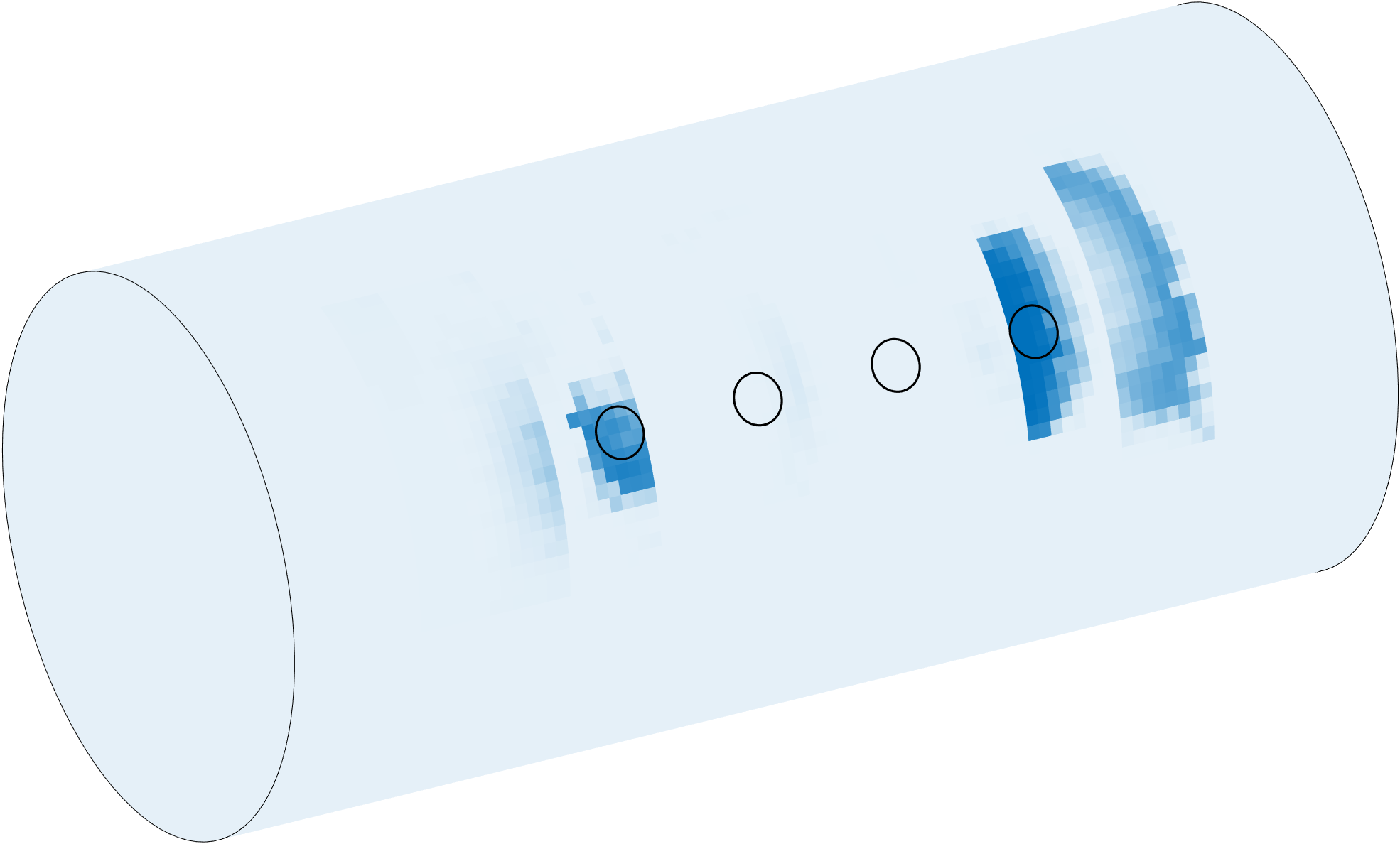}  &
        \includegraphics[width=\w]{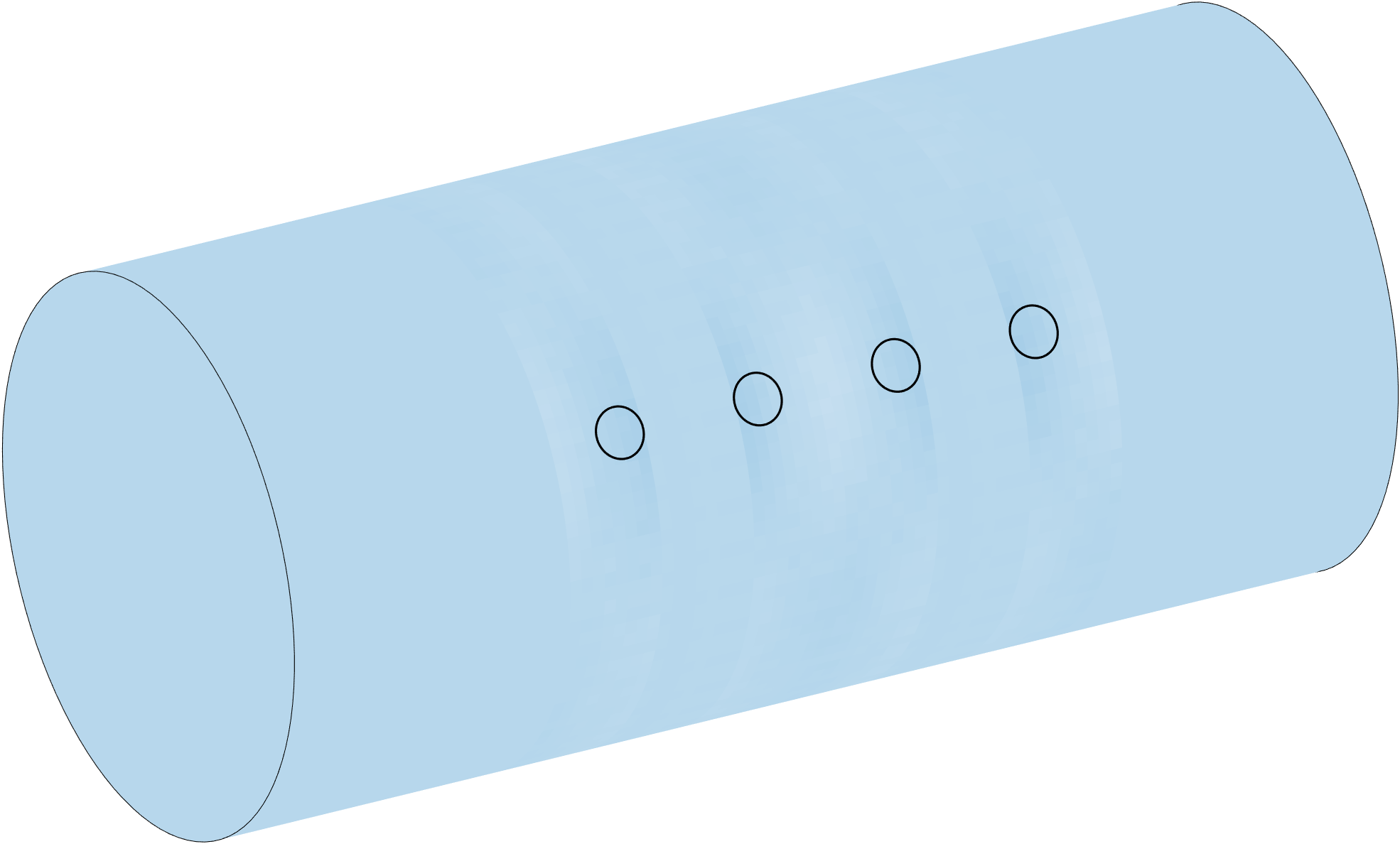} \\
        & \multirow{1.6}{*}{\footnotesize 11.5~mm} &
        \includegraphics[width=\w]{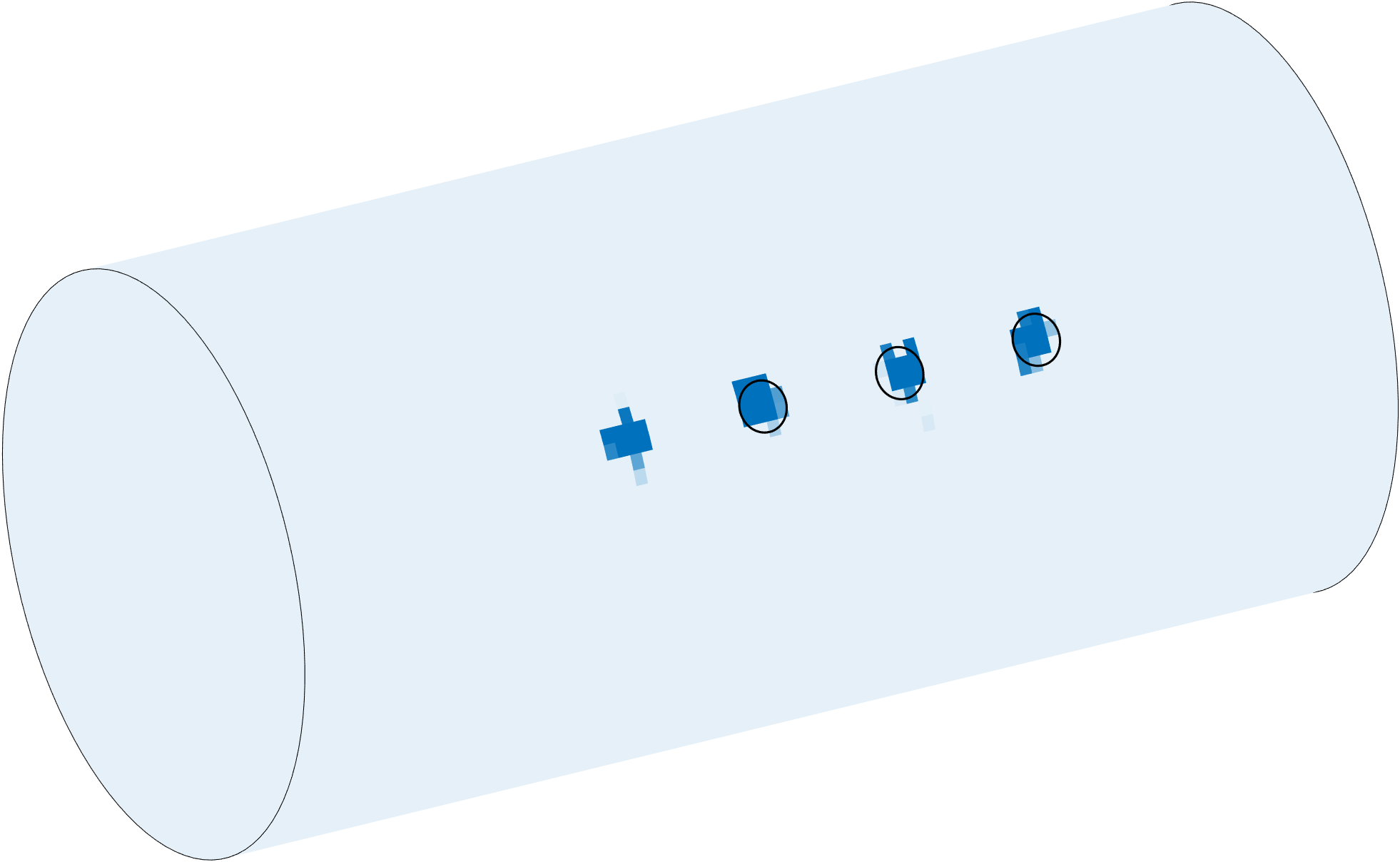} &
        \includegraphics[width=\w]{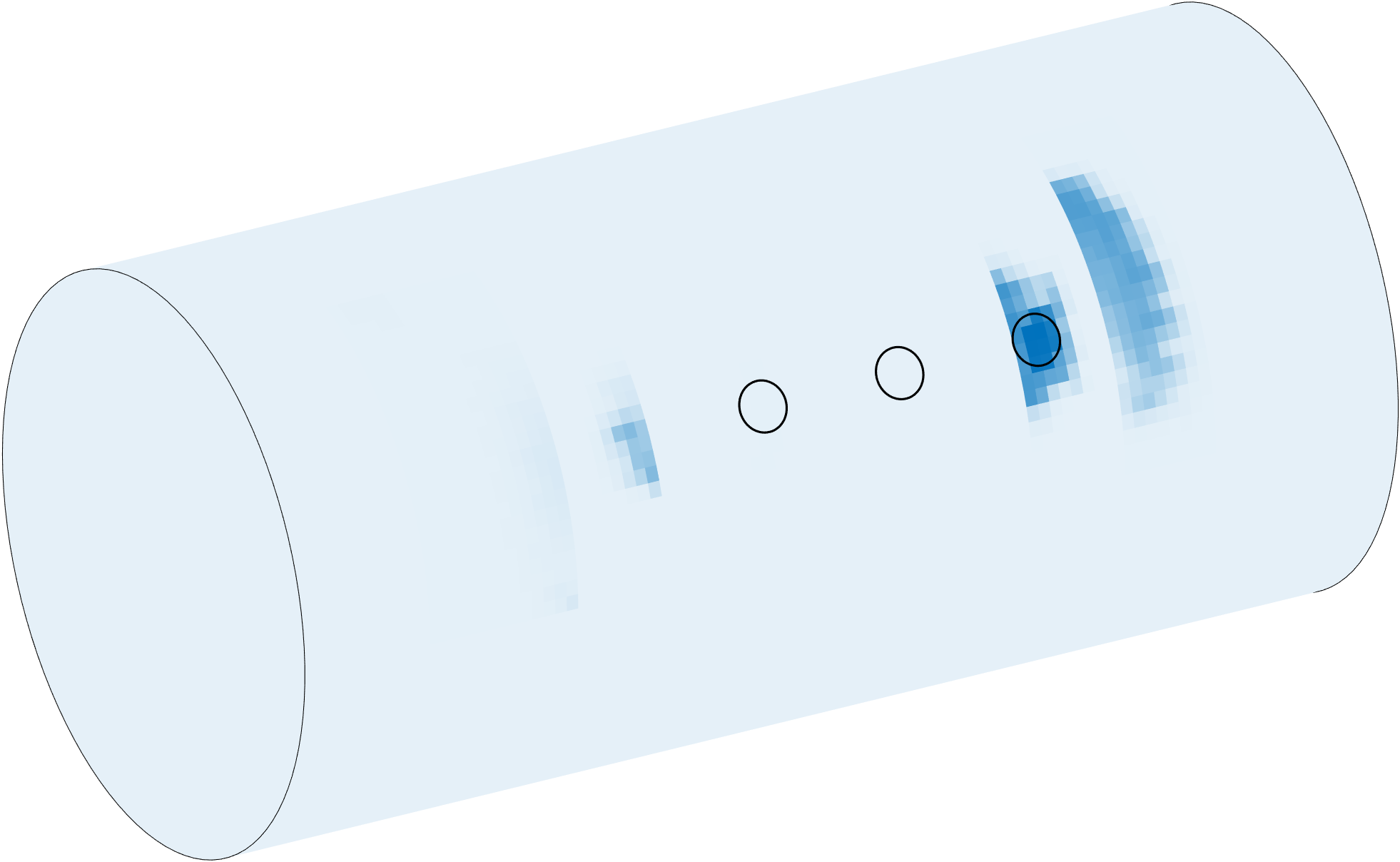}  &
        \includegraphics[width=\w]{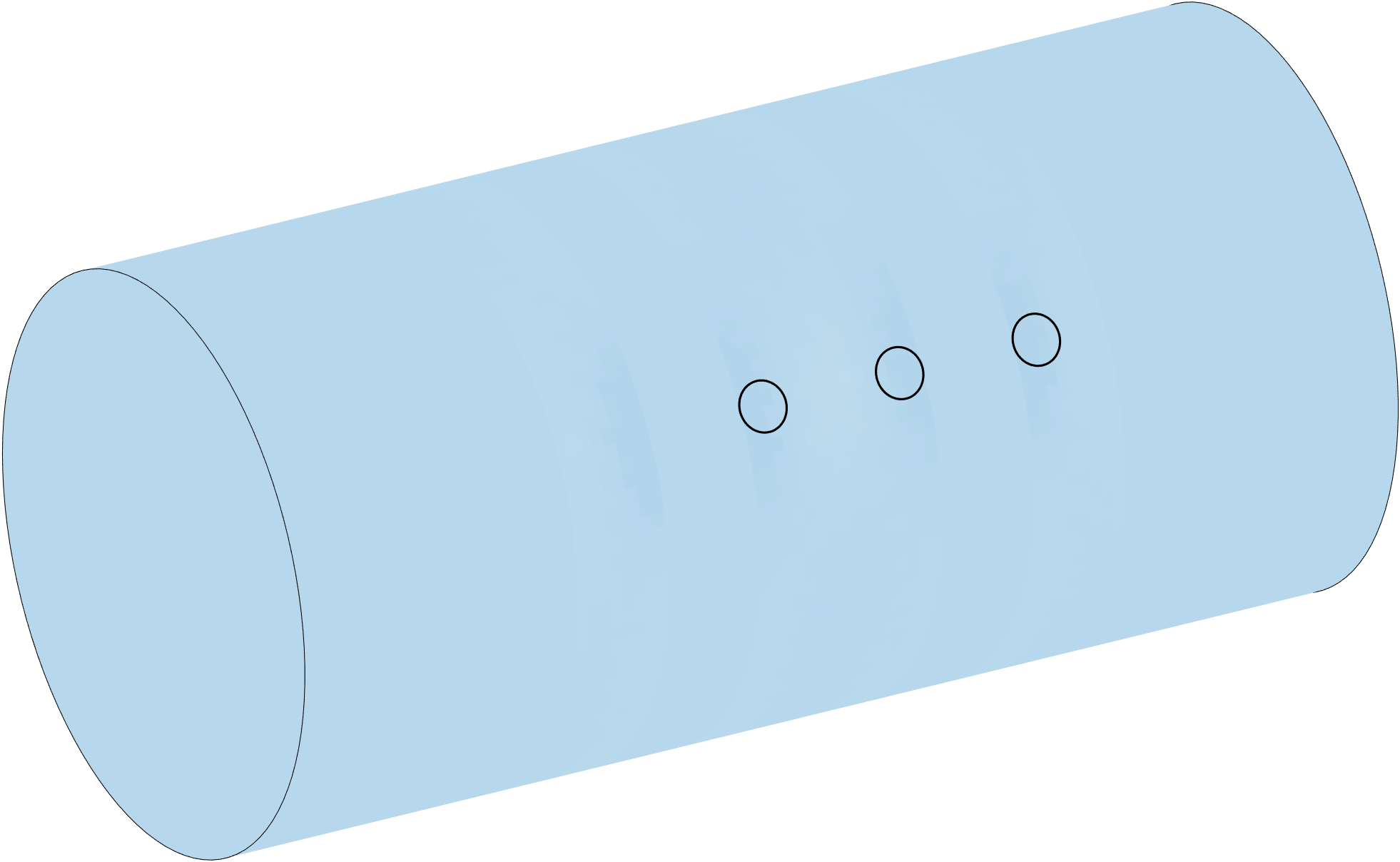} \\
        & \multirow{1.6}{*}{\footnotesize 12.0~mm} &
        \includegraphics[width=\w]{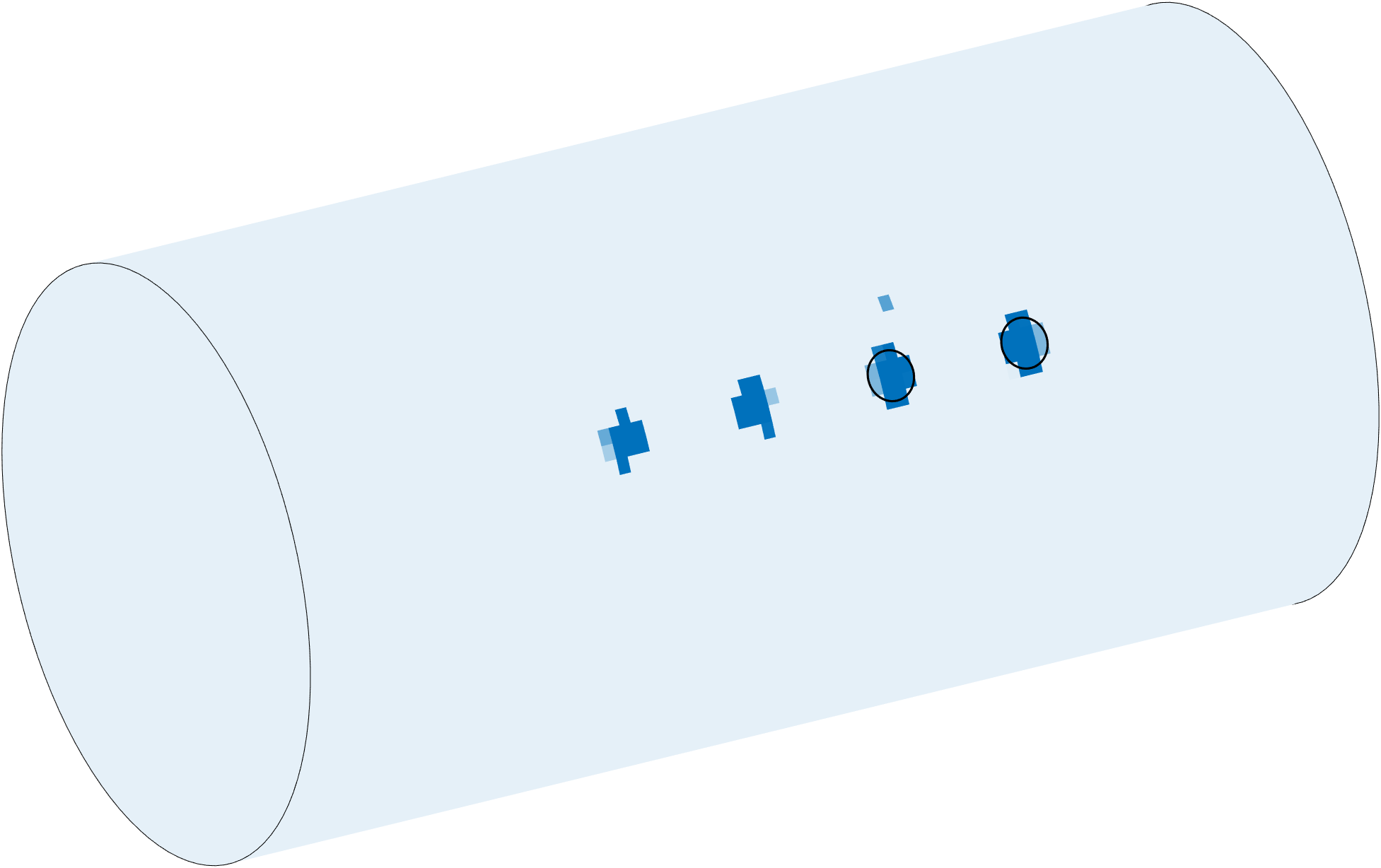} &
        \includegraphics[width=\w]{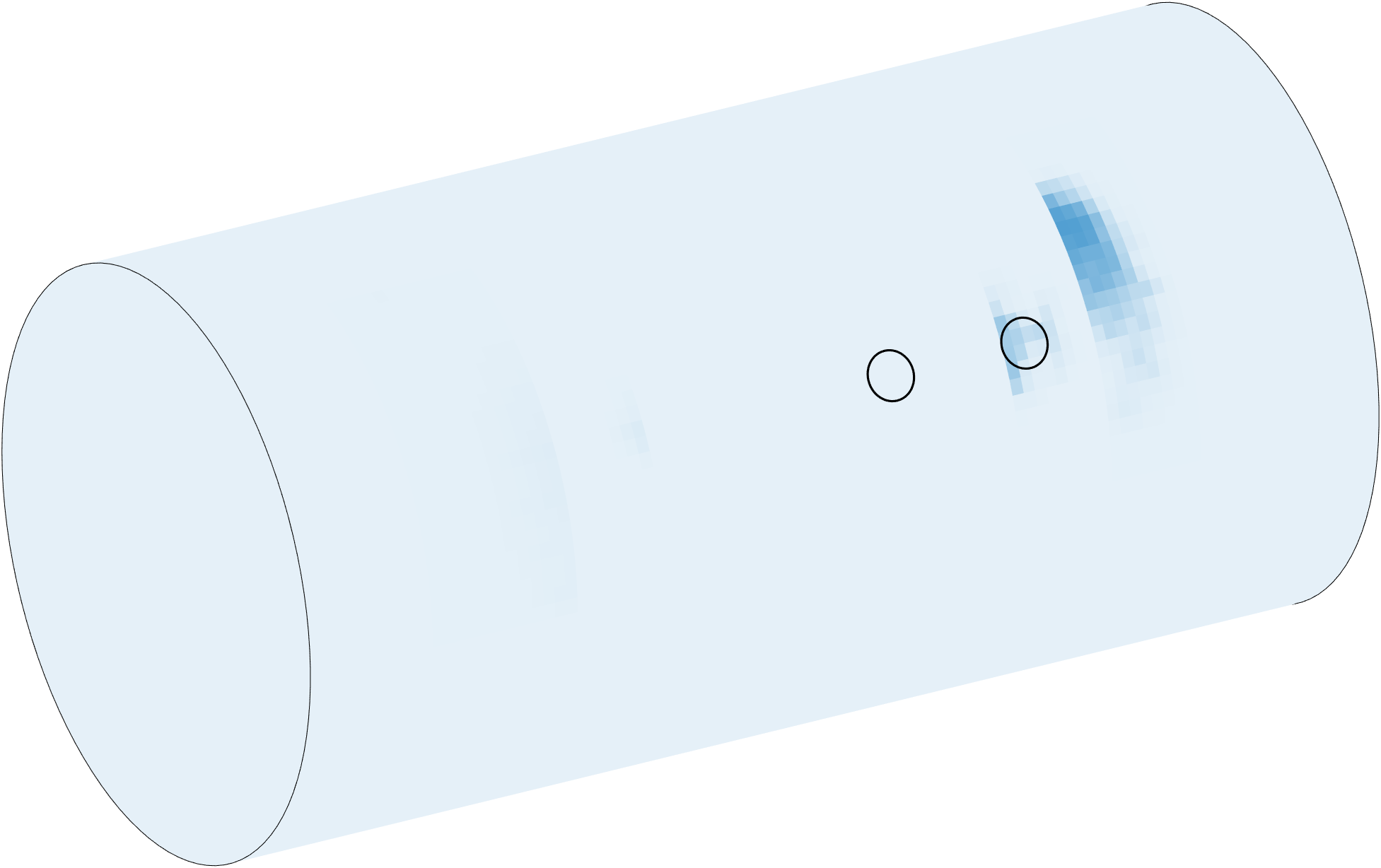}  &
        \includegraphics[width=\w]{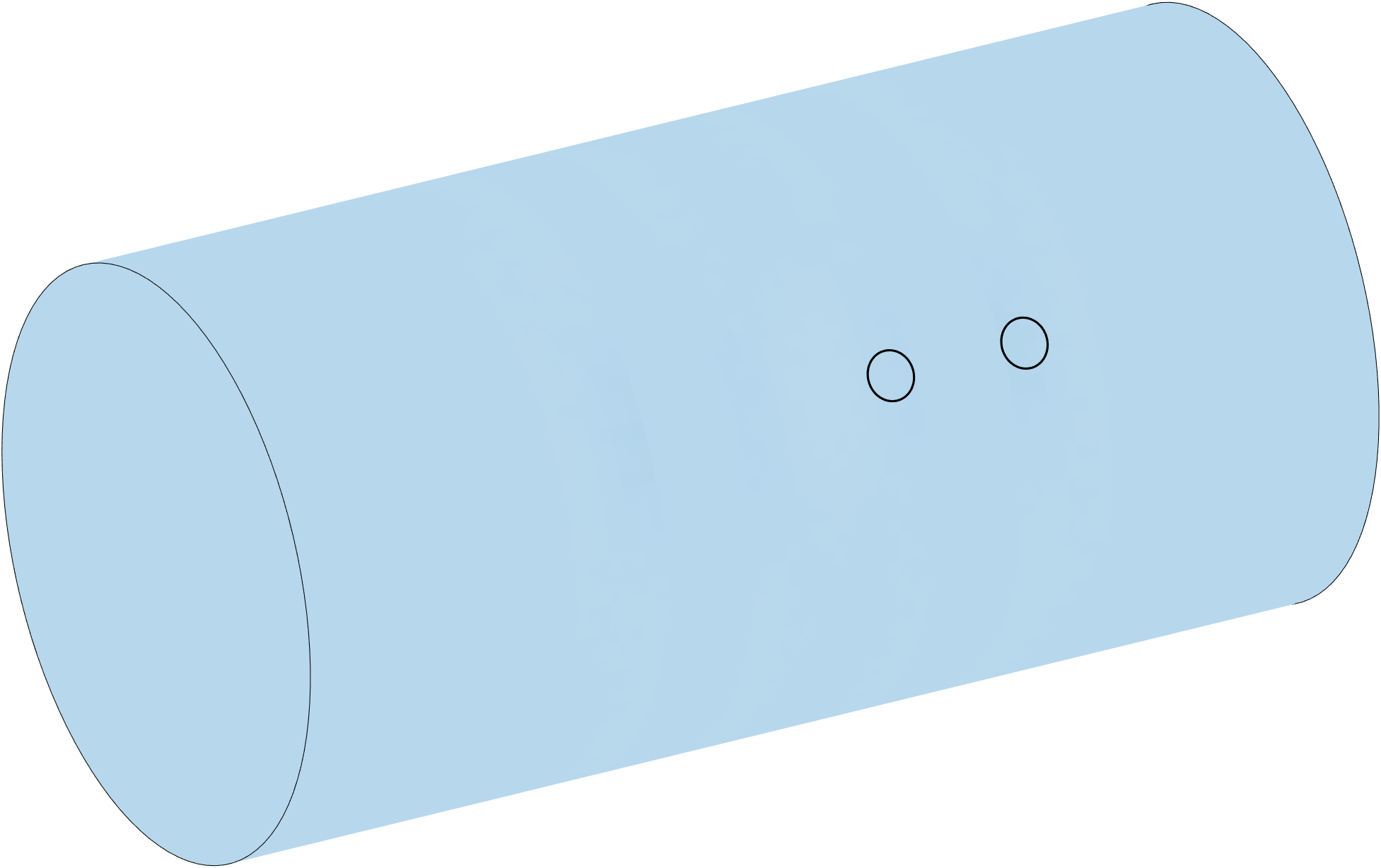} \\
        & \multirow{1.2}{*}{\footnotesize 12.5~mm} &
        \includegraphics[width=\w]{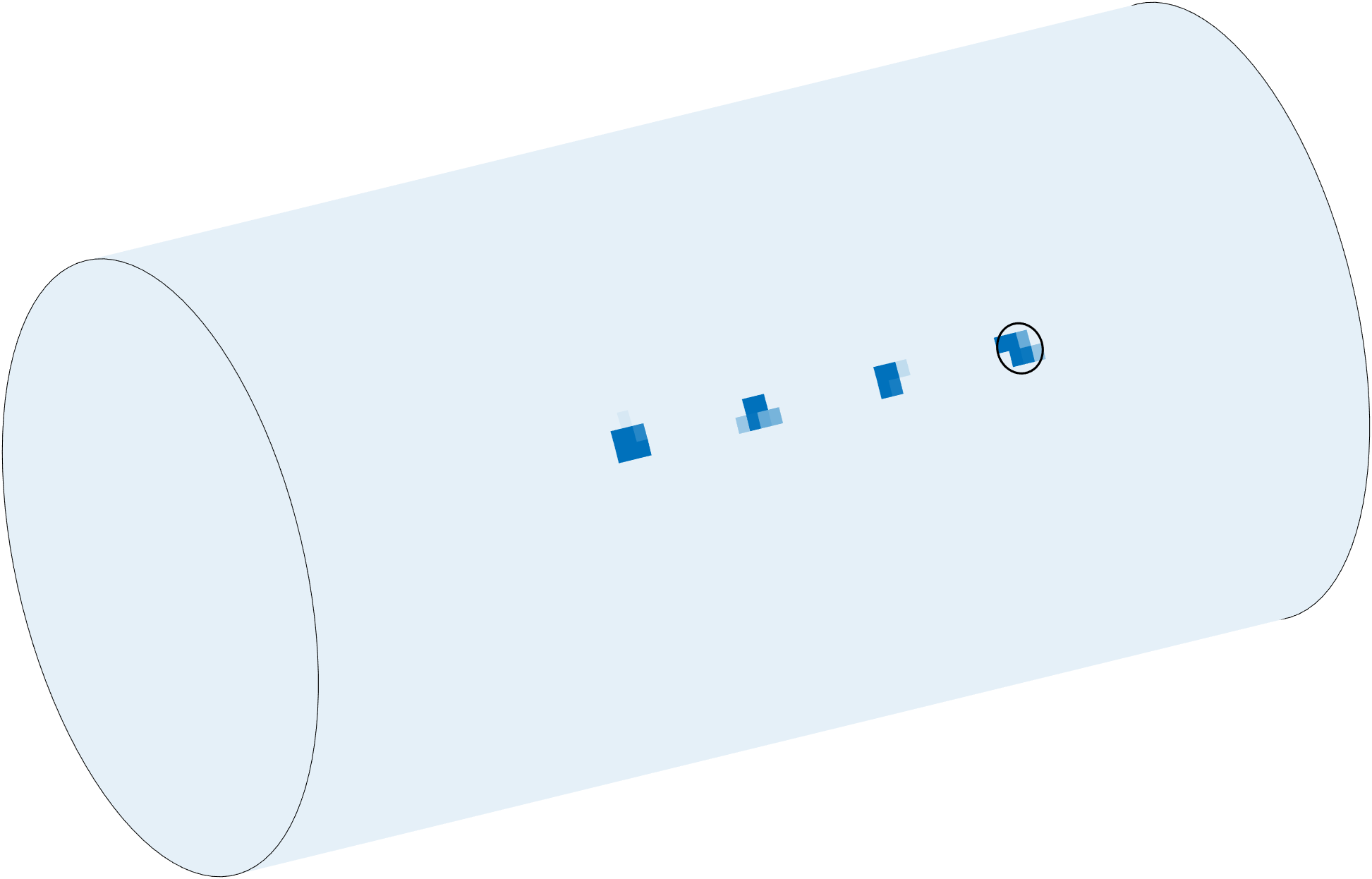} &
        \includegraphics[width=\w]{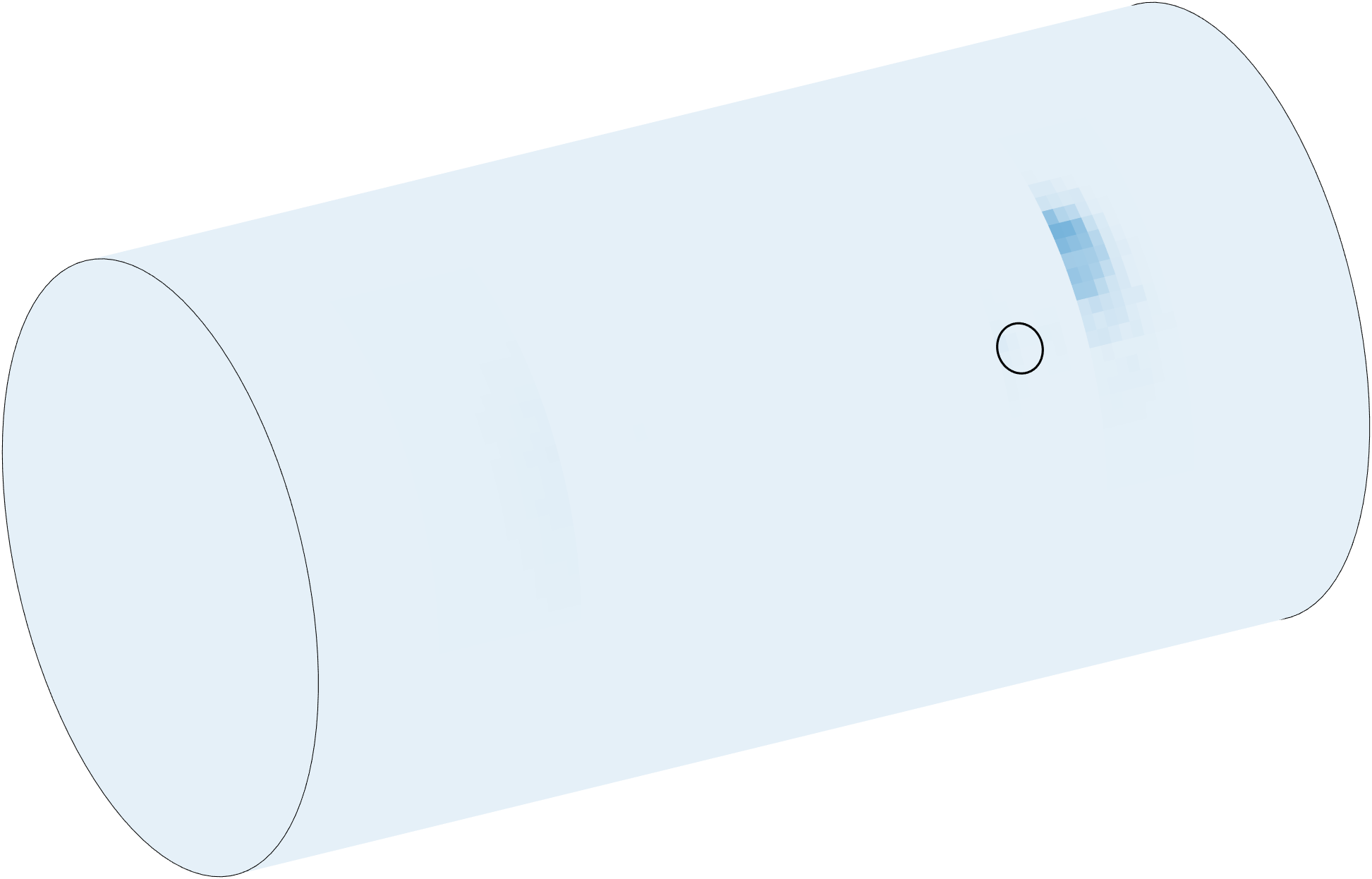}  &
        \includegraphics[width=\w]{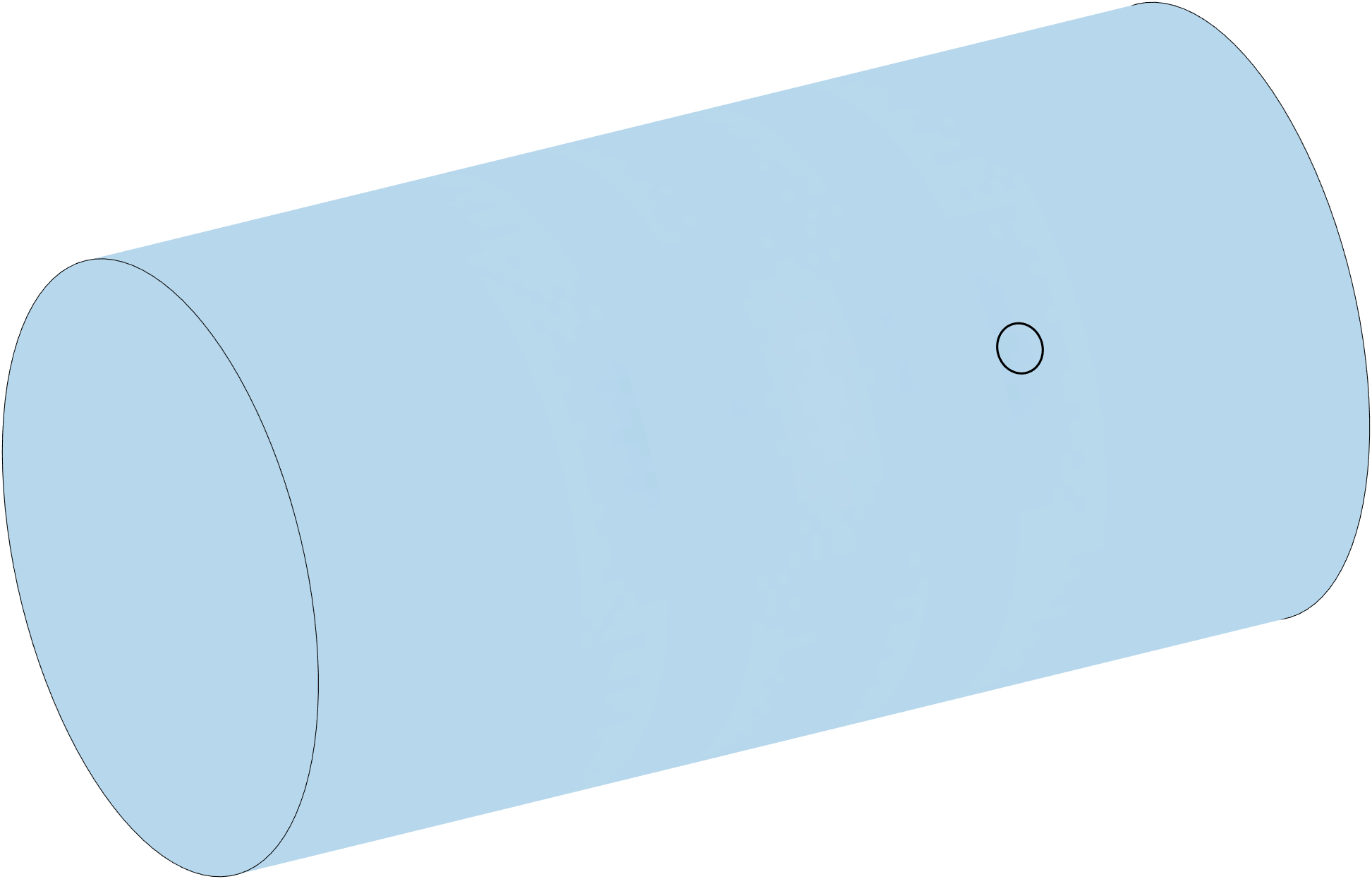} \\
        \thickhline
    \end{tabular}
\end{table}
From the results, several observations can be made:
\begin{enumerate}[nosep, leftmargin=*]
    \item In this example, sensor data from different axial or probe positions are combined using binary Bayes filtering, rather than assembling the data into a single vector and solving for the entire defect map simultaneously. The latter approach is similar to the method previously used for imaging metal plates. However, the binary Bayes filtering approach is computationally efficient, especially when the algorithm has a complexity higher than linear. By recovering individual images and merging them with binary Bayes filtering, only \cref{alg:convex-optimization} successfully identifies the defect locations. In contrast, the poor performance of \cref{alg:mean-field-approx} arises from its inability to construct images with sufficient resolution and accuracy from each local measurement.
    \item Similar to the results for metal plates, the result from \cref{alg:convex-optimization} here images defects as being deeper than their actual depth. This is primarily due to a mismatch between the model assumption of linear measurements and the actual nonlinearity of the physical measurement system.
\end{enumerate}

\section{Conclusions}

This study addresses the problem of recovering binary vectors from linear measurements, an extension of compressive sampling that not only focuses on sparsity but also incorporates the additional constraint of binary values. Two approaches are presented to solve this problem: convex optimization and Bayesian inference.
The convex optimization approach is motivated by the fact that solving the optimization problem with the binary constraint is an NP-hard problem. Therefore, the binary constraint is relaxed to its convex hull. Theoretical results are provided, demonstrating that the convex relaxation is tight when the vector is sufficiently sparse and the measurement matrix satisfies the \ac{BRIP} condition. Additionally, the convex optimization approach can be extended to handle multiple measurement vectors with varying noise variances by augmenting the measurement vectors and matrices with proper scaling.
In the Bayesian approach, a Bayesian network is constructed to represent the linear measurements of a binary vector. It is shown that both \ac{MAP} inference and exact posterior inference on the Bayesian network are NP-hard problems. To address this, an approximate inference algorithm is developed that finds a mean field distribution close to the true posterior in the \ac{KL}-divergence measure. Another approximate inference algorithm is developed using message passing on a cluster graph.
The proposed algorithms are compared with existing compressive sampling methods and demonstrate superior performance in recovering binary vectors. These results suggest that binary recovery algorithms are preferable over sparse recovery algorithms when the binary constraint is applicable.

On the application side, this study demonstrates that the problem of imaging defects in metal structures using eddy current sensing can be formulated as a binary vector recovery problem. This involves analytically linearizing the relationship between the measured \ac{MFD} and the physical property fields. The binary vector recovery algorithms are then applied to perform the imaging, which are able to generate depthwise tomography images of defects in metal plates and pipes.

However, several limitations remain, particularly the challenge of scaling these algorithms to handle larger numbers of vector entries or imaging voxels. In this study, only the mean field approximate inference algorithm exhibits computational complexity that is linear with respect to the number of recovery entries, though it provides biased solutions. In contrast, the convex optimization method offers more accurate solutions but at the cost of higher computational complexity. Future research could focus on optimizing and combining these algorithms to maintain linear complexity while improving recovery performance. Additionally, exploring the application of binary recovery algorithms to other physical measurement systems would be a valuable direction for future work.

\section*{Acknowledgement}
This work is supported by the National Science and Technology Council, Taiwan, under Grant NSTC \mbox{112-2221-E-002-155-MY2}.

\appendix
\section{Proof of \cref{thm:tight-convex}}
\label{app:tight-convex-proof}

Before proofing \cref{thm:tight-convex}, the following lemma is required.
\begin{lemma}  \label{thm:RIP-special-property}
    Suppose $\vec{x}$ and $\vec{z}$ are vectors with disjoint support, $|\support{\vec{x}}| + |\support{\vec{z}}| \leq k$, and $\|\vec{x}\|_\infty \leq 1$, $\|\vec{z}\|_\infty \leq 1$. Then,
    \begin{equation*}
        |\langle \mat{\Phi} \vec{x}, \mat{\Phi} \vec{z} \rangle| \leq \delta_s^\mathrm{b} \, \|\vec{x}\|_2 \, \|\vec{z}\|_2
    \end{equation*}
\end{lemma}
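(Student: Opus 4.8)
The plan is to mimic the classical argument showing that the restricted isometry property controls inner products of vectors supported on disjoint index sets, via the polarization identity, while taking special care that the extra $\|\cdot\|_\infty \le 1$ restriction built into the \ac{BRIP} is never violated along the way.

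First I would dispose of the trivial case $\vec{x} = \vec{0}$ or $\vec{z} = \vec{0}$, in which both sides vanish. Otherwise, set $\vec{u} = \vec{x}/\|\vec{x}\|_2$ and $\vec{v} = \vec{z}/\|\vec{z}\|_2$, so that $\|\vec{u}\|_2 = \|\vec{v}\|_2 = 1$. The key observation — the one point where this argument departs from the standard \ac{RIP} proof — is that $\ell_2$-normalization cannot increase the $\ell_\infty$ norm: since $\|\vec{x}\|_2 \ge \|\vec{x}\|_\infty$ for every vector, we obtain $\|\vec{u}\|_\infty = \|\vec{x}\|_\infty / \|\vec{x}\|_2 \le 1$, and likewise $\|\vec{v}\|_\infty \le 1$. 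Hence $\vec{u}$ and $\vec{v}$ remain inside the set of $\ell_\infty$-bounded vectors to which the \ac{BRIP} applies.

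Next I would form $\vec{u} \pm \vec{v}$. Because $\vec{u}$ and $\vec{v}$ inherit the disjoint supports of $\vec{x}$ and $\vec{z}$, each coordinate of $\vec{u} \pm \vec{v}$ equals a single coordinate of either $\vec{u}$ or $\vec{v}$; hence $\|\vec{u} \pm \vec{v}\|_\infty \le 1$, the support of $\vec{u} \pm \vec{v}$ has cardinality $|\support{\vec{x}}| + |\support{\vec{z}}| \le s$, and — again by disjointness together with the unit normalization — $\|\vec{u} \pm \vec{v}\|_2^2 = \|\vec{u}\|_2^2 + \|\vec{v}\|_2^2 = 2$. Both $\vec{u} + \vec{v}$ and $\vec{u} - \vec{v}$ are therefore admissible $s$-sparse, $\ell_\infty$-bounded vectors, so I can apply the \ac{BRIP} of order $s$ to each, obtaining $\|\mat{\Phi}(\vec{u} \pm \vec{v})\|_2^2 \le (1 + \delta_s^\mathrm{b})\cdot 2$ and $\|\mat{\Phi}(\vec{u} \pm \vec{v})\|_2^2 \ge (1 - \delta_s^\mathrm{b})\cdot 2$.

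Finally I would invoke the polarization identity $\langle \mat{\Phi}\vec{u}, \mat{\Phi}\vec{v}\rangle = \tfrac14\bigl(\|\mat{\Phi}(\vec{u}+\vec{v})\|_2^2 - \|\mat{\Phi}(\vec{u}-\vec{v})\|_2^2\bigr)$. Substituting the upper bound for one term and the lower bound for the other (and swapping them to bound the absolute value) yields $|\langle \mat{\Phi}\vec{u}, \mat{\Phi}\vec{v}\rangle| \le \tfrac14\bigl((1+\delta_s^\mathrm{b})2 - (1-\delta_s^\mathrm{b})2\bigr) = \delta_s^\mathrm{b}$. Rescaling through the bilinearity $\langle \mat{\Phi}\vec{x}, \mat{\Phi}\vec{z}\rangle = \|\vec{x}\|_2\|\vec{z}\|_2\,\langle \mat{\Phi}\vec{u}, \mat{\Phi}\vec{v}\rangle$ then gives the claimed bound. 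The only genuine subtlety — and the reason the lemma is stated with $\delta_s^\mathrm{b}$ rather than the ordinary $\delta_s$ — is this normalization step: I must confirm that rescaling to unit $\ell_2$ length keeps the vectors inside the $\|\cdot\|_\infty \le 1$ ball, which is precisely what the inequality $\|\cdot\|_2 \ge \|\cdot\|_\infty$ guarantees.
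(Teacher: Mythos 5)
Your proof is correct and follows essentially the same route as the paper's: normalize $\vec{x}$ and $\vec{z}$ to unit $\ell_2$ norm, use disjointness to get $\|\tilde{\vec{x}} \pm \tilde{\vec{z}}\|_2^2 = 2$, apply the polarization identity together with the \ac{BRIP} bounds, and rescale. You are in fact slightly more careful than the paper, which silently assumes that the $\ell_2$-normalized vectors and their sums/differences still satisfy the $\|\cdot\|_\infty \leq 1$ restriction required by the \ac{BRIP}; your explicit check via $\|\cdot\|_\infty \leq \|\cdot\|_2$ and coordinate-wise disjointness closes that small gap.
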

\begin{proof}
    Let $\tilde{\vec{x}} = \vec{x} / \|\vec{x}\|_2$ and $\tilde{\vec{z}} = \vec{z} / \|\vec{z}\|_2$.
    Since $\tilde{\vec{x}}$ and $\tilde{\vec{z}}$ have disjoint support, $\|\tilde{\vec{x}} + \tilde{\vec{z}}\|_2^2 = \|\tilde{\vec{x}} - \tilde{\vec{z}}\|_2^2 = 2$. Thus,
    \begin{equation*}
    \begin{split}
        |\langle \mat{\Phi} \tilde{\vec{x}}, \mat{\Phi} \tilde{\vec{z}} \rangle|
        &=    \frac{1}{4} \bigl| \|\mat{\Phi} \tilde{\vec{x}} + \mat{\Phi} \tilde{\vec{z}}\|_2^2 - \|\mat{\Phi} \tilde{\vec{x}} - \mat{\Phi} \tilde{\vec{z}}\|_2^2 \bigr|  \\
        &\leq \frac{1}{4} \bigl| (1 + \delta_s^\mathrm{b}) \, \|\tilde{\vec{x}} + \tilde{\vec{z}}\|_2^2 - (1 - \delta_s^\mathrm{b}) \, \|\tilde{\vec{x}} - \tilde{\vec{z}}\|_2^2 \bigr|  \\
        &= \delta_s^\mathrm{b} .
    \end{split}
    \end{equation*}
    Multiplying both sides by $\|\vec{x}\|_2 \, \|\vec{z}\|_2$ gives the result.
\end{proof}

The following is the proof for \cref{thm:tight-convex}.
\begin{proof}
    Set \mbox{$\vec{h} = \vec{x}^\star - \vec{x}_\text{true}$}. Given that $\vec{x}_\text{true}$ is binary and \mbox{$\vec{0} \preceq \vec{x}^\star \preceq \vec{1}$}, it follows that \mbox{$\|\vec{h}\|_\infty \leq 1$}.
    Define the disjoint subsets $\set{J}_0, \set{J}_1, \set{J}_2, \set{J}_3, \dots$ as follows:
    \begin{itemize}[nosep, leftmargin=*, label={}]
        \item $\set{J}_0 = \support{\vec{x}_\text{true}}$,
        \item $\set{J}_1$ indices the $s$ entries with the largest absolute value of $\vec{h}_{\set{J}_0^\cpm}$,
        \item $\set{J}_2$ indices the $s$ entries with the largest absolute value of $\vec{h}_{(\set{J}_0 \cup \set{J}_1)^\cpm}$,
        \item $\set{J}_3$ indices the $s$ entries with the largest absolute value of $\vec{h}_{(\set{J}_0 \cup \set{J}_1 \cup \set{J}_2)^\cpm}$,
        \item \vspace*{-2mm}\hspace*{4cm}$\vdots$
    \end{itemize}
    The proof is established by bounding $\|\vec{h}_{\set{J}_0 \cup \set{J}_1}\|_2$ and $\|\vec{h}_{(\set{J}_0 \cup \set{J}_1)^\cpm}\|_2$, and then summing these bounds.
    Since \mbox{$\vec{h}_{\set{J}_0 \cup \set{J}_1} = \vec{h}_{\set{J}_0} + \vec{h}_{\set{J}_1} = \vec{h} - \sum_{\ell\geq 2} \vec{h}_{\set{J}_\ell}$}, if follows that
    \begin{equation}  \label{eq:tight-convex-proof-1}
    \begin{split}
        (1 - \delta_{2s}^\mathrm{b}) \, \|\vec{h}_{\set{J}_0 \cup \set{J}_1}\|_2^2
        &\leq \|\mat{\Phi} \vec{h}_{\set{J}_0 \cup \set{J}_1}\|_2^2  \\
        &=    \bigl\langle \mat{\Phi} \vec{h}_{\set{J}_0 \cup \set{J}_1}, \mat{\Phi} \vec{h} \bigr\rangle + \bigl\langle \mat{\Phi} \vec{h}_{\set{J}_0} + \mat{\Phi} \vec{h}_{\set{J}_1}, -\sum_{\ell\geq 2} \mat{\Phi} \vec{h}_{\set{J}_\ell} \bigr\rangle  \\
        &\leq \|\mat{\Phi} \vec{h}_{\set{J}_0 \cup \set{J}_1}\|_2 \, \|\mat{\Phi} \vec{h}\|_2 + \sum_{\ell \geq 2} \bigl( | \langle \mat{\Phi} \vec{h}_{\set{J}_0}, \mat{\Phi} \vec{h}_{\set{J}_\ell} \rangle | + | \langle \mat{\Phi} \vec{h}_{\set{J}_1}, \mat{\Phi} \vec{h}_{\set{J}_\ell} \rangle | \bigr) .
    \end{split}
    \end{equation}
    The first term on the right-hand side is bounded by
    \begin{equation}  \label{eq:tight-convex-proof-2}
    \begin{split}
        \|\mat{\Phi} \vec{h}_{\set{J}_0 \cup \set{J}_1}\|_2 \, \|\mat{\Phi} \vec{h}\|_2
        &\leq \sqrt{1 + \delta_{2s}^\text{b}} \, \|\vec{h}_{\set{J}_0 \cup \set{J}_1}\|_2 \, \|\mat{\Phi} \vec{h}\|_2  \\
        &\overset{2}{\leq} 2 \epsilon \sqrt{1 + \delta_{2s}^\text{b}} \, \|\vec{h}_{\set{J}_0 \cup \set{J}_1}\|_2 ,
    \end{split}
    \end{equation}
    where inequality 2 holds because $\|\mat{\Phi} \vec{h}\|_2 = \|\mat{\Phi} (\vec{x}^\star - \vec{x}_\text{true})\|_2 \leq \|\mat{\Phi} \vec{x}^\star - \vec{y}\|_2 + \|\vec{y} - \mat{\Phi} \vec{x}_\text{true}\|_2 \leq 2\epsilon$.
    The second term on the right-hand side is bounded by
    \begin{equation}  \label{eq:tight-convex-proof-3}
    \begin{split}
        \sum_{\ell \geq 2}\bigl( |\langle\mat{\Phi}\vec{h}_{\set{J}_0},\mat{\Phi}\vec{h}_{\set{J}_\ell}\rangle| + |\langle\mat{\Phi}\vec{h}_{\set{J}_1},\mat{\Phi}\vec{h}_{\set{J}_\ell}\rangle| \bigr)
        &\overset{1}{\leq} \delta_{2s}^\text{b} \left(\|\vec{h}_{\set{J}_0}\|_2 + \|\vec{h}_{\set{J}_1}\|_2\right) \sum_{\ell \geq 2} \|\vec{h}_{\set{J}_\ell}\|_2 \\
        &\overset{2}{\leq} \delta_{2s}^\text{b} \sqrt{2} \|\vec{h}_{\set{J}_0\cup\set{J}_1}\|_2 \sum_{\ell \geq 2} \|\vec{h}_{\set{J}_\ell}\|_2 ,
    \end{split}
    \end{equation}
    where
    \begin{itemize}[nosep, leftmargin=*, label={}]
        \item inequality 1 uses \cref{thm:RIP-special-property},
        \item inequality 2 holds because $\bigl(\|\vec{h}_{\set{J}_0}\|_2 + \|\vec{h}_{\set{J}_1}\|_2\bigr)^2 \leq 2 \|\vec{h}_{\set{J}_0}\|_2^2 + 2 \|\vec{h}_{\set{J}_1}\|_2^2 = 2 \|\vec{h}_{\set{J}_0\cup\set{J}_1}\|_2^2$.
    \end{itemize}
    Furthermore, $\sum_{\ell \geq 2} \|\vec{h}_{\set{J}_\ell}\|_2$ is bounded as follows:
    \begin{equation}  \label{eq:tight-convex-proof-4}
    \begin{split}
        \sum_{\ell \geq 2} \|\vec{h}_{\set{J}_\ell}\|_2
        &\overset{1}{\leq} \sum_{\ell \geq 2} \sqrt{s} \, \|\vec{h}_{\set{J}_\ell}\|_\infty  \\
        &\overset{2}{\leq} \sum_{\ell \geq 2} \frac{1}{\sqrt{s}} \, \|\vec{h}_{\set{J}_{\ell-1}}\|_1
         =                 \frac{1}{\sqrt{s}} \|\vec{h}_{\set{J}_0^\cpm}\|_1  \\
        &\overset{3}{\leq} \frac{1}{\sqrt{s}} \|\vec{h}_{\set{J}_0}\|_1  \\
        &\overset{4}{\leq} \|\vec{h}_{\set{J}_0}\|_2
         \leq              \|\vec{h}_{\set{J}_0\cup\set{J}_1}\|_2 ,
    \end{split}
    \end{equation}
    where
    \begin{itemize}[nosep, leftmargin=*, label={}]
        \item inequality 1 holds because $\|\vec{z}\|_2 \leq \sqrt{s} \|\vec{z}\|_\infty$ for any $s$-sparse $\vec{z}$,
        \item inequality 2 holds because $\|\vec{h}_{\set{J}_\ell}\|_\infty \geq \|\vec{h}_{\set{J}_{\ell-1}}\|_1 / s$,
        \item inequality 3 holds because $\|\vec{x}_\text{true}\|_1 \geq \|\vec{x}^\star\|_1 = \|\vec{x}_\text{true} + \vec{h}\|_1 = \|\vec{x}_\text{true} + \vec{h}_{\set{J}_0}\|_1 + \|\vec{h}_{\set{J}_0^\cpm}\|_1 \geq \|\vec{x}_\text{true}\|_1 - \|\vec{h}_{\set{J}_0}\|_1 + \|\vec{h}_{\set{J}_0^\cpm}\|_1$,
        \item inequality 4 holds because $\|\vec{z}\|_1 \leq \sqrt{s} \|\vec{z}\|_2$ for any $s$-sparse $\vec{z}$.
    \end{itemize}
    Substituting \eqref{eq:tight-convex-proof-2}, \eqref{eq:tight-convex-proof-3}, and \eqref{eq:tight-convex-proof-4} into \eqref{eq:tight-convex-proof-1} and rearranging yields
    \begin{equation}  \label{eq:tight-convex-proof-5}
        \bigl( 1 - (\sqrt{2}+1) \, \delta_{2s}^\text{b} \bigr) \, \|\vec{h}_{\set{J}_0\cup\set{J}_1}\|_2
        \leq 2\epsilon \sqrt{1 + \delta_{2s}^\text{b}} .
    \end{equation}
    The bound for $\|\vec{h}_{(\set{J}_0 \cup \set{J}_1)^\cpm}\|_2$ is simpler,
    \begin{equation}  \label{eq:tight-convex-proof-6}
        \|\vec{h}_{(\set{J}_0 \cup \set{J}_1)^\cpm}\|_2
        =                 \| \sum_{\ell\geq 2} \vec{h}_{\set{J}_\ell} \|_2
        \leq              \sum_{\ell\geq 2} \|\vec{h}_{\set{J}_\ell}\|_2
        \overset{2}{\leq} \|\vec{h}_{\set{J}_0\cup\set{J}_1}\|_2 ,
    \end{equation}
    where inequality 2 directly comes from \eqref{eq:tight-convex-proof-4}.
    Combining \eqref{eq:tight-convex-proof-5} and \eqref{eq:tight-convex-proof-6} leads to the following result
    \begin{equation}
        \|\vec{h}\|_2
        \leq \|\vec{h}_{\set{J}_0\cup\set{J}_1}\|_2 + \|\vec{h}_{(\set{J}_0 \cup \set{J}_1)^\cpm}\|_2
        \leq \frac{4 \sqrt{1 + \delta_{2s}^\text{b}}}{1 - (\sqrt{2}+1) \, \delta_{2s}^\text{b}} \epsilon .
    \end{equation}
    This inequality holds as long as $1 - (\sqrt{2}+1) \, \delta_{2s}^\text{b} > 0$, which implies $\delta_{2s}^\text{b} < \sqrt{2}-1$.
\end{proof}

\section{Linearization of the eddy current system}
\label{app:linearization-proof}

The governing equations for the eddy current system, previously given in \eqref{eq:electromagnetic-system1}, are repeated here for convenience.
\begin{subequations}  \label{eq:electromagnetic-system}
\begin{align}
    \nabla \times \vec{H} &= \sigma \vec{E} + \vec{J}_s,  \\
    \nabla \times \vec{E} &= -j \omega \mu \vec{H} .
\end{align}
\end{subequations}
Perturbations in the electrical conductivity field, $\delta \sigma$, and in the magnetic permeability field, $\delta \mu$, induce corresponding perturbations in the electric field, $\delta \vec{E}$, and in the magnetic field intensity, $\delta \vec{H}$. The perturbed eddy current system is then governed by
\begin{subequations}  \label{eq:electromagnetic-system-perturbed}
\begin{align}
    \nabla \times (\vec{H} + \delta \vec{H}) &= (\sigma + \delta \sigma)(\vec{E} + \delta \vec{E}) + \vec{J}_s,  \\
    \nabla \times (\vec{E} + \delta \vec{E}) &= -j \omega (\mu + \delta \mu)(\vec{H} + \delta \vec{H}) .
\end{align}
\end{subequations}
Next, consider an alternative electromagnetic system, governed by
\begin{subequations}  \label{eq:electromagnetic-system-alternate}
\begin{align}
    \nabla \times \acute{\vec{H}} &= \sigma \acute{\vec{E}} + \acute{\vec{J}}_s ,  \\
    \nabla \times \acute{\vec{E}} &= -j \omega \mu \acute{\vec{H}} - \acute{\vec{L}}_s ,
\end{align}
\end{subequations}
where $\acute{\vec{L}}_s$ represents a source magnetic current density, and the accent symbol $\ \acute{}\ $ indicates that these equations describe a distinct system from \eqref{eq:electromagnetic-system}.

By applying the divergence theorem and the vector calculus identity \mbox{$\nabla \cdot (\vec{a} \times \vec{b}) = (\nabla \times \vec{a}) \cdot \vec{b} - \vec{a} \cdot (\nabla \times \vec{b})$}, \eqref{eq:electromagnetic-system} and \eqref{eq:electromagnetic-system-alternate} are combined to yield
\begin{equation}  \label{eq:perturbation-analysis1}
\begin{split}
       \oint_{\partial V} ( \vec{E} \times \acute{\vec{H}} ) \cdot \hat{\vec{n}} \, dS
    &= \int_V \nabla \cdot ( \vec{E} \times \acute{\vec{H}} ) \, dV  \\
    &= \int_V \bigl( ( \nabla \times \vec{E}) \cdot \acute{\vec{H}} - \vec{E} \cdot (\nabla \times \acute{\vec{H}}) \bigr) \, dV  \\
    &= \int_V \bigl( -j \omega \mu \vec{H} \cdot \acute{\vec{H}} - \vec{E} \cdot (\sigma \acute{\vec{E}} + \acute{\vec{J}}_s) \bigr) \, dV .
\end{split}
\end{equation}
Similarly, \eqref{eq:electromagnetic-system-perturbed} and \eqref{eq:electromagnetic-system-alternate} are combined to yield
\begin{equation} \label{eq:perturbation-analysis2}
\begin{split}
     \oint_{\partial V} ( (\vec{E} + \delta\vec{E}) \times \acute{\vec{H}} ) \cdot \hat{\vec{n}} \, dS
  &= \int_V \nabla \cdot ( (\vec{E} + \delta\vec{E}) \times \acute{\vec{H}} ) \, dV  \\
  &= \int_V \bigl( ( \nabla \times (\vec{E} + \delta\vec{E}) ) \cdot \acute{\vec{H}} - (\vec{E} + \delta\vec{E}) \cdot (\nabla \times \acute{\vec{H}}) \bigr) \, dV  \\
  &= \int_V \bigl( -j \omega (\mu + \delta\mu) (\vec{H} + \delta\vec{H}) \cdot \acute{\vec{H}} - (\vec{E} + \delta\vec{E}) \cdot (\sigma \acute{\vec{E}} + \acute{\vec{J}}_s) \bigr) \, dV .
\end{split}
\end{equation}
Subtracting \eqref{eq:perturbation-analysis2} by \eqref{eq:perturbation-analysis1} yields
\begin{equation}  \label{eq:perturbation-analysis3}
    \oint_{\partial V} ( \delta\vec{E} \times \acute{\vec{H}} ) \cdot \hat{\vec{n}} \, dS
    = \int_V \bigl( -j \omega (\mu \delta\vec{H} + \delta\mu \vec{H}) \cdot \acute{\vec{H}} - \delta\vec{E} \cdot (\sigma \acute{\vec{E}} + \acute{\vec{J}}_s) \bigr) \, dV .
\end{equation}
Following a similar approach to expand $\oint_{\partial V} ( \acute{\vec{E}} \times \vec{H} ) \cdot \hat{\vec{n}} dS$ and $\oint_{\partial V} ( \acute{\vec{E}} \times (\vec{H} + \delta\vec{H}) ) \cdot \hat{\vec{n}} dS$, then subtracting the two yields
\begin{equation}  \label{eq:perturbation-analysis4}
    \oint_{\partial V} ( \acute{\vec{E}} \times \delta\vec{H} ) \cdot \hat{\vec{n}} \, dS
    = \int_V \bigl( (-j \omega \mu \acute{\vec{H}} - \acute{\vec{L}}_s) \cdot \delta\vec{H} - \acute{\vec{E}} \cdot (\sigma \delta\vec{E} + \delta\sigma \vec{E}) \bigr) \, dV .
\end{equation}
Finally, subtracting \eqref{eq:perturbation-analysis3} by \eqref{eq:perturbation-analysis4} and rearranging terms leads to
\begin{equation}  \label{eq:perturbation-analysis5}
    \int_V \bigl( \acute{\vec{J}}_s \cdot \delta\vec{E} - \acute{\vec{L}}_s \cdot \delta\vec{H} \bigr) \, dV
    = \int_V \bigl( \vec{E} \cdot \acute{\vec{E}} \delta\sigma - j\omega \vec{H} \cdot \acute{\vec{H}} \delta\mu \bigr) \, dV
    + \oint_{\partial V} \bigl( \acute{\vec{H}} \times \delta\vec{E} + \acute{\vec{E}} \times \delta\vec{H} \bigr) \cdot \hat{\vec{n}} \, dS .
\end{equation}
In \eqref{eq:perturbation-analysis5}, the surface integral term is zero because $\partial V$ extends to infinity in all directions, with $\acute{\vec{H}}$ and $\acute{\vec{E}}$ both approaching zero at infinity.

Now, denote the MFD component measurement of a magnetic sensor as the volume integral of a function $f$ of the electromagnetic fields:
\begin{equation}  \label{eq:measurement}
    B = \int_V f(\vec{E}, \vec{H}) \, dV ,
\end{equation}
where from \eqref{eq:MFD-measurement},
\begin{equation*}
    f(\vec{E}, \vec{H}) = \mu_0 \vec{H} \cdot \hat{\vec{n}}_\text{sensor} \, \delta(\vec{r} - \vec{r}_\text{sensor}) .
\end{equation*}
A perturbation in the measurement results from changes in the electromagnetic fields, expressed as:
\begin{equation}  \label{eq:measurement-perturb-field}
    \delta B
    = \int_V \biggl( \frac{\partial f(\vec{E}, \vec{H})}{\partial \vec{E}} \cdot \delta\vec{E} + \frac{\partial f(\vec{E}, \vec{H})}{\partial \vec{H}} \cdot \delta\vec{H} \biggr) \, dV .
\end{equation}
The measurement perturbation can also be represented as a linear function of perturbations in the electric conductivity and magnetic permeability fields, expressed as:
\begin{equation}  \label{eq:measurement-perturb-material}
    \delta B
    = \int_V \bigl( S_\sigma \delta\sigma + S_\mu \delta\mu \bigr) \, dV ,
\end{equation}
where $S_\sigma$ and $S_\mu$ are the linearization coefficients.
Equating \eqref{eq:measurement-perturb-field} and \eqref{eq:measurement-perturb-material}, then comparing with \eqref{eq:perturbation-analysis5}, it becomes apparent that by selecting
\begin{subequations}
\begin{align}
    \acute{\vec{J}}_s &=  \frac{\partial f(\vec{E}, \vec{H})}{\partial \vec{E}} = \vec{0} ,  \\
    \acute{\vec{L}}_s &= -\frac{\partial f(\vec{E}, \vec{H})}{\partial \vec{H}} = -\mu_0 \hat{\vec{n}}_\text{sensor} \, \delta(\vec{r} - \vec{r}_\text{sensor}) ,
\end{align}
\end{subequations}
the linearization coefficients in \eqref{eq:measurement-perturb-material} are
\begin{equation}
    S_\sigma = \vec{E} \cdot \acute{\vec{E}} , \quad
    S_\mu    = - j\omega \vec{H} \cdot \acute{\vec{H}} .
\end{equation}

\section{Semi-analytical solution for electromagnetic fields induced by inner sources in pipes}
\label{app:electromagnetic-field-pipe-semianalytical}

\cref{fig:pipe-schematic} shows the schematic of an infinity long pipe with excitation on the inner side of the pipe. Region 1 extends from the outer boundary of the coil to the inner radius of the pipe. Only region 2 differs from air; it has an electrical conductivity of $\sigma^{(2)}$ and magnetic permeability of $\mu^{(2)} = \mu_0 \mu_r^{(2)}$, where $\mu_r$ represents the relative magnetic permeability.

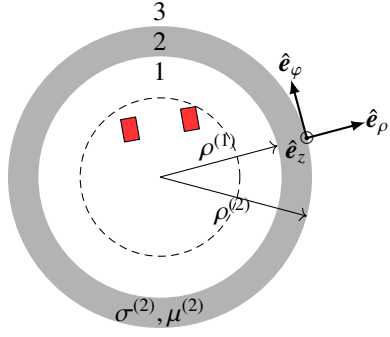
\begin{figure}
    \centering
    \begin{tikzpicture}
    \def\ra{1.6}
    \def\rb{2.0}
    \def\anga{15}
    \def\angb{-15}

    \def\r{2.0}
    \def\ang{15}
    \def\l{0.8}

    \fill[fill=gray!60,even odd rule]  (0,0) circle (\rb)  (0,0) circle (\ra);

    \node at (0,  1.5*\ra-0.5*\rb) {$1$};
    \node at (0,  0.5*\ra+0.5*\rb) {$2$};
    \node at (0, -0.5*\ra+1.5*\rb) {$3$};

    \node at (0, -0.5*\ra-0.5*\rb) {$\sigma^{(2)}, \mu^{(2)}$};

    \draw[->] (0,0) -- node[anchor=south, inner sep=0]{$\rho^{(1)}$} ({\ra*cos(\anga)},{\ra*sin(\anga)});
    \draw[->] (0,0) -- node[anchor=north, inner sep=0]{$\rho^{(2)}$} ({\rb*cos(\angb)},{\rb*sin(\angb)});

    \draw[thick, ->, >=latex] ({\r*cos(\ang)},{\r*sin(\ang)}) -- ({\r*cos(\ang)+\l*cos(\ang)},{\r*sin(\ang)+\l*sin(\ang)}) node[anchor=west, inner sep=0] {$\hat{\vec{e}}_\rho$};
    \draw[thick, ->, >=latex] ({\r*cos(\ang)},{\r*sin(\ang)}) -- ({\r*cos(\ang)-\l*sin(\ang)},{\r*sin(\ang)+\l*cos(\ang)}) node[anchor=south, inner sep=0] {$\hat{\vec{e}}_\varphi$};

    \node at ({\r*cos(\ang)},{\r*sin(\ang)}) {\footnotesize $\odot$};
    \fill ({\r*cos(\ang)},{\r*sin(\ang)}) circle (0.03);
    \node at ({\r*cos(\ang)-0.14},{\r*sin(\ang)-0.18}) {$\hat{\vec{e}}_z$};

    \begin{scope}[yshift=7mm]
        \draw [fill=red!80, rotate around={10:(0,0)}] (-0.5,-0.15) rectangle (-0.3,0.15);
        \draw [fill=red!80, rotate around={10:(0,0)}] ( 0.5,-0.15) rectangle ( 0.3,0.15);
    \end{scope}
    \draw[densely dashed] (0,0) circle (1.05);
\end{tikzpicture}
    \caption{Schematic of a pipe with inner excitation.}
    \label{fig:pipe-schematic}
\end{figure}

Analytical solutions for eddy current systems are often derived using the \ac{SOVP} \cite{RN40}. In cylindrical coordinates, the \ac{SOVP}, denoted as $\vec{W}$ is given by
\begin{equation}
\begin{split}
    \vec{W}
    &= W_a \hat{\vec{e}}_z + \hat{\vec{e}}_z \times \nabla W_b  \\
    &= -\frac{1}{\rho}\frac{\partial W_b}{\partial \varphi} \hat{\vec{e}}_\rho + \frac{\partial W_b}{\partial \rho} \hat{\vec{e}}_\varphi + W_a \hat{\vec{e}}_z ,
\end{split}
\end{equation}
where $W_a$ and $W_b$ are scalar fields governed by:
\begin{subequations} \label{eq:Wa-Wb-governing}
\begin{align}
    \frac{1}{\rho} \frac{\partial}{\partial \rho} \bigg( \rho \frac{\partial W_a}{\partial \rho} \bigg) + \frac{1}{\rho^2} \frac{\partial^2 W_a}{\partial \varphi^2} + \frac{\partial^2 W_a}{\partial z^2} + k^2 W_a &= 0  \label{eq:Wa-governing} , \\
    \frac{1}{\rho} \frac{\partial}{\partial \rho} \bigg( \rho \frac{\partial W_b}{\partial \rho} \bigg) + \frac{1}{\rho^2} \frac{\partial^2 W_b}{\partial \varphi^2} + \frac{\partial^2 W_b}{\partial z^2} + k^2 W_b &= 0  \label{eq:Wb-governing} ,
\end{align}
\end{subequations}
with $k^2 = -j \omega \mu \sigma$.
The magnetic vector potential $\vec{A}$ and \ac{MFD} $\vec{B}$ relate to $W_a$ and $W_b$ as follows:
\begin{subequations}
\begin{align}
    \vec{A} &=
    \biggl(  \frac{1}{\rho} \frac{\partial W_a}{\partial \varphi} - \frac{\partial^2 W_b}{\partial \rho \partial z} \biggr) \, \hat{\vec{e}}_\rho +
    \biggl( -\frac{\partial W_a}{\partial \rho} - \frac{1}{\rho} \frac{\partial^2 W_b}{\partial \varphi \partial z} \biggr) \, \hat{\vec{e}}_\varphi +
    \biggl(  \frac{\partial^2 W_b}{\partial \rho^2} + \frac{1}{\rho} \frac{\partial W_b}{\partial \rho} + \frac{1}{\rho^2} \frac{\partial^2 W_b}{\partial \varphi^2} \biggr) \, \hat{\vec{e}}_z ,  \label{eq:MVP}  \\
    \vec{B} &=
    \biggl( -k^2 \frac{1}{\rho} \frac{\partial W_b}{\partial \varphi} + \frac{\partial^2 W_a}{\partial \rho \partial z} \biggr) \, \hat{\vec{e}}_\rho +
    \biggl(  k^2 \frac{\partial W_b}{\partial \rho} + \frac{1}{\rho} \frac{\partial^2 W_a}{\partial \varphi \partial z} \biggr) \, \hat{\vec{e}}_\varphi +
    \biggl(  k^2 W_a + \frac{\partial^2 W_a}{\partial z^2} \biggr) \, \hat{\vec{e}}_z .  \label{eq:MFD}
\end{align}
\end{subequations}

Since the pipe is infinity long in the z-axis, Fourier transform is applied in this direction:
\begin{subequations}
\begin{align}
    \tilde{W}_a(\rho,\varphi,\kappa) = \int_{-\infty}^\infty W_a(\rho,\varphi,z) \, e^{-j\kappa z} \, dz ,  \label{eq:Wa-ft}  \\
    \tilde{W}_b(\rho,\varphi,\kappa) = \int_{-\infty}^\infty W_b(\rho,\varphi,z) \, e^{-j\kappa z} \, dz .  \label{eq:Wb-ft}
\end{align}
\end{subequations}
For clarity, denote $\tilde{W}_a$ and $\tilde{W}_b$ for each region as $\tilde{W}_a^{(t)}$ and $\tilde{W}_b^{(t)}$, where $t=1,2,3$. With the transformed \ac{SOVP}, \eqref{eq:Wa-Wb-governing} reduces to
\begin{subequations} \label{eq:Wa-Wb-ft-governing}
\begin{align}
    \frac{1}{\rho} \frac{\partial}{\partial \rho} \bigg( \rho \frac{\partial \tilde{W}_a^{(t)}}{\partial \rho} \bigg) + \frac{1}{\rho^2} \frac{\partial^2 \tilde{W}_a^{(t)}}{\partial \varphi^2} &= (\lambda^{(t)})^2 \tilde{W}_a^{(t)}  \label{eq:Wa-ft-governing} ,  \\
    \frac{1}{\rho} \frac{\partial}{\partial \rho} \bigg( \rho \frac{\partial \tilde{W}_b^{(t)}}{\partial \rho} \bigg) + \frac{1}{\rho^2} \frac{\partial^2 \tilde{W}_b^{(t)}}{\partial \varphi^2} &= (\lambda^{(t)})^2 \tilde{W}_b^{(t)}  \label{eq:Wb-ft-governing} ,
\end{align}
\end{subequations}
where $(\lambda^{(t)})^2 = -(k^{(t)})^2 + \kappa^2$.
Using separation of variables, the general solution to \eqref{eq:Wa-Wb-ft-governing} for region 2 is
\begin{subequations}   \label{eq:Wa-Wb-ft-2-sol}
\begin{align}
    \tilde{W}_a^{(2)} &= \sum_{\nu=-\infty}^\infty \left[ C_a^{(2)} I_\nu(\lambda^{(2)} \rho) + D_a^{(2)} K_\nu(\lambda^{(2)} \rho) \right] e^{j \nu \varphi} ,   \\
    \tilde{W}_b^{(2)} &= \sum_{\nu=-\infty}^\infty \left[ C_b^{(2)} I_\nu(\lambda^{(2)} \rho) + D_b^{(2)} K_\nu(\lambda^{(2)} \rho) \right] e^{j \nu \varphi} ,
\end{align}
\end{subequations}
where $(\lambda^{(2)})^2 = j \omega \mu^{(2)} \sigma^{(2)} + \kappa^2$, and $I_\nu(\cdot)$ and $K_\nu(\cdot)$ are the modified Bessel function of the first and second kind of order $\nu$, respectively.

In regions 1 and 3, the electrical conductivity is zero, which results in $k^2 = -j \omega \mu \sigma = 0$. Since $W_b$ is always associated with a $k^2$ term in \eqref{eq:MFD}, $W_b$ can be set to zero.
With region 3 extending to $\rho=\infty$, the solution to \eqref{eq:Wa-ft-governing} for region 3 is
\begin{equation}   \label{eq:Wa-ft-3-sol}
    \tilde{W}_a^{(3)} = \sum_{\nu=-\infty}^\infty D^{(3)} K_\nu(|\kappa| \rho) \, e^{j \nu \varphi} .
\end{equation}
In region 1, the solution to \eqref{eq:Wa-ft-governing} is
\begin{equation}   \label{eq:Wa-ft-1-sol}
    \tilde{W}_a^{(1)} = \sum_{\nu=-\infty}^\infty \left[ C^{(ec)} I_\nu(|\kappa| \rho) + D^{(s)} K_\nu(|\kappa| \rho) \right] e^{j \nu \varphi} .
\end{equation}
The terms $C^{(ec)} I_\nu(|\kappa| \rho) e^{j \nu \varphi}$ are attributed to eddy current because the terms decreases with decreasing $\rho$. The remaining terms $D^{(s)} K_\nu(|\kappa| \rho) e^{j \nu \varphi}$ are attributed to coil current.

Taking the Fourier transform of \eqref{eq:MVP} and \eqref{eq:MFD} in the z-direction results in
\begin{subequations} \label{eq:fields-sol}
\begin{align}
    \tilde{\vec{A}} &=
    \biggl(  \frac{1}{\rho} \frac{\partial \tilde{W}_a}{\partial \varphi} - j\kappa \frac{\partial \tilde{W}_b}{\partial \rho} \biggr) \, \hat{\vec{e}}_\rho +
    \biggl( -\frac{\partial \tilde{W}_a}{\partial \rho} - j\kappa \frac{1}{\rho} \frac{\partial \tilde{W}_b}{\partial \varphi} \biggr) \, \hat{\vec{e}}_\varphi +
    \biggl(  \frac{\partial^2 \tilde{W}_b}{\partial \rho^2} + \frac{1}{\rho} \frac{\partial \tilde{W}_b}{\partial \rho} + \frac{1}{\rho^2} \frac{\partial^2 \tilde{W}_b}{\partial \varphi^2} \biggr) \, \hat{\vec{e}}_z ,  \\
    \tilde{\vec{B}} &=
    \biggl( -k^2 \frac{1}{\rho} \frac{\partial \tilde{W}_b}{\partial \varphi} + j\kappa \frac{\partial \tilde{W}_a}{\partial \rho} \biggr) \, \hat{\vec{e}}_\rho +
    \biggl(  k^2 \frac{\partial \tilde{W}_b}{\partial \rho} + j\kappa \frac{1}{\rho} \frac{\partial \tilde{W}_a}{\partial \varphi} \biggr) \, \hat{\vec{e}}_\varphi+
    \biggl( -\lambda^2 \tilde{W}_a \biggr) \, \hat{\vec{e}}_z \label{eq:MFD-ft-sol} .
\end{align}
\end{subequations}
Using the relation $\tilde{\vec{E}} = -j \omega \tilde{\vec{A}}$ and the constitutive equation $\tilde{\vec{H}} = \frac{1}{\mu} \tilde{\vec{B}}$, the $\vec{E}$ and $\vec{H}$ fields are obtained from the inverse Fourier transform:
\begin{subequations}  \label{eq:E-ift}
\begin{align}
    \vec{E}(\rho,\varphi,z) &= \frac{1}{2\pi} \int_{-\infty}^\infty     -j \omega \tilde{\vec{A}}(\rho,\varphi,\kappa) \, e^{j\kappa z} \, d\kappa ,  \\
    \vec{H}(\rho,\varphi,z) &= \frac{1}{2\pi} \int_{-\infty}^\infty \frac{1}{\mu} \tilde{\vec{B}}(\rho,\varphi,\kappa) \, e^{j\kappa z} \, d\kappa .
\end{align}
\end{subequations}

To determine the coefficients $C^{(ec)}$, $D^{(s)}$, $C_a^{(2)}$, $D_a^{(2)}$, $C_b^{(2)}$, $D_b^{(2)}$, $D^{(3)}$ in \eqref{eq:Wa-Wb-ft-2-sol}, \eqref{eq:Wa-ft-3-sol}, and \eqref{eq:Wa-ft-1-sol}, proceed as follows:

\paragraph*{Solve $D^{(s)}$ for an arbitrary coil}
For any arbitrary coil with coil current density $\vec{J}_s$, its contributing \ac{MFD} in free space can be calculated using the Biot-Savart law. Retaining the $\rho$ component and take the Fourier transform in the z -irection yields
\begin{equation} \label{eq:Bs-ft}
    \tilde{B}_\rho^{(s)}(\rho,\varphi,\kappa) = \int_{-\infty}^\infty B_\rho^{(s)}(\rho,\varphi,z) \, e^{-j\kappa z} \, dz .
\end{equation}
Based on \eqref{eq:MFD-ft-sol} and \eqref{eq:Wa-ft-1-sol}, $\tilde{B}_\rho^{(s)}$ can be expressed as
\begin{equation} \label{eq:Bs-ft-2}
    \tilde{B}_\rho^{(s)}(\rho,\varphi,\kappa)
    = j\kappa \frac{\partial}{\partial \rho} \biggl( \sum_{\nu=-\infty}^\infty D^{(s)} K_\nu(|\kappa| \rho) \, e^{j \nu \varphi} \biggr)
    = \sum_{\nu=-\infty}^\infty j\kappa D^{(s)} |\kappa| \, K'_\nu(|\kappa| \rho) \, e^{j \nu \varphi}  .
\end{equation}
Evaluating \eqref{eq:Bs-ft-2} at $\rho=\rho^{(1)}$ and utilizing the orthogonality of the exponential functions, $D^{(s)}$ can be computed from
\begin{equation} \label{eq:Ds-arbitrary-coil}
    D^{(s)} = \frac{1}{j\kappa |\kappa| \, K'_\nu(|\kappa| \rho^{(1)})} \frac{1}{2\pi} \int_{-\pi}^{\pi} \tilde{B}_\rho^{(s)}(\rho^{(1)},\varphi,\kappa) \, e^{-j \nu \varphi} \, d\varphi   .
\end{equation}

\paragraph*{Solve $D^{(s)}$ corresponding to a magnetic sensor}
For a magnetic sensor with position $\vec{r}_\text{sensor}$ at $(\rho,\varphi,z) = (\rho_\text{s},\varphi_\text{s},z_\text{s})$ and sensing axis $\hat{\vec{n}}_\text{sensor} = n_{\text{s}\rho} \hat{\vec{e}}_\rho + n_{\text{s}\varphi} \hat{\vec{e}}_\varphi + n_{\text{s}z} \hat{\vec{e}}_z$, the electromagnetic system described in \eqref{eq:electromagnetic-system2} needs to be solved. It can be shown that the corresponding $D^{(s)}$ coefficient is
\begin{equation} \label{eq:Ds-magnetic-sensor}
    D^{(s)} =
    \frac{j \mu_0}{\pi^3 \omega} \Bigl( j |\kappa| \, I'_\nu(|\kappa| \rho_\text{s}) \, n_{\text{s}\rho} + \frac{\nu}{\kappa} \frac{1}{\rho_\text{s}} I_\nu(|\kappa| \rho_\text{s}) \, n_{\text{s}\varphi} + I_\nu(|\kappa| \rho_\text{s}) \, n_{\text{s}z} \Bigr) \, e^{-j\nu \varphi_\text{s} -j\kappa z_\text{s}} .
\end{equation}

\paragraph*{Solve $C^{(ec)}$, $C_a^{(2)}$, $D_a^{(2)}$, $C_b^{(2)}$, $D_b^{(2)}$, $D^{(3)}$ from interface conditions}
Continuity of the normal component of the \ac{MFD} and the tangential components of the magnetic field intensity across any interface provides conditions to solve for the unknown coefficients. For the interfaces $\rho = \rho^{(1)}$ and $\rho = \rho^{(2)}$,
\begin{equation*} \label{eq:interface-conditions1}
\begin{array}{r}
    \tilde{\vec{B}}^{(1)} \cdot \hat{\vec{e}}_\rho    =                       \tilde{\vec{B}}^{(2)} \cdot \hat{\vec{e}}_\rho     \big\vert_{\rho=\rho^{(1)}} ,  \\
    \tilde{\vec{B}}^{(1)} \cdot \hat{\vec{e}}_\varphi = \frac{1}{\mu_r^{(2)}} \tilde{\vec{B}}^{(2)} \cdot \hat{\vec{e}}_\varphi  \big\vert_{\rho=\rho^{(1)}} ,  \\
    \tilde{\vec{B}}^{(1)} \cdot \hat{\vec{e}}_z       = \frac{1}{\mu_r^{(2)}} \tilde{\vec{B}}^{(2)} \cdot \hat{\vec{e}}_z        \big\vert_{\rho=\rho^{(1)}} ,
\end{array} \quad
\begin{array}{r}
    \tilde{\vec{B}}^{(2)} \cdot \hat{\vec{e}}_\rho                          = \tilde{\vec{B}}^{(3)} \cdot \hat{\vec{e}}_\rho     \big\vert_{\rho=\rho^{(2)}} ,  \\
    \frac{1}{\mu_r^{(2)}} \tilde{\vec{B}}^{(2)} \cdot \hat{\vec{e}}_\varphi = \tilde{\vec{B}}^{(3)} \cdot \hat{\vec{e}}_\varphi  \big\vert_{\rho=\rho^{(2)}} ,  \\
    \frac{1}{\mu_r^{(2)}} \tilde{\vec{B}}^{(2)} \cdot \hat{\vec{e}}_z       = \tilde{\vec{B}}^{(3)} \cdot \hat{\vec{e}}_z        \big\vert_{\rho=\rho^{(2)}} .
\end{array}
\end{equation*}
By substituting in \eqref{eq:MFD-ft-sol} then \eqref{eq:Wa-Wb-ft-2-sol}, \eqref{eq:Wa-ft-3-sol}, and \eqref{eq:Wa-ft-1-sol}, the unknown coefficients $C^{(ec)}$, $C_a^{(2)}$, $D_a^{(2)}$, $C_b^{(2)}$, $D_b^{(2)}$, $D^{(3)}$ are related to $D^{(s)}$ as follows:
\begin{equation} \label{eq:coef}
    \mat{\Lambda}
    \begin{bmatrix}
        C^{(ec)} \\ C_a^{(2)} \\ D_a^{(2)} \\ C_b^{(2)} \\ D_b^{(2)} \\ D^{(3)}
    \end{bmatrix}
    =
    \begin{bmatrix}
        j\kappa |\kappa| K'_\nu(|\kappa| \rho^{(1)}) \\
        -\kappa \nu \frac{1}{\rho^{(1)}} K_\nu(|\kappa| \rho^{(1)}) \\
        -\kappa^2 K_\nu(|\kappa| \rho^{(1)}) \\
        0 \\ 0 \\ 0
    \end{bmatrix}
    D^{(s)} ,
\end{equation}
where
\begin{equation*}
    \mat{\Lambda} =
    {\small
    \begin{bmatrix}
        -j\kappa |\kappa| I'_\nu(|\kappa| \rho^{(1)})               &  j\kappa \lambda^{(2)} I'_\nu(\lambda^{(2)} \rho^{(1)})                        &  j\kappa \lambda^{(2)} K'_\nu(\lambda^{(2)} \rho^{(1)})                        &  -j\nu (k^{(2)})^2 \frac{1}{\rho^{(1)}} I_\nu(\lambda^{(2)} \rho^{(1)})            &  -j\nu (k^{(2)})^2 \frac{1}{\rho^{(1)}} K_\nu(\lambda^{(2)} \rho^{(1)})            &  0                                                           \\
        \kappa \nu \frac{1}{\rho^{(1)}} I_\nu(|\kappa| \rho^{(1)})  &  -\kappa \nu \frac{1}{\mu_r^{(2)} \rho^{(1)}} I_\nu(\lambda^{(2)} \rho^{(1)})  &  -\kappa \nu \frac{1}{\mu_r^{(2)} \rho^{(1)}} K_\nu(\lambda^{(2)} \rho^{(1)})  &  (k^{(2)})^2 \lambda^{(2)} \frac{1}{\mu_r^{(2)}} I'_\nu(\lambda^{(2)} \rho^{(1)})  &  (k^{(2)})^2 \lambda^{(2)} \frac{1}{\mu_r^{(2)}} K'_\nu(\lambda^{(2)} \rho^{(1)})  &  0                                                           \\
        \kappa^2 I_\nu(|\kappa| \rho^{(1)})                         &  -(\lambda^{(2)})^2 \frac{1}{\mu_r^{(2)}} I_\nu(\lambda^{(2)} \rho^{(1)})      &  -(\lambda^{(2)})^2 \frac{1}{\mu_r^{(2)}} K_\nu(\lambda^{(2)} \rho^{(1)})      &  0                                                                                 &  0                                                                                 &  0                                                           \\
        0                                                           &  j\kappa \lambda^{(2)} I'_\nu(\lambda^{(2)} \rho^{(2)})                        &  j\kappa \lambda^{(2)} K'_\nu(\lambda^{(2)} \rho^{(2)})                        &  -j\nu (k^{(2)})^2 \frac{1}{\rho^{(2)}} I_\nu(\lambda^{(2)} \rho^{(2)})            &  -j\nu (k^{(2)})^2 \frac{1}{\rho^{(2)}} K_\nu(\lambda^{(2)} \rho^{(2)})            &  -j\kappa |\kappa| K'_\nu(|\kappa| \rho^{(2)})               \\
        0                                                           &  -\kappa \nu \frac{1}{\mu_r^{(2)} \rho^{(2)}} I_\nu(\lambda^{(2)} \rho^{(2)})  &  -\kappa \nu \frac{1}{\mu_r^{(2)} \rho^{(2)}} K_\nu(\lambda^{(2)} \rho^{(2)})  &  (k^{(2)})^2 \lambda^{(2)} \frac{1}{\mu_r^{(2)}} I'_\nu(\lambda^{(2)} \rho^{(2)})  &  (k^{(2)})^2 \lambda^{(2)} \frac{1}{\mu_r^{(2)}} K'_\nu(\lambda^{(2)} \rho^{(2)})  &  \kappa \nu \frac{1}{\rho^{(2)}} K_\nu(|\kappa| \rho^{(2)})  \\
        0                                                           &  -(\lambda^{(2)})^2 \frac{1}{\mu_r^{(2)}} I_\nu(\lambda^{(2)} \rho^{(2)})      &  -(\lambda^{(2)})^2 \frac{1}{\mu_r^{(2)}} K_\nu(\lambda^{(2)} \rho^{(2)})      &  0                                                                                 &  0                                                                                 &  -\kappa^2 K_\nu(|\kappa| \rho^{(2)})                        \\
    \end{bmatrix}} .
\end{equation*}

\bibliographystyle{elsarticle-num}
\bibliography{contents/references}

\end{document}